\theoremstyle{remark}
\newtheorem{remark}{Remark}
\theoremstyle{definition}
\newtheorem{thm}{Theorem}
\newtheorem{lem}[thm]{Lemma}
\newtheorem{definition}{Definition}
\newtheorem{assp}{Assumption}
\providecommand{\keywords}[1]
{
  \small	
  \textbf{Keywords:} #1
}
\DeclareMathOperator*{\argmax}{arg\,max}
\newcommand{\Real}{\mathbb{R}}
\newcommand{\Natural}{\mathbb{N}}
\newcommand{\Regret}{\mathrm{Reg}}
\newcommand{\URegret}{\text{U2S-Reg}}
\newcommand{\CRegret}{\text{C2S-Reg}}
\newcommand{\regret}{\mathrm{reg}} %
\newcommand{\Subsid}{\mathrm{Sub}}
\newcommand{\Var}{\mathrm{Var}}
\newcommand{\Ex}{\mathbb{E}}
\newcommand{\Prob}{\mathbb{P}}
\newcommand{\eps}{\epsilon}
\newcommand{\del}{\delta}
\newcommand{\Normal}{\mathcal{N}}
\newcommand{\lambdamin}{\lambda_{\mathrm{min}}}
\newcommand{\Ind}{\textbf{1}}
\newcommand{\tilO}{\tilde{O}}
\newcommand{\tilOmega}{\tilde{\Omega}}
\newcommand{\tilTheta}{\tilde{\Theta}}
\newcommand{\sui}{s^{\text{U-I}}}
\newcommand{\tiliota}{\tilde{\iota}}
\newcommand{\sgn}{\mathrm{sgn}}
\newcommand{\nn}{\nonumber\\}
\newcommand{\Nzero}{N^{(0)}}
\newcommand{\ahyb}{a} %
\newcommand{\ahybconst}{\alpha} %
\newcommand{\Uimp}{\tilde{q}^{\mathrm{H}}}
\newcommand{\shi}{s^{\text{H-I}}}
\newcommand{\ih}{\iota^{\text{H}}}
\newcommand{\norm}[1]{\left\lVert#1\right\rVert} %
\newcommand{\EA}{\mathcal{A}}
\newcommand{\EB}{\mathcal{B}}
\newcommand{\EC}{\mathcal{C}}
\newcommand{\EJ}{\mathcal{J}}
\newcommand{\EM}{\mathcal{M}}
\newcommand{\EP}{\mathcal{P}}
\newcommand{\EQ}{\mathcal{Q}}
\newcommand{\ER}{\mathcal{R}}
\newcommand{\ES}{\mathcal{S}}
\newcommand{\EU}{\mathcal{U}}
\newcommand{\EV}{\mathcal{V}}
\newcommand{\EW}{\mathcal{W}}
\newcommand{\EX}{\mathcal{X}}
\newcommand{\bx}{\bm{x}}
\newcommand{\bbe}{\bm{e}} %
\newcommand{\bv}{\bm{v}}
\newcommand{\bmu}{\bm{\mu}}
\newcommand{\btheta}{\bm{\theta}}
\newcommand{\hbtheta}{\hat{\bm{\theta}}}
\newcommand{\tbtheta}{\tilde{\bm{\theta}}}
\newcommand{\bbtheta}{\bar{\bm{\theta}}}
\newcommand{\htheta}{\hat{\theta}}
\newcommand{\bI}{\bm{I}}
\newcommand{\bA}{\bm{A}}
\newcommand{\bB}{\bm{B}}
\newcommand{\bV}{\bm{V}}
\newcommand{\bX}{\bm{X}}
\newcommand{\bZero}{\bm{0}}
\newcommand{\Simd}{5}
\newcommand{\Simtheta}{(1,1,1,1,1)}
\newcommand{\Simmux}{1.5}
\newcommand{\Simsigmax}{1}
\newcommand{\Simsigmaeps}{0.5}
\newcommand{\Simlambda}{1}
\newcommand{\Simdelta}{0.1}
\newcommand{\Sima}{0.5}
\newcommand{\Simrooneystop}{50} %
\newcommand{\Numrun}{4,000}
\newcommand{\Simsigmaeta}{6} %
\newcommand{\LCont}[1]{L_{#1}}
\newcommand{\LContRaw}{L} %
\begin{document}

\title{On Statistical Discrimination as a Failure of Social Learning: A Multi-Armed Bandit Approach
}

\author{Junpei Komiyama \and Shunya Noda\thanks{Komiyama: Leonard N. Stern School of Business, New York University. E-mail: \href{mailto:junpei.komiyama@gmail.com}{junpei.komiyama@gmail.com}. Noda: Graduate School of Economics, University of Tokyo and Vancouver School of Economics, University of British Columbia. E-mail: \href{mailto:shunya.noda@gmail.com}{shunya.noda@gmail.com}. Noda gratefully acknowledges the financial support from the Social Sciences and Humanities Research Council of Canada. We are grateful to Itai Ashlagi, Tomohiro Hara, Yoko Okuyama, Masayuki Yagasaki, and the seminar participants at Happy Hour Seminar, the University of British Columbia, Tokyo Keizai University, the University of Tokyo, the AFCI Workshop in NeurIPS2020, the AI4SG Workshop in IJCAI 2020, the University of Texas at Austin, CyberAgent, Inc., WEAI International Conference 2021, JEA Spring Meeting 2021, CEA Annual Meeting 2021, and NASMES 2021 for their helpful comments. All remaining errors are our own.}}

\date{First Draft: October 2, 2020\hspace{2em} Current Version: \today}

\maketitle

\begin{abstract}
    We analyze statistical discrimination in hiring markets using a multi-armed bandit model. Myopic firms face workers arriving with heterogeneous observable characteristics. The association between the worker's skill and characteristics is unknown ex ante; thus, firms need to learn it. Laissez-faire causes perpetual underestimation: minority workers are rarely hired, and therefore, the underestimation tends to persist. Even a marginal imbalance in the population ratio frequently results in perpetual underestimation. We propose two policy solutions: a novel subsidy rule (the hybrid mechanism) and the Rooney Rule. Our results indicate that temporary affirmative actions effectively alleviate discrimination stemming from insufficient data.
\end{abstract}

\keywords{Statistical Discrimination; Social Learning; Affirmative Action; Multi-Armed Bandit; Rooney Rule}

\vspace{3em}

\pagebreak

\section{Introduction}\label{sec: introduction}

\emph{Statistical discrimination} refers to discrimination against minority people, taken by fully rational and non-prejudiced agents. 
Previous studies have shown that, even in the absence of prejudice, discrimination can occur persistently because of various reasons, including the discouragement of human capital investment \citep{arrow1973theory,Foster1992AnEA,coate1993will,moro_general_2004}, information friction \citep{phelps1972statistical,cornell1996culture,Bardhi2019Spiraling}, and search friction \citep{mailath2000endogenous,Che2019RatingsGuided}. The literature has proposed various affirmative-action policies to solve statistical discrimination, with many having been implemented in practice.

This paper demonstrates that statistical discrimination may appear as a failure of social learning. We endogenize the evolution of biased beliefs and analyze their consequences. Our model assumes that (i) all firms (decision-makers) are fully rational and non-prejudiced (i.e., attempt to hire the most productive worker), and (ii) all workers are ex ante symmetric. In such an environment, an unbiased decision policy---hiring workers with superior skills---satisfies numerous fairness notions articulated in scholarly literature, including equalized odds and demographic parity. It also achieves efficiency by maximizing each firm's payoff. However, the long-term persistence of biased beliefs could still occur. This paper underscores that \emph{temporary} affirmative actions can effectively enhance both welfare and equality.

Although our model applies more broadly, we use the terminology of hiring markets to describe our model.
We develop a \emph{multi-armed bandit} model of social learning, in which many myopic and short-lived firms sequentially make hiring decisions. In each round, a firm hires one worker from a set of candidates. Each firm's utility is determined by the hired worker's \emph{skill}, which cannot be observed directly until employment. However, as in the standard statistical discrimination model, each worker also has \emph{observable characteristics} associated with their unobservable skills. Firms learn the statistical association between characteristics and skills using data pertaining to past hiring cases (shared through, e.g., private communication, social media, and recommendation letters) and use the estimators to predict the skills of candidates.

Each worker belongs to a \emph{group} that represents, for example, their gender, race, and ethnicity. We assume that the characteristics of workers who belong to different groups should be interpreted differently. This assumption is realistic. First, previous studies have revealed that underrepresented groups receive unfairly low evaluations.\footnote{For instance, \cite{trix2003exploring} analyze letters of recommendation for medical faculty, finding systematic differences between those written for female and male applicants. \cite{hanna2012discrimination} postulate that students belonging to lower castes in India tend to receive unjustifiably lower exam scores. In the context of teaching evaluations, \cite{macnell2015s} and \cite{mitchell2018gender} illustrate that students rate male identities significantly higher than female ones. In a study of online freelance marketplaces, \cite{hannak2017bias} establish that gender and race significantly correlate with worker evaluations.}
When these evaluations are used as the observable characteristics, firms should be aware of the potential bias. 
Second, evaluations may reflect differences in cultures, living environments and social systems \citep{precht1998cross,al2004get}. For instance, firms need to be conversant with the norms of drafting recommendation letters to interpret them accurately. Therefore, observable characteristics, such as curriculum vitae, exam scores, grading reports, recommendation letters, and so forth, might convey starkly different implications despite their similar presentations. If firms are unbiased and cognizant of these potential biases, they should adapt their interpretation methods for these characteristics, applying varied statistical models to different groups.

When firms learn the statistical association from data, with some probability, the minority group is underestimated because of a large estimation error raised by insufficient data.
Once the minority group is underestimated, it is difficult for a minority worker to appear to be the best candidate---even if he has the greatest skill among the candidates, the firm often dismisses this fact and tends to hire a majority worker.
As long as firms only hire majority workers, society cannot learn about the minority group; thus, the imbalance persists even in the long run. We call this phenomenon \emph{perpetual underestimation}.

We use a linear contextual bandit model to analyze the consequence of social learning. To gauge policy performance, we utilize \emph{regret}, a widely adopted measure in machine-learning literature that assesses welfare loss relative to the optimal decision rule. Regret arises if proficient minority workers are overlooked due to biased estimates by firms; hence, regret not only signifies efficiency but also encapsulates fairness.\footnote{In Appendix~\ref{sec: fairness criteria}, we formally prove that a decision rule has sublinear regret only if it aligns with equalized odds. This notion stipulates that society's hiring policy performs equitably across groups. Moreover, with symmetric groups, sublinear regret also harmonizes with demographic parity, which ensures hiring decisions are irrespective of membership in a minority group.}

We focus on how regret grows as the total number of firms (denoted by $N$) increases. When regret is sublinear in $N$, firms make fair and efficient decisions in the long run.
We first analyze the equilibrium consequence of \emph{laissez-faire} (no policy intervention). When the groups are ex ante symmetric and the population ratio is equal, laissez-faire results in $\tilO(\sqrt{N})$ regret.\footnote{$\tilO, \tilOmega$, and $\tilTheta$ are a Landau notations that ignore polylogarithmic factors. We often treat polylogarithmic factors as if they were constant because these factors grow very slowly ($o(N^\epsilon)$ for any exponent $\epsilon > 0$).} However, when the population ratio is unbalanced, this no longer holds, and expected regret is linear: $\tilOmega(N)$.

We study two policy interventions toward fair and efficient social learning. The first policy is a subsidy rule, based on the idea of \emph{upper confidence bound} (UCB). UCB is an effective solution for balancing exploration and exploitation \citep{lai1985,auer2002}. By incentivizing firms to take actions that are consistent with the recommendations of the UCB, social learning can promote sublinear regret in the long run. 
The subsidy is adjusted to the degree of information externality. We demonstrate that the UCB mechanism has the expected regret of $\tilO(\sqrt{N})$. The subsidy required to implement the UCB mechanism is also $\tilO(\sqrt{N})$.

Improving the UCB mechanism, this paper proposes a \emph{hybrid} mechanism, which lifts affirmative actions upon the collection of a sufficiently rich data set. The hybrid mechanism takes advantage of spontaneous exploration: Once firms obtain a certain amount of data, the diversity of workers' characteristics naturally promotes learning about the minority group. The hybrid mechanism achieves $\tilO(\sqrt{N})$ regret with $\tilO(1)$ subsidy.

The second policy is the Rooney Rule, which requires each firm to interview at least one minority candidate as a finalist for each job opening. We analyze the effect of the Rooney Rule using a two-stage model in which firms observe additional signals of each finalist. 
The Rooney Rule enables minority workers to reveal the additional signal to the firm, which leaves a chance of breaking down the underestimation. However, our assessment of the Rooney Rule is mixed. The imposed interviewing quota could unjustly deprive skilled majority workers of employment opportunities, suggesting reverse discrimination. This drawback is lessened if the Rooney Rule is implemented temporarily.

This paper is framed as a positive analysis elucidating how discrimination arises from social learning conducted by small, rational, and unbiased firms. Alternatively, our study could be viewed as a normative analysis showcasing an efficient hiring policy targeting the long-term average skill of workers hired by a large firm \citep[as explored by][]{Bergman2020}. For this latter scenario, our results for the hybrid mechanism indicate that a firm can cease affirmative action once it has accumulated reasonably comprehensive information about minority workers.

\section{Related Literature}\label{sec: related literature}

\paragraph{Statistical Discrimination} 

Various studies have analyzed statistical discrimination both theoretically \citep{phelps1972statistical,arrow1973theory,Foster1992AnEA,coate1993will,cornell1996culture,mailath2000endogenous} and experimentally \citep[][is an excellent survey]{neumark2018}. 
We contribute to this literature by articulating a new channel of discrimination: endogenous data imbalance and insufficiency. Similar to previous studies, we assume otherwise ex ante identical individuals from different groups to demonstrate how discrimination evolves and persists. Meanwhile, our results provide further indication that demographic minorities suffer from discrimination as an inevitable consequence of laissez-faire.

\cite{Hu2018Short} examines a dynamic reputation model in a labor market, where workers can endogenously select their skill level. As highlighted by \cite{Foster1992AnEA} and \cite{coate1993will}, statistical discrimination can potentially discourage minorities from enhancing their skills. Implementing a fairness constraint through affirmative action at the entry-level may rectify inequalities within the entire labor market. Our study adds to this body of literature by demonstrating that short-term affirmative action successfully tackles inefficiency and inequality, even when the skill level is fixed.

\cite{Kannan2019Downstream} study how a college can design an admission and grading policy to achieve fair employment, assuming employers form a Bayesian belief about students' skills based on the information provided by the college. We also consider a government that introduces an affirmative-action policy taking into account stakeholders' (firms') endogenous response. \cite{Che2019RatingsGuided} examine a rating-guided market, demonstrating that feedback loops can cause discriminatory inferences concerning social groups. We identify endogenously created informational disparities due to feedback loops within a distinct model inspired by a hiring market, deliberate on the underlying causes (demographic imbalance) that instigate discrimination, and propose policy solutions.

\cite{Bohren2019inaccurate,bohren2019dynamics} and \cite{monachou2019discrimination} have demonstrated how misspecified beliefs about groups generate discrimination. Thus far, this literature has attributed belief misspecification to psychological biases and bounded rationality. In contrast, we demonstrate that misspecified beliefs may evolve and persist endogenously, even in the long run. Through a laboratory experiment, \cite{dianat2020lab} reveal that affirmative action's impact becomes fleeting if the measure is discontinued before beliefs undergo transformation. Our hybrid mechanism offers a resolution by optimally choosing the timing to terminate the program, thereby preventing the persistence of underestimation.

\paragraph{Social Learning}
The economics literature has extensively studied herding, information cascade, and social learning \citep[e.g.,][]{bikhchandani1992theory,banerjee1992simple,smith2000pathological}. Additionally, various papers have studied improvements to social welfare through subsidy for exploration  \citep[e.g.,][]{frazier2014incentivizing,kannan2017fairness} and selective information disclosure \citep[e.g.,][]{kremer2014implementing,papanastasiou2018crowdsourcing,Immorlica2020incentivizing,mansour2020bayesian}. We propose novel policy interventions to improve social learning in fairness and efficiency.

\paragraph{Multi-armed Bandit} 
A multi-armed bandit problem stems from the literature of statistics \citep{thompson1933,robbins1952}. This problem is driven by the question of how a single long-lived decision-maker can maximize his payoff by balancing exploration and exploitation. More recently, the machine-learning community has proposed the contextual bandit framework, in which payoffs associated with ``arms'' (actions) depend not only on the hidden state but also on additional information, referred to as ``contexts'' \citep{abe1999,langford2007}. 
We adopt the contextual bandit framework because context enables us to capture the diversity of worker characteristics.\footnote{The trade-off between exploration and exploitation presents itself in a wider context. For example, \cite{Owen2020Tie} propose ``tie-breaker designs'' which are hybrids of randomized controlled trials and regression discontinuity designs, and solve the optimal tradeoff between information gain (exploration) and efficiency in the treatment allocation (exploitation).}

Several previous studies have considered a linear contextual bandit problem and studied the performance of a ``greedy'' algorithm, which makes decisions myopically in accordance with the current information. Because firms take greedy actions under laissez-faire, their results are also relevant to our model. \citet{Bastani17} and \citet{kannan2018} have shown that a greedy algorithm leads to sublinear regret in the long run, if the contexts are diverse enough.\footnote{Our simulation, included in Appendix~\ref{subsec: hybrid vs uniform sampling}, shows that our hybrid mechanism can be interpreted as an efficient approach to collecting initial samples.} We characterize the relationship between the diversity of contexts and the rate of learning. Moreover, we show that the population ratio is crucial to the regret rate (Section~\ref{sec: LF unbalanced group}). As an efficient intervention, \citet{kannan2017fairness} consider a contextually fair UCB-based subsidy rule. Although our subsidy policy also originates from the idea of UCB (Section~\ref{sec: subsidy scheme}), we establish a novel mechanism (the hybrid mechanism, Section~\ref{sec: Hybrid Decision Rule}) that reduces budget expenditure by utilizing spontaneous exploration.

The multi-armed bandit approach has recently found applications in labor market analyses. \cite{Bardhi2019Spiraling} demonstrate that a minor difference in initial beliefs about each worker's type can ultimately yield a substantial disparity in workers' payoffs. \cite{Johari2018Exploration} examine how a labor platform can discern workers' skills to attain an optimal worker assignment when the platform can only observe the outcomes generated by teams, not individual workers.

\cite{Bergman2020} portray a large firm's hiring process as a multi-armed bandit problem and empirically compare the performance of the status quo (screening via manual work), a greedy policy, and a UCB method. They reveal that a UCB method not only screens job applicants efficiently but also preserves diversity. Their findings suggest that a UCB method is both fairer and more efficient when implemented by a large firm. Interpreting this paper as a study of an efficient hiring policy by a large firm, our results enhance \cite{Bergman2020} by providing theoretical foundations that outline the performances of a greedy policy (corresponding to laissez-faire) and a UCB method. Moreover, we illustrate that affirmative action can be discontinued shortly by characterizing the performance of a hybrid mechanism.

\paragraph{Algorithmic Fairness}
The literature on algorithmic fairness is growing. 
This literature has implicitly assumed exogenous asymmetry in worker skills and pursued the approaches to correct between-group inequality. To this end, ``discrimination-aware'' constraints such as equalized odds \citep{HardtPNS16} and demographic parity \citep{PedreschiRT08,CaldersV10} have been proposed, with several papers applying these constraints in the context of multi-armed bandit problems \citep{joseph_fairness_2016} or more general sequential learning \citep{RaghavanSVW18,BechavodL0WW19,ChenAAMAS2020}. While these fairness goals are conflicting in general, we analyze an environment in which many fairness goals are aligned and demonstrate how affirmative action improves them.

\paragraph{Rooney Rule}
The Rooney Rule was originally introduced in the context of the hiring of National Football League senior staff \citep{Eddo-Lodge2017Rooney}. While it is widely used in practice, theoretical analyses of the Rooney Rule are scarce. \citet{DBLP:conf/innovations/KleinbergR18} show that, when a recruiter is unconsciously biased against a group, the Rooney Rule not only improves the representation of that group but also leads to a higher payoff for the recruiter. To the best of our knowledge, this study (Section \ref{sec: Rooney Rule}) constitutes the first attempt to demonstrate the advantage of the Rooney Rule by modeling unbiased agents.

\section{Model}\label{sec: model}

\paragraph{Basic Setting}
We develop a linear contextual bandit problem with myopic agents (firms). We consider a situation where $N$ firms (indexed by $n = 1, \ldots, N$) sequentially hire one worker for each.\footnote{While real-world firms are long-lived and hire multiple workers, the number of workers hired by one firm is typically much smaller than the total number of workers hired in a hiring market. Accordingly, even if we allowed firms to hire multiple (but a small number of) workers, the conclusion would not change qualitatively. Note also that various seminal papers within the social learning literature \citep[such as][]{banerjee1992simple,bikhchandani1992theory,smith2000pathological} have made the same assumption.} In each round $n$, a set of workers $I(n)$ (i.e., arms) arrives. Each worker $i \in I(n)$ takes no action, and firm $n$ hires only one worker $\iota(n) \in I(n)$. Both firms and workers are short-lived. Upon round $n$ ending, firm $n$'s payoff is finalized, and all rejected workers leave the market.\footnote{This assumption is for the sake of simplicity. Since firms have no private information, the fact that a worker was previously rejected by another firm does not influence the worker's evaluation (given that the current firm can also observe the worker's characteristics); thus, entrant workers and incumbent workers have no informational difference. Accordingly, even if workers stay in the hiring market for multiple periods, our conclusion will not be changed qualitatively.}

Each worker $i \in I$ belongs to a group $g \in G$. We assume that the population ratio is fixed: for every round $n$, the number of workers belonging to group $g$ is $K_g \in \Natural$ and $K = \sum_{g \in G} K_g$. Slightly abusing the notation, we denote the group worker $i$ belongs to by $g(i)$. Each worker $i$ also has observable characteristics $\bx_i \in \Real^d$, with $d \in \Natural$ as their dimension. Finally, each worker $i$ also has a skill $y_i \in \Real$ that is not observable until worker $i$ is hired. The characteristics and skills are random variables. 

Because each firm's payoff is equal to the hired worker's skill $y_i$ (plus the subsidy assigned to worker $i$ as an affirmative action, if any), firms want to predict the skill $y_i$ based on the characteristics $\bx_i$. We assume that characteristics and skills are associated as $y_i = \bx_i' \btheta_{g(i)} + \eps_i$,
where $\btheta_g \in \Real^d$ is a \emph{coefficient parameter}, and $\eps_i \sim \Normal(0, \sigma^2_\eps)$ i.i.d.\ is an unpredictable error term. 
We assume $||\btheta_g|| \le S$ for some $S \in \Real_{+}$, where $||\cdot||$ is the standard L2-norm.
Since $\eps_i$ is unpredictable, $q_i \coloneqq \bx_i' \btheta_{g(i)}$ is the best predictor of worker $i$'s skill $y_i$.

The coefficient parameters $(\btheta_{g})_{g\in G}$ are initially unknown. Hence, unless firms share information about past hires, firms are unable to predict each worker's skill $y_i$. We assume that firms share information about past hiring cases.\footnote{Alternatively, we can assume that firms only share information about a certain fraction of workers. We expect that, under this assumption, (i) the results would not change qualitatively, and (ii) the statistical discrimination would become severer because it becomes more difficult to accumulate information about the minority group.} 
Accordingly, when firm $n$ makes a decision, in addition to the characteristics and groups of current workers $(\bx_i, g(i))_{i \in I(n)}$, firm $n$ observes the characteristics, groups, and skills of previously hired workers $(x_{\iota(n')}, g(\iota(n')), y_{\iota(n')})_{n' = 1}^{n - 1}$.
We refer to all realizations of these variables as the \emph{history} in round $n$, and denote it by $h(n)$.
Formally, $h(n)$ is given by
 \begin{equation}
     h(n) = \left((\bx_i, g(i))_{i \in I(n)}, (x_{\iota(n')}, g(\iota(n')), y_{\iota(n')})_{n' = 1}^{n - 1} \right).
\end{equation}
Note that, $h(n)$ does not include information about (i) the worker hired by firm $n$, or (ii) that worker's actual skill. This is because the notation $h(n)$ represents the information set firm $n$ faces when it makes a hiring decision. 
We denote the set of all the possible histories in round $n$ by $H(n)$. 
The firm's decision rule for hiring and the government's subsidy rule are defined as a function that maps a history to a hiring decision and the subsidy amount (described later). For notational convenience, we often omit $h(n)$.

\paragraph{Prediction}
We assume that firms are not Bayesian but \emph{frequentists}. Hence, firms do not have a prior belief about the parameter $\btheta$ but estimate it only using the available data set. We expect that essentially the same results will be obtained with Bayesian firms (see Appendix~\ref{sec: Bayesian Approach}).

We assume that each firm predicts skill using \emph{ridge regression} (L2-regularized least square).\footnote{For the properties of the ridge estimator, see \citet{Kennedy2008}, for example.} Let $N_g(n)$ be the number of rounds at which group-$g$ workers are hired before round $n$. Let $\bX_g(n) \in \Real^{N_g(n)\times d}$ be a matrix that lists the characteristics of group-$g$ workers hired by round $n$: each row of $\bX_g(n)$ corresponds to $\{\bx_{\iota(n')}: \iota(n') = g\}_{n' = 1}^{n - 1}$. Likewise, let $Y_g(n) \in \Real^{N_g(n)}$ be a vector that lists the skills of group-$g$ workers hired by round $n$: each element of $Y_g(n)$ corresponds to $\{y_{\iota(n')}: \iota(n') = g\}_{n'=1}^{n-1}$. We define $\bV_g(n) \coloneqq (\bX_g(n))'\bX_g(n)$. For a parameter $\lambda > 0$, we define $\bar{\bV}_g(n) = \bV_g(n) + \lambda \bI_d$, where $\bI_d$ denotes the $d\times d$ identity matrix. Firm $n$ estimates the parameter as follows:
\begin{equation}\label{eq: def theta hat}
    \hat{\btheta}_g(n) \coloneqq (\bar{\bV}_g(n))^{-1}(\bX_g(n))'Y_g(n).
\end{equation}
Firm $n$ predicts worker $i$'s skill $q_i$, while substituting the true predicted skill $\btheta_g$ with \emph{estimated skill} $\hat{\btheta}_g(n)$: $\hat{q}_i(n) \coloneqq \bx_i' \hat{\btheta}_{g(i)}(n)$.
Hence, $\hat{q}_i(n)$ and $\hbtheta_{g}(n)$ depend on the history $h(n)$. 
The \emph{ordinary least squares} (OLS) estimator corresponds to the ridge estimator with $\lambda = 0$.
We use the ridge estimator instead of the OLS estimator to stabilize the small-sample inference.
For example, for some history, $\bV_g(n)$ may not have full rank, and the OLS estimator may not be well-defined. Even for such histories, the ridge estimator is always well-defined.

For analytical tractability, we assume that for the first $\Nzero$ rounds, each firm $n$ must hire from a pre-specified group, $g_n$. We refer to the first $\Nzero$ rounds as the \emph{initial sampling phase}. We assume $\Nzero$ to be small and deal $\Nzero$ as a constant.\footnote{The required size of $\Nzero$ is specified by Eq.~\eqref{ineq_suffnzero} in Appendix.} Let $\Nzero_g \coloneqq \sum_{n=1}^{\Nzero}\Ind[g_n = g]$ as the data size of initial sampling for group $g$, where $\Ind[\EA] = 1$ if event $\EA$ holds or $0$ otherwise. The initial sampling phase is exogenous. That is, we ignore the incentives and payoffs of firms and assume that the characteristics $\bx$ of the hired candidate constitute an i.i.d.\ sample of the corresponding group. We analyze mechanism, social welfare, and budget after round $n > \Nzero$. The initial sampling phase can be interpreted as data that has already been produced in history. The welfare cost is already sunk, and the government can no longer make policy interventions for the event that has already occurred in the past.

\paragraph{Mechanism}

In addition to worker skills, firms are also concerned about subsidies. We assume that firm preferences are risk-neutral and quasi-linear. Hence, if firm $n$ hires worker $i$, its payoff (von-Neumann--Morgenstern utility) is given by $y_i + s_i$, where $s_i \in \Real_+$ denotes the amount of the subsidy assigned to worker $i$.

In the beginning of the game, the government commits to a \emph{subsidy rule} $s_i(n,\cdot): H(n) \to \Real_+$, which maps a history to a subsidy amount. Hence, once a history $h(n)$ is specified, firm $n$ can identify the subsidy assigned to each worker $i\in I(n)$. Firm $n$ attempts to maximize
\begin{equation}
    \mathbb{E}\left[\left. y_i + s_i(n;h(n)) \right| h(n)\right] = \hat{q}_i(n;h(n)) + s_i(n;h(n)).
\end{equation}

Firm $n$'s \emph{decision rule} $\iota(n, \cdot): H(n) \to I(n)$ specifies the worker that firm $n$ hires after history $h(n)$. We say that, a decision rule $\iota$ is \emph{implemented} by a subsidy rule $s_i$ if for all $n$ and $h(n)$, we have
\begin{equation}\label{eq: the candidate firm n hires}
    \iota(n;h(n)) = \argmax_{i\in I(n)} \left\{\hat{q}_i(n;h(n)) + s_i(n;h(n))\right\}.
\end{equation}
Throughout this paper, any ties are broken arbitrarily. We call a pair of a decision rule and subsidy rule a \emph{mechanism}. We often drop $h(n)$ from the input of decision rule $\iota$ when it does not cause confusion.

\paragraph{Regret}

\emph{Regret} is a standard measure for evaluating the performance of algorithms in multi-armed bandit models:
\begin{equation}\label{eq: regret definition}
    \Regret(N) \coloneqq \sum_{n = \Nzero+1}^N \left\{\max_{i \in I(n)}q_i - q_{\iota(n)}\right\}.
\end{equation}
Since $\eps_i$ is unpredictable, it is natural to evaluate the performance of the algorithm (or the equilibrium consequence of the policy intervention) by comparing it with $q_i$. If the parameter $(\btheta_g)_{g\in G}$ were known, each firm could easily calculate $q_i$ for each worker $i$ and hire the best worker, $i^*(n) \coloneqq \argmax_{i\in I(n)}q_i$. In this case, regret would be zero. The goal of the policy design is to establish a mechanism that minimizes the expected regret $\Ex[\Regret(N)]$, where the expectation is taken on a random draw of workers. This aim is equivalent to maximizing the sum of the skill of workers hired. 

Following the literature, we often evaluate the performance by the limiting behavior (order) of expected regrets. A decision rule $\iota$ is said to have \emph{sublinear regret} if $\Ex[\Regret(N)]=O(N^a)$ for some $a < 1$. Small regret implies not only efficiency but also fairness. Regret measures the disparate impact that is not justified by skill disparity, and sublinear regret is achieved if firms hire the most skillful workers without regard to the group of workers. In Appendix~\ref{sec: fairness criteria}, we demonstrate that a sublinear-regret decision rule asymptotically aligns with a fairness notion called \emph{equalized odds}, which requires that candidate workers in the majority and minority groups have an equal true positive rate (hired when they have the highest skill predictor $q_i$) and equal false negative rate (not hired when they have the highest $q_i$).

\paragraph{Budget}

Some of the policies we study incentivize exploration through subsidies. The total budget required by a subsidy rule is also an important policy concern. The total amount of the subsidy is given by
$\Subsid(N) \coloneqq \sum_{n = \Nzero+1}^N s_{\iota(n)}(n)$.

\section{Laissez-Faire}\label{sec: laissez-faire}

This section analyzes the equilibrium under \emph{laissez-faire}; that is, the consequence of social learning in the absence of policy intervention.

\begin{definition}[Laissez-Faire]
The \emph{laissez-faire decision rule} always selects the worker who has the greatest estimated skill, i.e., $\iota(n)  = \argmax_{i\in I(n)}\hat{q}_i(n)$.
This decision rule is implemented by the \emph{laissez-faire subsidy rule}, which provides no subsidy $s_i(n) = 0$ after any history.
\end{definition}

Laissez-faire makes no intervention. Each firm hires the worker with the greatest estimated skill, as predicted by the current data set. The multi-armed bandit literature refers to the laissez-faire decision rule as the \emph{greedy algorithm}.

\subsection{Symmetry and Diverse Characteristics}\label{subsec: symmetry diverse}

To illustrate a failure of social learning, we make three assumptions. First, as a minimal environment to analyze discrimination, we focus on the two-group case.
\begin{assp}[Two Groups]\label{assp:twogroups}
The population comprises two groups $G = \{1,2\}$.
\end{assp}
When we consider asymmetric equilibria, we refer to group $1$ as the majority (dominant) group and group $2$ as the minority (discriminated-against) group. The two-group assumption enables the elucidation of how the minority group is discriminated against.

Second, we assume that groups are symmetric.
\begin{assp}[Symmetric Groups]\label{assp:idcontext}
The characteristics of all groups are identical, and the coefficient parameters are the same across the groups. That is, a probability distribution $F$ such that for all $i \in I$, $\bx_i \sim F$, and there exists $\btheta \in \Real^d$ such that, for all $g \in G$, $\btheta_g = \btheta$.
\end{assp}

Note that although we assume that groups are symmetric, firms do not know the true parameters, and therefore, apply different statistical models to different groups. That is, even though the \emph{true} coefficients are identical ($\btheta_g = \btheta_g'$ for all $g,g' \in G$), firms estimate them separately; thus, the values of the \emph{estimated} coefficients are typically different ($\hbtheta_g(n) \neq \hbtheta_{g'}(n)$ for $g\neq g'$).

Although Assumption~\ref{assp:idcontext} is unrealistic (because the characteristics should evidently be interpreted differently), it is useful for elucidating how laissez-faire nourishes statistical discrimination. Under Assumption~\ref{assp:idcontext}, agents are ex ante identical \citep[as assumed in][]{arrow1973theory,Foster1992AnEA,coate1993will,moro_general_2004}, and therefore the differences we observe in the equilibrium are entirely attributed to social learning.

Furthermore, when groups are symmetric, disparate impact is unambiguously unfair. It is well-known that popular fairness notions aim at different goals and are compatible with each other only in highly constrained special cases \citep[see, e.g.,][]{KleinbergMR17}. The symmetric environment specified by Assumption~\ref{assp:idcontext} one of such exceptions: In this environment, sublinear regret implies not only equalized odds but also \emph{demographic parity}, i.e., the probability of a worker to be hired is independent of his group (see Appendix~\ref{sec: fairness criteria}). Since this paper's focus is not to debate which of the various types of fairness notions should be respected, we will concentrate only on the symmetric environment.\footnote{We confirmed through simulations that the proposed mechanisms are effective in a broad class of asymmetric environments. See Appendices~\ref{sec: fairness criteria} and \ref{subsec: simulation on asymmetric models}.}

Third, we assume that characteristics are normally distributed, and therefore, the distribution is non-degenerate. This assumption captures the diversity of workers.

\begin{assp}[Normally Distributed Characteristics]\label{assp:normalcontext}
For every candidate $i$, 
\begin{equation}
    \bx_i \sim \Normal(\bmu_{xg(i)}, \sigma_{xg(i)}^2 \bI_d),
\end{equation}
where $\bmu_{xg} \in \Real^d$ and $\sigma_{xg} \in \Real_{++}$ for every $g \in G$. We also denote $\bx_i = \bmu_{xg(i)} + \bbe_{xi}$ to highlight the noise term $\bbe_{xi}$.
\end{assp}

We consider essentially the same results to hold more generally as long as the characteristics are sufficiently diverse.
Note that when we have both Assumptions~\ref{assp:idcontext} and \ref{assp:normalcontext}, then there exist $\bmu_x, \sigma_x$ such that $\bmu_{xg} = \bmu_x$ and $\sigma_{xg} = \sigma_x$ for all $g \in G$. Hence, $\bx_i \sim \Normal(\bmu_x,\sigma_x^2 \bI_d)$ for all $i$.

\subsection{Perpetual Underestimation}\label{subsec: perpetual underestimation}

To determine whether social learning incurs linear expected regret, it is useful to check whether it results in \emph{perpetual underestimation} with a significant probability.

\begin{definition}[Perpetual Underestimation]\label{def:perpunderrepr}
A group $g_0$ is \textit{perpetually underestimated} if, for all $n > \Nzero$, we have $g(\iota(n)) \neq g_0$.
\end{definition}

When group $g_0$ is perpetually underestimated, no worker from group $g_0$ is hired after the initial sampling phase. If social learning generates perpetual underestimation with a significant probability, then linear expected regret often results. In particular, under Assumption~\ref{assp:idcontext}, perpetual underestimation against any group $g \in G$ implies that firms fail to hire at least $(K_g/K)\left(N - \Nzero \right)$ best candidate, which is linear in $N$. Hence, the constant probability of perpetual underestimation (independent of $N$) precipitates linear expected regret.

Perpetual underestimation is not only inefficient but also unfair in the sense of various fairness notions (formally defined in Appendix~\ref{sec: fairness criteria}); it results in a candidate belonging to an underestimated group not being hired, implying a violation of demographic parity. Furthermore, under a symmetric environment, such a hiring policy cannot be justified by workers' underlying skills, implying a violation of equalized odds. Hence, perpetual underestimation is an extreme form of discrimination that persists for a long time. 

\subsection{Sublinear Regret with Balanced Population}\label{sec: LF balanced group}

This section analyzes the case of only one candidate arriving from each group during each period. The contextual variation implicitly urges firms to explore all the groups with some frequency. Consequently, laissez-faire has sublinear regret, implying that statistical discrimination is eventually resolved.

\begin{thm}[Sublinear Regret with a Balanced Population]\label{thm_smlcand}
Suppose Assumptions \ref{assp:twogroups}, \ref{assp:idcontext}, and \ref{assp:normalcontext}. Suppose also that $K_g = 1$ for $g = 1,2$. 
Then, expected regret $\Regret^{\text{LF}}(N)$ under the laissez-faire policy is bounded as
\begin{equation}
\Ex[\Regret^{\text{LF}}(N)] = \tilO(\sqrt{N}).
\end{equation}
\end{thm}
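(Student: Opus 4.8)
The plan is to show that the per-round expected regret decays like $\tilO(1/\sqrt{n})$ on a high-probability event, so that summing over $n = \Nzero+1, \ldots, N$ gives $\tilO(\sqrt{N})$. First I would decompose the regret round by round and bound the instantaneous regret by the estimation error. In round $n$ the two candidates, one from each group, arrive with contexts $\bx_1, \bx_2$; the greedy rule errs only when it hires $\iota(n)$ while the other candidate $i^*$ satisfies $q_{i^*} > q_{\iota(n)}$. Writing $\hat{q}_i = q_i + \bx_i'(\hbtheta_{g(i)}(n) - \btheta)$ and using the greedy optimality $\hat{q}_{\iota(n)} \ge \hat{q}_{i^*}$, the instantaneous regret is bounded by $\norm{\bx_1}\,\norm{\hbtheta_1(n) - \btheta} + \norm{\bx_2}\,\norm{\hbtheta_2(n) - \btheta}$. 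Everything then reduces to controlling the two estimation errors and the light-tailed context norms.

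Second, I would invoke the standard self-normalized concentration bound for ridge regression to define a good event $\EE$ on which, simultaneously for all $n$ and both $g$, one has $\norm{\hbtheta_g(n) - \btheta} \le \beta_n / \sqrt{\lambdamin(\bar{\bV}_g(n))}$ with $\beta_n = \tilO(1)$, and $\Prob[\EE^c]$ is small. The decisive task is then to show that $\lambdamin(\bar{\bV}_g(n))$ grows linearly in $n$ for both groups. I would establish this in two linked steps: (i) a covariate-diversity bound stating that, conditional on a group being hired with probability bounded away from zero, the second moment of that group's context still dominates $\rho \bI_d$ for a constant $\rho > 0$---this is where Assumption~\ref{assp:normalcontext} is essential, since the Gaussian has full support, so conditioning on the half-space selection event cannot collapse any eigendirection; and (ii) a matrix-concentration (matrix Freedman/Chernoff) argument upgrading this per-round lower bound to $\lambdamin(\bar{\bV}_g(n)) \ge c\,N_g(n)$ with high probability.

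Third, I would show balanced selection, $N_g(n) \ge c' n$ for both groups with high probability. The key sub-claim is that the conditional probability of hiring each group is bounded below by a constant $p_0 > 0$. Conditional on the history, $\hat{q}_1$ and $\hat{q}_2$ are independent Gaussians with means $\bmu_x'\hbtheta_g(n)$ and variances $\sigma_x^2 \norm{\hbtheta_g(n)}^2$, so the hiring probability is $\Phi$ of a standardized gap; this gap stays bounded precisely because, on $\EE$, both estimates are close to $\btheta$, so both norms are comparable to $\norm{\btheta}$ and neither group's estimated-skill distribution runs away from the other's. A Chernoff bound converts the per-round lower bound $p_0$ into the linear lower bound on $N_g(n)$.

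Combining the pieces, on $\EE$ the estimation error is $\norm{\hbtheta_g(n) - \btheta} = \tilO(1/\sqrt{n})$; taking expectations over the context norms (which have bounded moments) gives per-round expected regret $\tilO(1/\sqrt{n})$, and $\sum_{n} \tilO(1/\sqrt{n}) = \tilO(\sqrt{N})$. The contribution of $\EE^c$ is controlled by pairing its small probability with the trivially bounded per-round regret $\Ex|q_1 - q_2| = O(1)$. I expect the main obstacle to be the circular dependence between estimation accuracy and selection balance: the linear eigenvalue growth of the second step needs the hiring probabilities of the third step to be bounded below, while those lower bounds in turn rely on the estimates already being accurate. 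I would resolve this by induction on $n$, using the initial sampling phase (with $\Nzero_g$ i.i.d.\ Gaussian draws) as the base case that seeds sufficiently accurate estimates, and then showing that $\EE$ is self-reinforcing so that accuracy and balance propagate together.
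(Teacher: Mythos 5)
Your overall skeleton (per-round regret bounded by estimation error, self-normalized confidence bound, conditional diversity plus matrix concentration to get $\lambdamin(\bar{\bV}_g(n)) = \Omega(n)$, then summing $\tilO(1/\sqrt{n})$) matches the paper's proof. But there is a genuine gap at the step you yourself flag as the main obstacle, and your proposed resolution does not work. You lower-bound the per-round hiring probability of each group by arguing that on the good event $\EE$ both estimates are close to $\btheta$, so the standardized gap between $\hat{q}_1$ and $\hat{q}_2$ stays bounded. This is circular in a way that cannot be repaired by induction: the self-normalized bound only controls $\norm{\hbtheta_g(n) - \btheta}_{\bar{\bV}_g(n)}$, and converting this to Euclidean accuracy requires $\lambdamin(\bar{\bV}_g(n))$ to be large---exactly what the balanced-selection claim is supposed to deliver. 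Your induction base case assumes the initial sampling phase seeds accurate estimates, but the theorem makes no assumption on $\Nzero$ (the paper's proof even sets $\Nzero = 0$), and after constantly many samples the estimate of a group is badly wrong with \emph{constant} probability. Underestimation is not a low-probability event that can be absorbed into $\EE^c$; it is the central phenomenon of the paper, and the theorem is true because laissez-faire \emph{recovers} from underestimation, not because underestimation is avoided. A ``self-reinforcing accuracy'' induction would, if it worked, essentially prove that perpetual underestimation never starts, which is false.

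The paper breaks the circularity with an estimate-free selection bound: for \emph{any} values of $\hbtheta_1, \hbtheta_2$ (however wrong), the events $\{\bx_1'\hbtheta_1 \le 0\}$ and $\{\bx_2'\hbtheta_2 > 0\}$ are independent and each has probability at least $\Phi^c(\mu_x/\sigma_x)$, since the relevant projection of a Gaussian context is $\Normal(\mu_{x,\parallel}, \sigma_x^2)$ with $|\mu_{x,\parallel}| \le \mu_x$; on their intersection the minority candidate is hired. So each group is hired with probability at least $(\Phi^c(\mu_x/\sigma_x))^2$ per round regardless of estimation accuracy, and only the \emph{weighted-norm} accuracy (which holds uniformly) is ever needed for the regret sum. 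Notably, the conditional-Gaussian computation you set up already contains this fix: by Cauchy--Schwarz, $|\bmu_x'(\hbtheta_1 - \hbtheta_2)| \le \mu_x(\norm{\hbtheta_1} + \norm{\hbtheta_2}) \le \sqrt{2}\,\mu_x\sqrt{\norm{\hbtheta_1}^2 + \norm{\hbtheta_2}^2}$, so the standardized gap is bounded by $\sqrt{2}\,\mu_x/\sigma_x$ \emph{unconditionally}, giving a hiring probability of at least $\Phi^c(\sqrt{2}\,\mu_x/\sigma_x)$ with no reference to accuracy. Replacing your accuracy-based justification with this one observation (and then applying your conditional-diversity and matrix-concentration steps to the subsequence of rounds on which the selection event occurs, as the paper does with its Lemma on diversity of conditioned Gaussians) closes the gap and makes the induction unnecessary.
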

Let $\mu_x = ||\bmu_x||$ and $\Phi$ be the cumulative distribution function of the standard normal distribution. The constant on the top of $\Ex[\Regret^{\text{LF}}(N)]$ is inverse proportional to $1 - \Phi(\mu_x/\sigma_x)$, which approximately scales as $\exp(-(\mu_x/\sigma_x)^2/2)$.

\paragraph*{Proof.}
See Appendix~\ref{subsec_smlcand}.

To prove Theorem~\ref{thm_smlcand}, we characterize the condition with which underestimation is spontaneously resolved. Let indices $i_1$ and $i_2$ denote the majority candidate and the minority candidate.
With a constant (i.e., independent of $N$) probability, the minority group is underestimated (i.e, $\hbtheta_2(n)$ is misestimated in such that $\bx_{i_2}\hbtheta_2(n) \ll \bx_{i_2}\btheta_2$ often occurs) in early rounds due to a bad realization of the error term. Even in such a case, there is some probability of the minority candidate being hired. Since characteristics are diverse (i.e., $\sigma_x > 0$), with some probability, the majority candidate $i_1$ is not very good (i.e., $x_{i_1}\hbtheta_1(n) \approx x_{i_1}\btheta_1$ is small). In such a round, $x_{i_1} \hbtheta_1(n) < x_{i_2} \hbtheta_2(n)$ holds despite group $2$ being underestimated, and the minority candidate $i_2$ is hired. In such a case, firms update their belief about the minority, leading to a resolution of underestimation. Such events occur more frequently when workers have more diverse characteristics, i.e., $\mu_x/\sigma_x$ is small.

As anticipated by the theory of least squares, the standard deviation of $\hbtheta_g(n)$ is proportional to $(\bar{\bV}_g(n))^{-1/2}$, and we demonstrate that its diameter $(\lambdamin(\bar{\bV}_g(n)))^{-1/2}$ shrinks as $\tilO(1/\sqrt{n})$, where $\lambdamin$ is the minimum eigenvalue of a matrix. The regret per error is defined by this quantity, with the total regret being $\tilO(\sum_{n\le N} (1/\sqrt{n}))= \tilO(\sqrt{N})$.

Theorem~\ref{thm_smlcand} indicates that statistical discrimination is resolved spontaneously when candidate variation is large.
At a glance, this appears to contradict widely known results that state laissez-faire (greedy) may lead to suboptimal results in bandit problems due to underexploration. However, the variation in characteristics naturally incentivizes selfish agents to explore the underestimated group, and therefore, with some additional conditions, the probability of perpetual underestimation is bounded.

\begin{remark}
In Theorem~\ref{thm_smlcand}, we assumed that there is one candidate for each group, $K_1 = K_2 = 1$, for tractability. If we assume a larger but balanced population, $K_1 = K_2 = K/2$, then the analysis would become significantly more challenging because the maximum of normally distributed variables is not normally distributed.
However, we conjecture that a similar result would hold more generally because the variance of the expected skill of the best candidate in each group decreases only slowly as $K$ increases.\footnote{Lemma~\ref{lem_varmax} in the Appendix implies that the variance is in the order of $O(1/\log K)$.}
\end{remark}

\begin{remark}
Theorem~\ref{thm_smlcand} shares certain intuitions with the previous research \citep{kannan2018,Bastani17} demonstrating that the variation in contexts (characteristics) improves the performance of the greedy algorithm (laissez-faire) in contextual multi-armed bandit problems. However, in contrast to \citet{kannan2018}, our theorem makes no assumptions regarding the length of the initial sampling phase. Theorem 1 in \citet{Bastani17} corresponds to our paper's Theorem \ref{thm_smlcand}, and we further characterize the factor of the regret as a function of $\mu_x/\sigma_x$ rather than the diameter of the characteristics.
\end{remark}

\subsection{Large Regret with Unbalanced Population}\label{sec: LF unbalanced group}

While Theorem~\ref{thm_smlcand} implies that statistical discrimination is spontaneously resolved in the long run, it crucially relies on one unrealistic assumption---the balanced population ratio. In many real-world problems, the population ratio is unbalanced, and the discriminated group is often a demographic minority in the relevant market. We indeed find that the population ratio crucially impacts the equilibrium consequence under laissez-faire.

\begin{thm}[Substantial Regret with Unbalanced Populations]\label{thm_reglower}
Suppose Assumptions \ref{assp:twogroups}, \ref{assp:idcontext}, and \ref{assp:normalcontext}.
Suppose also that $K_2 = 1$ and $d=1$. Let $K_1 > \log_2 N$.
Then, under the laissez-faire decision rule, group $2$ is perpetually underestimated with a probability of at least $C_{\text{imb}} = \tilTheta(1)$. Accordingly, the expected regret associated with the laissez-faire decision rule is 
\begin{equation}
    \Ex\left[\Regret^{\text{LF}}(N)\right] \ge \frac{C_{\text{imb}}(N-\Nzero)}{K}  = 
    \tilOmega(N).
\end{equation}
\end{thm}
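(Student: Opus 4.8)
The plan is to show that, under laissez-faire, with probability $\tilTheta(1)$ the minority estimate becomes ``trapped'' near zero during the initial sampling phase and remains there forever, so that no group-$2$ worker is ever hired. The key structural observation is that $\hat\theta_2(n)$ is updated only in rounds where a group-$2$ worker is hired; hence, on the event of perpetual underestimation, it stays frozen at its post-initial-sampling value $c := \hat\theta_2(\Nzero+1)$. The problem therefore reduces to exhibiting a ``bad'' event $A$, measurable with respect to the initial-sampling data, such that (i) $\Prob(A) = \tilTheta(1)$, and (ii) conditional on $A$, the minority candidate is never the arg-max after round $\Nzero$ with probability $\Omega(1)$. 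Multiplying these lower bounds and converting to regret gives the claim.

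For step (i) I would take $A = \{ |c| \le \delta \}$ with a threshold $\delta = \Theta(1/\sqrt{\log N})$ fixed in step (ii). Because $d=1$, the ridge estimate $c$ is, conditional on the initial characteristics, a scalar Gaussian with mean near $\theta$ and variance bounded away from $0$ and $\infty$ when $\Nzero_2$ is of constant order; its density is thus bounded below by a positive constant on a neighborhood of $0$. One-dimensional Gaussian anti-concentration then yields $\Prob(A) \ge c_1\delta = \tilTheta(1/\sqrt{\log N}) = \tilTheta(1)$, since $\tilTheta$ absorbs the $\sqrt{\log N}$ factor.

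The heart is step (ii). On $A$ the minority's predicted skill $\hat q_{i_2}(n) = x_{i_2}(n)\,c$ is tiny because $|c|\le\delta$, whereas the best majority prediction $M_1(n):=\max_{j\in I_1(n)} x_j(n)\hat\theta_1(n)$ clears a fixed positive constant $\tau_0$ except on a rare event: since $K_1 > \log_2 N$ independent majority characteristics arrive each round, $\Prob(M_1(n) < \tau_0) \le 2^{-K_1} \le 1/N$ (all $K_1$ draws must fall below the relevant quantile). This is exactly where the population imbalance enters. On the complementary high-probability event the minority is hired only if the single scalar Gaussian $x_{i_2}(n)c$ exceeds $\tau_0$, which by a Gaussian tail bound has probability $\exp(-\Theta(\tau_0^2/(\sigma_x^2\delta^2)))$; the choice $\delta = \Theta(1/\sqrt{\log N})$ makes this $O(1/N)$ as well. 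Hence the per-round hiring probability satisfies $q_n = O(1/N)$ uniformly, so $\sum_{n>\Nzero} q_n = O(1)$ and the never-hired (survival) probability obeys $\prod_n (1-q_n) \ge e^{-2\sum_n q_n} = \Omega(1)$. Combining the two steps, $\Prob(\text{perpetual underestimation}) \ge \Prob(A)\cdot\Omega(1) = \tilTheta(1) =: C_{\text{imb}}$.

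Two technical points remain, and the first is the main obstacle. Because $\hat\theta_1(n)$ evolves as majority data accumulates, the bound $M_1(n)\ge\tau_0$ must hold uniformly over all $N$ rounds; I would establish a high-probability event $G$ on which $\hat\theta_1(n)\ge\theta/2$ for every $n$ via standard ridge-regression concentration and a maximal (union) bound, and run the per-round argument on $A\cap G$. The delicate part is that the tail bound of step (ii) and the anti-concentration bound of step (i) pull $\delta$ in opposite directions, and one must verify they balance to land precisely at $\tilTheta(1)$; this, together with controlling the round-to-round dependence through the filtration (so that the product bound for the survival probability, applied up to the first time a minority is hired, is legitimate), is where the real work lies. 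The regret conversion is then routine: under Assumption~\ref{assp:idcontext} the $K$ candidates have i.i.d.\ true skills, so the lone minority candidate is the unique best with probability $\Theta(1/K)$ each round and incurs a constant expected skill gap whenever perpetual underestimation holds; thus $\Ex[\Regret^{\text{LF}}(N)] \ge C_{\text{imb}}\cdot\Omega((N-\Nzero)/K) = \tilOmega(N)$, using $K > \log_2 N$.
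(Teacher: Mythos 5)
Your proposal is correct and follows essentially the same route as the paper's proof: a trapped (frozen) minority estimate established by anti-concentration at a $1/\mathrm{polylog}(N)$ threshold (the paper's event $\EP$ with $b = O(1/\log N)$ versus your $A$ with $\delta = \Theta(1/\sqrt{\log N})$), a uniform-in-time lower bound on the majority estimate via ridge/martingale concentration (the paper's $\EQ$, your $G$), per-round bounds combining the $2^{-K_1}$ failure probability for all majority candidates with a Gaussian tail bound for the minority clearing the threshold, and the same $\Theta(1/K)$ regret conversion. Your survival-product bound in place of the paper's union bound, and the slightly different threshold scaling, are only cosmetic differences.
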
%

\paragraph*{Proof.}
See Appendix \ref{subsec_reglower}. The explicit form of $C_{\text{imb}}$ is shown in Eq.~\eqref{ineq:reglower_finalform}.

In the proof of Theorem~\ref{thm_reglower}, we evaluate the probability that the following two events occur: (i)  $\htheta_2$ is underestimated, and (ii) the characteristics and skills of the hired majority workers are not very bad throughout rounds (i.e., $\max_{i: g(i)=1} x_i \htheta_1 \ge c \mu_x \theta$ for some constant $c>0$). The probability of (i) is polylogarithmic
to $N$ (i.e., $\tilTheta(1)$) and the probability that (ii) consistently holds for all the rounds $n=\Nzero+1,\dots,N$ is polylogarithmic if $K_1 > \log_2 N$. When both (i) and (ii) occur, we always have $\max_{i \in I(n)\setminus\{i_2(n)\}} x_i \htheta_1 > x_{i_2(n)} \htheta_2$ (where $i_2(n)$ is the unique minority candidate of round $n$); thus, the minority worker is never hired. Note that the majority group does not suffer from perpetual underestimation (with a significant probability) because the event that all the majority workers are bad occasionally occurs.

Theorem~\ref{thm_reglower} indicates that we should not be too optimistic about the consequence of laissez-faire. A small imbalance in the population ratio (the ratio of majority to minority is just $\log_2 N$ to $1$) could lead to a substantially unfair job allocation. Once the minority group is underestimated and the majority candidate pool is reasonably large, then the minority group is afforded no hiring opportunity, perpetuating underestimation. This insight applies to many real-world problems because unbalanced populations are commonplace.

We conjecture a substantial probability under a broader environment than the premise of Theorem~\ref{thm_reglower}. Specifically, the assumptions of $d = 1$ and $K_1 > \log_2 N$ are made only for analytical tractability, and (approximately) linear regret should be obtained under a weaker set of assumptions.
Theorem~\ref{thm_reglower} (i) focuses on perpetual underestimation, which is an extreme form of statistical discrimination, and (ii) evaluates the probability of perpetual estimation occurring loosely. 
In Section~\ref{sec: simulation}, we demonstrate that perpetual underestimation occurs with a significant probability even under the assumptions of $d = \Simd$ and $(K_1, K_2) = (10, 2)$, where the premise of Theorem~\ref{thm_reglower} does not hold.

\section{The Upper Confidence Bound Mechanism}\label{sec: subsidy scheme}

Section \ref{sec: laissez-faire} has discussed the equilibrium consequences of laissez-faire. We observed that an unbalanced population ratio leads to a substantial probability of underestimation being perpetuated. Policy intervention is demanded to improve social welfare and the fairness of the hiring market.

This section proposes a subsidy rule to resolve underestimation. We employ the idea of the \emph{upper confidence bound} (UCB) algorithm \citep{lai1985,auer2002}, which has widely been used in the literature on the bandit problem.
The UCB algorithm balances exploration and exploitation by developing a confidence interval for the true reward and evaluating each arm's performance according to its upper confidence bound to achieve this balance. Firms are generally unwilling to follow the UCB decision rule voluntarily; therefore, the government needs to provide a subsidy to incentivize firms to hire a candidate with the greatest UCB index.
This section establishes a UCB-based subsidy rule and evaluates its performance.

The adaptive selection of candidates based on history can induce some bias, meaning the standard confidence bound no longer applies. To overcome this issue, we use martingale inequalities \citep{selfnormalized,rusmevichientong2010,abbasi2011}.
We here introduce the confidence interval for the true coefficient parameter, $(\btheta_g)_{g\in G}$. 

\begin{definition}[Confidence Interval]\label{def:confinterval}
Given the group $g$'s collected data matrix $\bar{\bV}_g(n)$, the \emph{confidence interval} of group $g$'s coefficient parameter $\btheta_g$ is given by
\begin{equation}
    \EC_g(n; \delta) 
    \coloneqq \left\{ \bbtheta_g \in \Real^d: \norm{\bbtheta_g - \hbtheta_g(n)}_{\bar{\bV}_g(n)} \le \sigma_\eps
    \sqrt{d \log\left(\frac{\det(\bar{\bV}_g(n))^{1/2}\det(\lambda \bI_d)^{-1/2}}{\delta}\right)}
    +
    \lambda^{1/2} S
    \right\},
\end{equation}
where $||\bv||_{\bA} = \sqrt{\bv' \bA \bv}$ for a $d$-dimensional vector $\bv$ and $d \times d$ matrix $\bA$.
\end{definition}

\citet{abbasi2011} study the property of this confidence interval, and they prove that the true parameter $\btheta_g$ lies in $\EC_g(n;\delta)$ with probability $1-\delta$ (Lemma \ref{lem:abbasi}). By choosing a  sufficiently small $\delta$,\footnote{We typically choose $\delta = 1/N$ to make the confidence interval asymptotically correct in the limit of $N \to \infty$.} it is ``safe'' to assess that worker $i$'s skill is \emph{at most}
\begin{equation}
    \tilde{q}_i(n) \coloneqq \max_{\bbtheta_{g(i)} \in \EC_{g(i)}(n;\delta)} \bx_i'\bbtheta_{g(i)}.
\end{equation}
We call $\tilde{q}_i(n)$ the \emph{UCB index} of worker $i$'s skill. Intuitively, $\tilde{q}_i(n)$ is worker $i$'s skill in the most optimistic scenario. The confidence interval $\EC_g(n;\delta)$ shrinks as we obtain more data about group $g$. Hence, the UCB index $\tilde{q}_i(n)$ converges to true predicted skill $q_i(n)$ as the size of the data set increases.

\begin{definition}[UCB Decision Rule]
The UCB decision rule selects the worker with the greatest UCB index; i.e.,
\begin{equation}\label{eq: UCB decision rule}
    \iota(n) = \argmax_{i \in I(n)}\tilde{q}_i(n).
\end{equation}
\end{definition}

The UCB index $\tilde{q}_i(n)$ is close to the pointwise estimate $\hat{q}_i(n)$ when society has rich data about group $g(i)$, because $\EC_{g(i)}(n;\delta)$ is small in such cases. However, when information about group $g(i)$ is insufficient, $\tilde{q}_i(n)$ is much larger than $\hat{q}_i(n)$, because the firm is unsure about the true skill of worker $i$ and $\EC_{g(i)}(n;\delta)$ is large. In this sense, the UCB decision rule offers affirmative actions toward underexplored groups.

The subsidy amount is proportional to the uncertainty surrounding the candidate's characteristics, which is represented by the confidence interval $\EC_g(n)$ for $g=g(i)$. The magnitude of the confidence interval $\EC_g(n)$ is inverse proportional to ${\bar{\bV}_g(n)} = \bV_g(n) + \lambda \bI_d$.\footnote{The standard OLS has a confidence bound of the form $\btheta_g - \hbtheta_g(n) \sim \Normal(0, \sigma_\eps^2 \bV^{-1}_g(n))$ and thus $|\btheta_g - \hbtheta_g(n)| \sim \sigma_\eps \bV^{-1/2}_g(n)$. The price of adaptivity causes the martingale confidence bound $\EC_g(n)$ to be larger than the OLS confidence bound for two factors: (i) $\sqrt{d}$ factor, and (ii) $\sqrt{\log(\det(\bar{\bV}_g(n)))}$ factor. As discussed in \citet{Xu2018AFA}, the $\sqrt{d}$ factor unnecessarily overestimates the confidence bound in most cases.} 
Hence, if the data $\bV_g(n)$ do not vary substantially for a particular dimension of $\bx_i$, then that dimension's prediction can be inaccurate. In such cases, the UCB decision rule recommends hiring a candidate that contributes to increasing that dimension's data. For example, when a candidate possesses skills previous hires do not, then the candidate's UCB index tends to become large.

The UCB decision rule efficiently balances exploration and experimentation. Accordingly, it has sublinear regret in general environments.

\begin{thm}[Sublinear Regret of UCB]\label{thm:ucb}
Suppose Assumption \ref{assp:normalcontext}.
Let $\Regret^{\text{UCB}}$ be the regret from the UCB decision rule. Let $\lambda \ge \max(1,(\LCont{1/N})^2)$, where $\LCont{1/N}$ is an $O(\sqrt{d\log{KN}})$ value defined in Lemma~\ref{lem:largestcontext} in Appendix.
Then, by choosing $\delta=1/N$, regret under the UCB decision rule is bounded as
\begin{equation} 
\Ex[\Regret^{\mathrm{UCB}}(N)] = 
\tilO(\sqrt{N}).
\end{equation}
\end{thm}

\paragraph*{Proof.}
See Appendix \ref{subsec_ucb}.

There are three remarks. First, $\tilO(\sqrt{N})$ regret is the optimal rate for these sequential optimization problems under partial feedback \citep{pmlr-v15-chu11a}. Hence, Theorem \ref{thm:ucb} states that the UCB decision rule effectively prevents perpetual underestimation and is asymptotically efficient.
Second, Theorem~\ref{thm:ucb} relies only on Assumption~\ref{assp:normalcontext}, and therefore, the regret under UCB is sublinear even when groups have a fundamental disparity besides their group sizes. Accordingly, even when the groups are asymmetric, the UCB decision rule satisfies several fairness notions (see Appendix~\ref{sec: fairness criteria} for details).
Third, differing from the case of laissez-faire, where the factor depends on the variation of the context (Theorem \ref{thm_smlcand}), Theorem~\ref{thm:ucb} provides a reasonably small regret bound even when $\sigma_x$ is very small.

To implement the UCB decision rule, we need to satisfy the firms' obedience condition \eqref{eq: the candidate firm n hires} in conjunction with the UCB decision rule \eqref{eq: UCB decision rule}. In the following, we propose one of the most straightforward subsidy rules.

\begin{definition}[UCB Index Subsidy Rule]\label{defn: UCB index subsidy rule}
The \emph{UCB index subsidy rule} $s$ subsidizes firm $n$ to hire worker $i$ who arrives by
\begin{equation}
    s_i(n;h(n)) = \tilde{q}_i(n;h(n)) - \hat{q}_i(n;h(n)).
\end{equation}
\end{definition}

The UCB index subsidy rule aligns each firm's incentive with the maximization of the UCB index, thereby incentivizing firms to follow the UCB decision rule.

\begin{thm}[Sublinear Subsidy of the UCB Index Subsidy Rule]\label{thm:ucb index subsidy scheme}
Under the same assumptions as Theorem \ref{thm:ucb}, the amount of the subsidy required by the UCB index subsidy rule is bounded as
    \begin{equation} 
        \Ex[\Subsid^{\text{UCB-I}}(N)] = 
        \tilO(\sqrt{N}).
    \end{equation}
\end{thm}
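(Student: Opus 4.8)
The plan is to show that the total subsidy is bounded by the same quantity that bounds the regret in Theorem~\ref{thm:ucb}, exploiting the structural fact that the UCB index subsidy rule assigns to the hired worker exactly the gap between the UCB index and the point estimate. First I would write out the per-round subsidy explicitly: by Definition~\ref{defn: UCB index subsidy rule}, the subsidy paid in round $n$ is $s_{\iota(n)}(n) = \tilde{q}_{\iota(n)}(n) - \hat{q}_{\iota(n)}(n)$, so that $\Subsid^{\text{UCB-I}}(N) = \sum_{n=\Nzero+1}^N \big(\tilde{q}_{\iota(n)}(n) - \hat{q}_{\iota(n)}(n)\big)$. The goal is to bound each summand by the ``width'' of the confidence interval at the hired worker's characteristics.

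The key step is to recognize that $\tilde{q}_i(n) - \hat{q}_i(n)$ equals the radius of the confidence interval evaluated in the direction $\bx_i$. Since $\tilde{q}_i(n) = \max_{\bbtheta \in \EC_{g(i)}(n)} \bx_i'\bbtheta$ and $\hat{q}_i(n) = \bx_i'\hbtheta_{g(i)}(n)$ with $\hbtheta_{g(i)}(n)$ the center of $\EC_{g(i)}(n)$, the difference is the support-function value of the ellipsoid $\EC_{g(i)}(n)$ in direction $\bx_i$, namely $\beta_{g(i)}(n)\,\norm{\bx_i}_{(\bar{\bV}_{g(i)}(n))^{-1}}$, where $\beta_g(n)$ is the confidence-width factor appearing in Definition~\ref{def:confinterval}. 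I would carry this out via the standard Cauchy--Schwarz bound for ellipsoidal norms, $\max_{\norm{\bbtheta - \hbtheta}_{\bar{\bV}} \le \beta} \bx'(\bbtheta - \hbtheta) = \beta\,\norm{\bx}_{\bar{\bV}^{-1}}$. Thus the summand for the hired worker is exactly $\beta_{g(\iota(n))}(n)\,\norm{\bx_{\iota(n)}}_{(\bar{\bV}_{g(\iota(n))}(n))^{-1}}$.

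Once the per-round subsidy is in this form, the remaining work is identical to the regret analysis. The proof of Theorem~\ref{thm:ucb} already controls $\sum_{n} \beta_{g(\iota(n))}(n)\,\norm{\bx_{\iota(n)}}_{(\bar{\bV}_{g(\iota(n))}(n))^{-1}}$ using the elliptical potential (self-normalized) lemma of \citet{abbasi2011}, which bounds the sum of squared feature norms by a $\log\det$ term, together with a Cauchy--Schwarz step over rounds to convert $\sum_n w_n$ into $\sqrt{N}\sqrt{\sum_n w_n^2}$. Because the UCB regret bound itself proceeds by showing that the instantaneous regret is at most twice this confidence width (on the high-probability event that $\btheta_g \in \EC_g(n)$), the subsidy sum is dominated by the very same quantity. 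I would therefore invoke the bound established in Appendix~\ref{subsec_ucb} verbatim, concluding $\Ex[\Subsid^{\text{UCB-I}}(N)] \le C_{\text{ucb}}\sqrt{N}$ with the identical constant.

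The main obstacle, and the only place where care is genuinely needed, is the handling of the low-probability ``failure'' event where the true parameter escapes its confidence interval. In the regret proof this event contributes a bounded (constant) additive term because the per-round regret is uniformly bounded and $\delta = 1/N$ is chosen. For the subsidy, I must confirm that the subsidy itself is uniformly bounded regardless of whether $\btheta_g \in \EC_g(n)$, since the subsidy is defined purely through the estimated and UCB indices and does not reference the true parameter; indeed the subsidy width $\beta_g(n)\norm{\bx_i}_{\bar{\bV}_g^{-1}}$ is a deterministic function of the data and is controlled on all sample paths once the feature norms are bounded. The slight subtlety is that $\norm{\bx_i}$ is Gaussian and hence unbounded, so to make the pathwise argument clean I would either truncate at a high-probability norm bound $L$ (as the $\lambda \ge \max(1,L^2)$ hypothesis suggests) or absorb the Gaussian tail into the expectation, exactly mirroring how Theorem~\ref{thm:ucb} treats the characteristic norms. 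With that bookkeeping matched to the regret proof, the stated bound follows directly.
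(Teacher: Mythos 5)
Your proposal is correct and follows essentially the same route as the paper: the paper likewise writes the per-round subsidy as $\bx_{\iota(n)}'(\tbtheta_{\iota} - \hbtheta_{g(\iota)})$, bounds it by $\norm{\bx_{\iota}}_{\bar{\bV}_{g(\iota)}^{-1}}\beta_N$ using the fact that $\tbtheta_\iota$ lies in the confidence set by construction (no containment event for the true parameter needed), and then observes this is exactly the per-round quantity bounded in the regret proof of Theorem~\ref{thm:ucb}, so the identical summation bound and constant apply. Your extra care about the support-function equality and the Gaussian tail of $\norm{\bx_i}$ matches the paper's handling in its common-inequalities section.
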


\paragraph*{Proof.}
See Appendix~\ref{subsec: ucb index subsidy rule}.

\begin{remark}
The UCB index subsidy rule is an \emph{index policy} in the sense that the subsidy amount is independent of the information about rejected workers. The UCB index subsidy rule demands the smallest budget among all index policies implementing the UCB decision rule. In Appendix~\ref{sec: appendix subsidy rule design}, we consider a non-index subsidy rule that implements the UCB decision rule with a smaller budget.
\end{remark}

\section{The Hybrid Mechanism}\label{sec: Hybrid Decision Rule}

Although the UCB mechanism effectively prevents perpetual underestimation and achieves sublinear regret in general environments, it has one drawback: it continues subsidies in perpetuity. Even for a large $n$, there remains a gap between estimated skill $\hat{q}_i(n)$ and the UCB index $\tilde{q}_i(n)$. This is undesirable for several reasons. First, introducing a permanent policy is often more politically difficult than introducing a temporary policy. 
Second, a long-term distribution of subsidies tends to increase the required budget.
Third, in addition to the subsidy itself, the permanent allocation of the subsidy features (unmodeled) administrative costs. 

To overcome these limitations, we propose the \emph{hybrid mechanism}, which initially uses the UCB mechanism but switches to laissez-faire by terminating the subsidy at some point. We abandon the UCB phase upon receiving sufficient minority-group data to induce spontaneous exploration. Similar to the UCB mechanism, our hybrid mechanism has $\tilO(\sqrt{N})$ regret. Furthermore, its expected total subsidy amount is $\tilO(1)$, while the UCB mechanism needs $\tilO(\sqrt{N})$ subsidy.

The construction of the hybrid mechanism is as follows. Let $\sui_i(n) = \tilde{q}_i(n) - \hat{q}_i(n)$ be the size of the confidence bound. Note that, $\sui_i(n)$ corresponds to the amount of the subsidy allocated by the UCB index subsidy rule (Definition~\ref{defn: UCB index subsidy rule}).
The \emph{hybrid index} $\Uimp_i$ is defined as
\begin{equation}\label{eq: new UCB decision rule}
\Uimp_i(n;h(n)) \coloneqq
\begin{cases} 
\tilde{q}_i(n;h(n)) &\mbox{if } \sui_i(n;h(n)) > \ahyb ||\hbtheta_{g(i)}(n;h(n))||, \\
\hat{q}_i(n;h(n)) & \mbox{otherwise},  
\end{cases}
\end{equation}
where $a \ge 0$ is the mechanism's parameter.

The hybrid index is literally a ``hybrid'' of estimated skill $\hat{q}_i(n)$ and the UCB index $\tilde{q}_i(n)$. If the difference between the UCB index and estimated skill surpasses the threshold (i.e., $\sui_i(n) > \ahyb ||\hbtheta_{g(i)}(n)||$), then the hybrid index is equal to the UCB index $\tilde{q}_i(n)$. The confidence bound $|\tilde{q}_i(n) - \hat{q}_i(n)|$ is large when society has insufficient knowledge about group $g(i)$, which is typically the case during early stages of the game. Once this gap falls below the threshold (i.e., $\sui_i(n) \le \ahyb ||\hbtheta_{g(i)}(n)||$), then the hybrid index switches to the estimated skill $\hat{q}_i(n)$.

The hybrid decision rule is defined as the rule that hires the greatest hybrid index.

\begin{definition}[Hybrid Decision Rule]
The \emph{hybrid decision rule} selects the worker who has the greatest hybrid index; i.e.,
\begin{equation}
    \ih(n;h(n)) = \argmax_{i\in I(n)} \Uimp_i(n;h(n)).
\end{equation}
\end{definition}

Since the hybrid decision rule is a hybrid of the UCB decision rule and the laissez-faire decision rule, it can be implemented by mixing the laissez-faire subsidy rule and the UCB index subsidy rule.

\begin{definition}[Hybrid Index Subsidy Rule]
Let $\sui_i$ be the UCB index subsidy rule. The \emph{hybrid index subsidy rule} $\shi$ is defined by
\begin{equation}
    \shi_i(n;h(n)) \coloneqq
    \begin{cases} 
        \sui_i(n;h(n)) &\mbox{if } \sui_i(n;h(n)) > \ahyb ||\hbtheta_{g(i)}(n;h(n))||, \\
        0 & \mbox{otherwise}.
    \end{cases} %
\end{equation}
\end{definition}

The following theorems characterize the regret and the total subsidies associated with the hybrid mechanism.
\begin{thm}[Performance of the Hybrid Mechanism]
\label{thm:ucbimp}
Suppose Assumptions \ref{assp:twogroups}, \ref{assp:idcontext}, and \ref{assp:normalcontext}. Then, by choosing $\delta=1/N$, regret associated with the hybrid decision rule $\ih$ is bounded as
\begin{equation}
\Ex[\Regret^{\text{H}}(N)] = 
\tilO(\sqrt{N}).
\end{equation}
Furthermore, for any $a > 0$, the total amount of the subsidy under the hybrid index subsidy rule ($\Subsid^{\text{H-I}}$) is bounded as
\begin{align}
    \Ex[\Subsid^{\text{H-I}}(N)] 
    = \tilTheta(1).
\end{align}
\end{thm}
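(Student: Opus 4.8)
The plan is to condition throughout on the high-probability ``good event'' $\EE$ that the true parameter lies in every confidence interval, i.e.\ $\btheta \in \EC_g(n)$ for $g \in \{1,2\}$ and all $n$. By the martingale confidence-bound result of \citet{abbasi2011} underlying Definition~\ref{def:confinterval} and the choice $\delta = 1/N$, we have $\Prob(\EE^c) = \tilO(1/N)$; since the expected per-round regret and per-round subsidy are each $\tilO(1)$ and the total subsidy is deterministically $\tilO(N)$, the contribution of $\EE^c$ to both $\Ex[\Regret^{\text{H}}(N)]$ and $\Ex[\Subsid^{\text{H-I}}(N)]$ is $\tilO(1)$. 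On $\EE$ I will repeatedly use the sandwiching inequalities $|q_i - \hat q_i(n)| \le \sui_i(n)$ and $q_i \le \tilde q_i(n)$ together with $\hat q_i(n) \le \Uimp_i(n) \le \tilde q_i(n)$, and I write $\beta_g(n)$ for the radius of $\EC_g(n)$ (the right-hand side of the inequality in Definition~\ref{def:confinterval}), so that $\sui_i(n) = \beta_{g(i)}(n)\norm{\bx_i}_{\bar{\bV}_{g(i)}(n)^{-1}}$.

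For the regret bound, writing $i^\ast = \argmax_{i} q_i$ and using $\Uimp_{\ih(n)} \ge \Uimp_{i^\ast}$, the three inequalities above combine to the clean per-round bound $\max_{i} q_i - q_{\ih(n)} \le \sui_{i^\ast}(n) + 2\,\sui_{\ih(n)}(n)$. The second term involves only the hired candidate, so $\sum_n \sui_{\ih(n)}(n) = \tilO(\sqrt N)$ by the elliptical-potential lemma and Cauchy--Schwarz, exactly as in the proof of Theorem~\ref{thm:ucb}. The first term, the confidence width of the possibly unhired optimal candidate, is not controlled by that potential and is the crux. Here I would import the spontaneous-exploration analysis behind Theorem~\ref{thm_smlcand}: the threshold in \eqref{eq: new UCB decision rule} is calibrated so that when the subsidy is withdrawn from a group its estimation error has become small relative to the signal $\sigma_x\norm{\hbtheta_{g}}$; because characteristics are diverse (Assumption~\ref{assp:normalcontext}) and the groups symmetric (Assumptions~\ref{assp:twogroups}--\ref{assp:idcontext}), the minority candidate then still beats the majority pool with a probability bounded away from zero, so $\bar{\bV}_2(n)$ keeps growing at a constant rate and $\sui_{i^\ast}(n)$ shrinks at the $\tilO(1/\sqrt n)$ rate of Theorem~\ref{thm_smlcand}; summing over $n$ yields $\tilO(\sqrt N)$, hence $C_{\text{hyb}} = \tilO(1)$.

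For the subsidy bound I would exploit that $\shi_{\ih(n)}(n)$ is nonzero only on the ``UCB-mode'' rounds, so $\Subsid^{\text{H-I}}(N) = \sum_{g}\sum_{n \in \ES_g}\sui_{\ih(n)}(n)$, where $\ES_g$ collects the rounds on which a group-$g$ worker is hired and $\sui_{\ih(n)}(n) > a\sigma_x\norm{\hbtheta_g(n)}$. The key step is to bound $|\ES_g|$ by a polylogarithmic quantity. On $\EE$, once a group has accrued the constant number of samples needed for $\hbtheta_g(n)$ to concentrate near $\btheta \neq \bZero$ (guaranteed by the initial sampling phase and by the fact that the UCB-mode rounds themselves hire that group), we have $\norm{\hbtheta_g(n)} \ge \theta_0 > 0$, while $\beta_g(n) = \tilO(\sqrt{\log N})$; the defining inequality of $\ES_g$ then forces $\norm{\bx_{\ih(n)}}^2_{\bar{\bV}_g(n)^{-1}} \ge c/\log N$ for a constant $c > 0$. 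Substituting this lower bound into the elliptical-potential inequality, which bounds $\sum_{n:\ih(n)\in g}\min(1,\norm{\bx_{\ih(n)}}^2_{\bar{\bV}_g(n)^{-1}})$ by $\tilO(\log N)$, yields $|\ES_g| = \tilO(\log^2 N) = \tilO(1)$.

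A single Cauchy--Schwarz step then finishes the subsidy bound: with $\beta_{\max} = \tilO(\sqrt{\log N})$ bounding $\beta_g(n)$, we get $\sum_{n \in \ES_g}\sui_{\ih(n)}(n) \le \beta_{\max}\sqrt{|\ES_g|}\,\big(\sum_{n \in \ES_g}\norm{\bx_{\ih(n)}}^2_{\bar{\bV}_g(n)^{-1}}\big)^{1/2} = \tilO(\sqrt{\log N})\cdot\tilO(1)\cdot\tilO(\sqrt{\log N}) = \tilO(1)$, and summing over the two groups and adding the $\EE^c$ term gives $\Ex[\Subsid^{\text{H-I}}(N)] \le C_{\text{hyb-sub}} = \tilO(1)$; the constant blows up as $a \to 0$, consistent with the qualifier ``for any $a > 0$.'' I expect the main obstacle to be the greedy-phase regret rather than the subsidy: one must make rigorous that withdrawing the subsidy at the $a\sigma_x\norm{\hbtheta_g}$ threshold does not re-create perpetual underestimation even for an unbalanced population (which plain laissez-faire cannot avoid, by Theorem~\ref{thm_reglower}), and this requires porting the delicate spontaneous-exploration estimates of Theorem~\ref{thm_smlcand} to the post-UCB regime while controlling the unhired optimal candidate's confidence width $\sui_{i^\ast}(n)$.
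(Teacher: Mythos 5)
Your per-round bound $\max_i q_i - q_{\ih(n)} \le \sui_{i^*}(n) + 2\,\sui_{\ih(n)}(n)$ is valid on the good event, and your treatment of the hired-candidate term via the elliptical potential is exactly the Theorem~\ref{thm:ucb} argument, as you say. The genuine gap is the term $\sum_n \sui_{i^*}(n)$, and the route you propose for it would not go through. Because $i^*(n)$ lands in each group at a constant rate regardless of what the mechanism does, keeping $i^*$ in the bound forces you to prove $\lambdamin(\bar{\bV}_g(n)) = \tilOmega(n)$ for \emph{both} groups with high probability, i.e.\ an absolute lower bound on each group's hiring rate under the hybrid rule, including with an unbalanced population. ``Importing the spontaneous-exploration analysis behind Theorem~\ref{thm_smlcand}'' does not deliver this: that analysis is built for $K_1 = K_2 = 1$ (its exploration event --- the single majority candidate's estimated skill falling below zero while the minority's lies above --- has probability decaying exponentially in $K_1$), it says nothing about rounds in which majority candidates still carry UCB bonuses, and it does not explain why the greedy phase entered after the subsidy cut cannot reproduce the trap of Theorem~\ref{thm_reglower}. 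The paper's proof is structured precisely so that no hiring-rate floor is ever needed: it replaces $i^*$ by the would-be UCB hire $\tiliota(n) = \argmax_{i} \bx_i'\tbtheta_i$ (via $q_{i^*} \le \bx_{i^*}'\tbtheta_{i^*} \le \bx_{\tiliota}'\tbtheta_{\tiliota}$), so that a slot enters the regret bound only at rounds when it is actually hired or UCB-preferred, and then proves the proportionality lemma (Lemma~\ref{lem:parity}): at such rounds, with proportional frequency at least $e^{-a^2/2}$, the slot is hired \emph{with the subsidy stripped off} (event $\EX_i'$), because the Gaussian context exceeds the required threshold by the extra margin $a\sigma_x\norm{\hbtheta_{g(i)}}$ at only a constant probability cost. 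Combined with Lemma~\ref{lem:normdiversity} and matrix Azuma (Lemma~\ref{lem:matrixazuma}), this makes $\lambdamin(\bar{\bV}_{g(i)})$ grow in proportion to the number of times slot $i$ appears in the bound --- a self-normalized statement. This substitution-plus-proportionality device is the missing idea; without it, your sentence ``the minority candidate then still beats the majority pool with a probability bounded away from zero, so $\bar{\bV}_2(n)$ keeps growing at a constant rate'' is an assertion of exactly the hard step, which you yourself flag as the main obstacle.

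Your subsidy argument, by contrast, is a genuinely different route from the paper's and is mostly sound: the paper again uses proportionality to show $\lambdamin(\bar{\bV}_g)$ grows linearly, so the subsidy-active condition $\sui_i(n) > a\sigma_x\norm{\hbtheta_{g(i)}(n)}$ can hold only while $\lambdamin(\bar{\bV}_{g(i)})$ is below a $\tilO(1)$ cutoff, i.e.\ in at most $\tilO(1)$ selections; you instead observe that each subsidized hire forces $\norm{\bx_{\ih(n)}}^2_{(\bar{\bV}_g(n))^{-1}} \gtrsim 1/\log N$ and cap the number of such rounds by the elliptical potential (Lemma~\ref{lem:contextsqsum}), which bypasses Lemma~\ref{lem:parity} entirely for this half. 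Two caveats: your premise $\norm{\hbtheta_g(n)} \ge \theta_0$ needs $\lambdamin(\bar{\bV}_g(n)) = \tilOmega(\log N)$, not a ``constant number of samples'' --- concentration of $\hbtheta_g$ runs through eigenvalue growth, not sample counts, so the pre-concentration rounds still require the diversity/matrix-concentration machinery (or a separate count) --- and the subsidies paid in those early rounds must be added in. Both are fixable with the tools already in the paper, so the subsidy half is a legitimate alternative; the regret half is where the proposal is genuinely incomplete.
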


\paragraph*{Proof.}
See Appendix \ref{subsec:ucbimp}.

Theorem \ref{thm:ucbimp} states that (i) the order of the regret under the hybrid decision rule is the same as the original UCB, and (ii) the subsidy amount is reduced to $\tilO(1)$ (with respect to $N$). This is a substantial improvement from the UCB mechanism, which requires the $\tilO(\sqrt{N})$ subsidy. 

The threshold for switching from the UCB mechanism to laissez-faire is crucial for guaranteeing the performance of the hybrid mechanism. Our threshold, $\ahyb ||\hbtheta(n)||$, is determined such that the hybrid decision rule $\ih$ satisfies \emph{proportionality}, a new concept that this paper establishes.
We prove that the amount of exploration exerted by the hybrid decision rule is proportional to the UCB decision rule.
This property guarantees that the hybrid rule resolves underestimation and secures the expected regret of $\tilO(\sqrt{N})$. The formal statement of the proportionality appears in Lemma~\ref{lem:parity} in Appendix~\ref{subsec:ucbimp}. 

\section{Interviews and the Rooney Rule}\label{sec: Rooney Rule}

Although subsidy rules effectively resolve statistical discrimination, they are often difficult to implement in practice. This section articulates the advantages and disadvantages of the \emph{Rooney Rule}, a regulation that requires each firm to invite at least one candidate from each group to an on-site interview. The Rooney Rule is easier to implement because it requires neither a subsidy nor meeting a hiring quota.

To incorporate the additional information firms acquire through the interview, we modify the model as follows. In the modified model, each round $n$ comprises two stages. 
At the first stage, firm $n$ observes the characteristics $\bx_i$  of each arriving agent $i \in I(n)$. Based on $\bx_i$, firm $n$ selects a shortlist of \emph{finalists} $I^F(n) \subseteq I(n)$, where $|I^F(n)| = K^F$ for some constant $K^F\in \Natural$. At the second stage, by interviewing finalists, firm $n$ observes an additional signal $\eta_i$ for each finalist $i$ \citep[as assumed in][]{DBLP:conf/innovations/KleinbergR18}. Firm $n$ predicts each finalist $i$'s skill from the characteristics $\bx_i$ and the additional signal $\eta_i$, and hires one worker from the set of finalists, $\iota(n) \in I^F(n)$. Firms are not allowed to hire a worker not selected as a finalist. After the firm's decision, the skill of the hired worker $y_{\iota(n)}$ is publicly disclosed.

We assume the following linear relationship between skill $y_i$ and observable variables $\bx_i$:
$ 
y_i = \bx_i' \btheta_{g(i)} + \eta_i + \eps_i
$ %
The ``noise'' term comprises two variables: $\eta_i$ and $\eps_i$. $\eta_i$ is revealed as an additional signal when the firm chooses $i$ as a finalist. However, $\eps_i$ remains unpredictable even after the interview.
For analytical tractability, we make the following two assumptions.

\begin{assp}[Two Finalists]\label{assp: two finalists}
Each firm can invite only two finalists; i.e., $K^F = 2$.
\end{assp}%
\begin{assp}[Normal Additional Signals]\label{assp:normaleta}
Each additional signal that a finalist reveals follows a normal distribution, $\eta_i \sim \Normal(0, \sigma_\eta^2)$, i.i.d.
\end{assp}

\begin{remark}
If $\sigma_\eta = 0$, then the two-stage model is equivalent to the one-stage model that we have considered in the previous sections.
\end{remark}

\subsection{Failure of Laissez-Faire in the Two-Stage Model}

This subsection analyzes the performance of laissez-faire in this two-stage setting. The result is analogous to the one-stage case (Theorem~\ref{thm_reglower}): laissez-faire often falls in perpetual underestimation, and therefore, has linear regret.

First, we define regret. As in the one-stage model, the benchmark is the first-best decision rule, which is the rule firms would apply if the coefficient parameter $\btheta$ were known. Clearly, the first-best decision rule would greedily invite top-$K^F$ workers in terms of $q_i$ to the final interview. We denote this set of finalists chosen by the first-best decision rule in round $n$ by $\bar{I}^F(n)$. Formally, $\bar{I}^F(n)$ is obtained by solving the following problem:
\begin{equation}\label{ineq:lf_true_finalists}
    \bar{I}^F(n) = \argmax_{I'\subseteq I(n)}\sum_{i\in I'}q_i\hspace{1em}\text{s.t. }|I'| = K^F.
\end{equation}
After that, the first-best decision rule would observe the realization of $\eta_i$ for $i\in \bar{I}^F(n)$, and then hire the worker $i$ who has the greatest skill predictor: $q_i + \eta_i$. \emph{Unconstrained two-stage regret} (U2S-Reg) is defined as the loss compared with this first-best decision rule. (This type of regret is named ``unconstrained'' because we later introduce an alternative definition.)

\begin{definition}[Unconstrained Two-Stage Regret]
In the two-stage hiring model, the \emph{unconstrained two-stage regret} $\URegret$ of decision rule $\iota$ is defined as follows:
\begin{align}\label{eq: regret two-stage}
\URegret(N) &= \sum_{n=1}^N \left\{\max_{i \in \bar{I}^F(n)} \left(q_i  + \eta_i \right) - \left( q_{\iota(n)} + \eta_{\iota(n)} \right)\right\}.
\end{align}
\end{definition}

Under laissez-faire, the optimal strategy of firm $n$ is to choose candidates greedily based on their estimated skills, i.e.,
\begin{equation}
    I^F(n) = \argmax_{I'\subseteq I(n)}\sum_{i\in I'}\hat{q}_i(n)\hspace{1em}\text{s.t. }|I'| = K^F.
\end{equation}
After observing the realization of the additional signals $\eta_i$, firm $n$ selects the candidate who has the greatest estimated skill: $\iota(n) = \argmax_{i \in I^F(n)}\left\{\hat{q}_i(n) + \eta_i\right\}$.

Even in the two-stage model, laissez-faire has linear regret when the population ratio is unbalanced.

\begin{thm}[Failure of Laissez-Faire in the Two-Stage model]\label{thm_2SLF}
Suppose Assumptions \ref{assp:twogroups}, \ref{assp:idcontext}, \ref{assp:normalcontext}, \ref{assp: two finalists}, and \ref{assp:normaleta}.
Suppose also that $K_2 = 1$ and $d=1$. Let $K_1 - \log_2(K_1 + 1) > \log_2 N$. 
Then, under the laissez-faire decision rule, group $2$ is perpetually underestimated with the probability $\tilOmega(1)$.
Accordingly, the expected regret associated with the laissez-faire decision rule is
\begin{equation}
    \mathbb{E}\left[\URegret^{\mathrm{LF}}(N)\right] = \tilOmega(N).
\end{equation}
\end{thm}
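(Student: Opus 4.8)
The plan is to mirror the proof of Theorem~\ref{thm_reglower} for the one-stage model, exploiting the fact that in the two-stage model the interview signals $\eta_i$ are irrelevant to the underestimation event. The key observation is that finalist selection happens at the first stage, \emph{before} any $\eta_i$ is revealed, and is based purely on the estimated skills $\hat{q}_i(n)=x_i\htheta_{g(i)}(n)$ (recall $d=1$). Hence, if in every round at least $K^F=2$ majority candidates have a larger estimated skill than the single minority candidate $i_2(n)$, then the minority is never placed on the shortlist $I^F(n)$, is never interviewed, and is never hired. In that case group $2$'s data set is never augmented beyond the initial sampling phase, so $\htheta_2$ is frozen and group $2$ is perpetually underestimated (Definition~\ref{def:perpunderrepr}). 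It therefore suffices to (a) exhibit a $\tilOmega(1)$-probability event on which the minority is permanently excluded from every finalist set, and (b) argue that perpetual underestimation forces $\tilOmega(N)$ expected regret.

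For (a) I would use the same two-event decomposition as in Theorem~\ref{thm_reglower}. Event (i): after the initial sampling phase the minority estimate is underestimated, say $\htheta_2\le\theta/2$; a bad realization of the initial-sampling noise produces this with probability $\tilOmega(1)$ (independent of $N$ up to the usual polylogarithmic slack). Event (ii): in every round $n=\Nzero+1,\dots,N$, at least two of the $K_1$ majority candidates satisfy $x_{i_1}\htheta_1(n)>x_{i_2(n)}\htheta_2(n)$. On event (i) the minority's estimated skill $x_{i_2(n)}\htheta_2$ is small (near $0$, since $\htheta_2$ is small and $x_{i_2(n)}$ concentrates around $\mu_x$), while $\htheta_1(n)>0$ stays bounded away from $0$ because the majority is hired every round and accumulates data; consequently a single majority candidate fails to beat the minority only with probability $q\approx\Phi^c(\mu_x/\sigma_x)<1/2$. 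These two events are self-consistent: as long as the minority is excluded it is never hired, so $\htheta_2$ remains frozen at its underestimated value and event (i) persists automatically.

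The quantitative heart is bounding the per-round exclusion-failure probability and chaining it across rounds. In a given round, the probability that fewer than two majority candidates beat the minority (so the minority slips into the finalist set) is the probability of at most one ``success'' among $K_1$ independent Bernoulli$(1-q)$ trials, namely $q^{K_1}+K_1(1-q)q^{K_1-1}\le(K_1+1)q^{K_1-1}$; call this bound $\rho$. Because the round-$n$ candidates are drawn fresh and i.i.d., and because $\htheta_2$ is frozen while $\htheta_1$ only sharpens around a positive value as majority data accumulate, the conditional probability of exclusion in round $n$ given any history consistent with ``minority never hired so far'' is at least $1-\rho$. Telescoping over the $N-\Nzero$ rounds gives a persistence probability of at least $(1-\rho)^{N-\Nzero}\ge 1-(N-\Nzero)\rho$, which is $\tilOmega(1)$ exactly when $N\rho=\tilO(1)$. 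Taking $q\le 1/2$ and taking logarithms, $N(K_1+1)q^{K_1-1}=\tilO(1)$ reduces to $K_1-\log_2(K_1+1)>\log_2 N$, which is precisely the hypothesis; the extra $-\log_2(K_1+1)$ term, absent in the one-stage bound $K_1>\log_2 N$, is exactly the cost of requiring $K^F=2$ (rather than one) strong majority candidates per round. Multiplying by $\Prob(\text{event (i)})=\tilOmega(1)$ yields $\Prob(\text{perpetual underestimation})=\tilOmega(1)$.

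The step I expect to be the main obstacle is making the chaining rigorous despite the adaptive, selection-biased evolution of $\htheta_1$: the majority sample is built from the \emph{hired} majority worker each round (the maximizer of $\hat{q}_i+\eta_i$ among the majority finalists), so the per-round events are not independent and $\htheta_1$ is a biased estimator. I would control this by conditioning on event (i) (which freezes $\htheta_2$) and proving a uniform lower bound on the one-step exclusion probability that holds for \emph{every} realization of the majority history: since exclusion only requires two majority candidates with $x_{i_1}\htheta_1(n)>x_{i_2(n)}\htheta_2$ and the minority's threshold is small, it is enough that $\htheta_1(n)$ stays positive and bounded away from $0$, which I would secure on a further $\tilOmega(1)$-probability ``good majority'' event. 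Finally, for (b): under the symmetric-groups Assumption~\ref{assp:idcontext} the minority candidate is, in a constant fraction of rounds, the first-best top choice (the maximizer of $q_i+\eta_i$ over $\bar{I}^F(n)$), so perpetual exclusion forgoes a positive expected skill gap in each such round; hence conditional on perpetual underestimation $\Ex[\URegret^{\mathrm{LF}}(N)]=\tilOmega(N)$, and combining with $\Prob(\text{perpetual underestimation})=\tilOmega(1)$ gives $\Ex[\URegret^{\mathrm{LF}}(N)]=\tilOmega(N)$.
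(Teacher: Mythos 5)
Your overall architecture is the same as the paper's: you mirror Theorem~\ref{thm_reglower}, observe that finalists are chosen from $\hat{q}_i$ alone before any $\eta_i$ is revealed, and demand that in every round at least \emph{two} majority candidates out-score the lone minority candidate, which is exactly what produces the $-\log_2(K_1+1)$ correction (this is the paper's event $\EQ''(n)$ and Lemma~\ref{lem:second order statistic}). The genuine gap is in your calibration of event (i). You peg the underestimation at the constant level $\htheta_2 \le \theta/2$, and then assert both that the minority's score $x_{i_2(n)}\htheta_2$ is ``near $0$'' and that a single majority candidate fails to beat it only with probability $q \approx \Phi^c(\mu_x/\sigma_x) \le 1/2$. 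Neither follows: under your event (i) the minority's score concentrates around $\mu_x\theta/2$, the same order as a typical majority score, not near zero. Moreover your $K_1$ Bernoulli trials are not independent---every comparison shares the same draw $x_{i_2(n)}$---so you must condition on $x_{i_2(n)}$, and the conditional failure probability exceeds $1/2$ whenever $x_{i_2(n)}\htheta_2/\htheta_1$ exceeds $\mu_x$. Since $\max_{n\le N} x_{i_2(n)}$ is of order $\mu_x + \sigma_x\sqrt{2\log N}$, with $\htheta_2$ a constant fraction of $\htheta_1$ the per-round probability that the minority cracks the top two is only \emph{polynomially} small in $K_1$ (it suffices that $x_{i_2(n)}$ exceed roughly $(\htheta_1/\htheta_2)(\mu_x+\sigma_x\sqrt{2\log K_1})$), not of order $2^{-K_1}$. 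With $K_1 \approx \log_2 N$ this makes $N\rho \to \infty$, your telescoped bound $(1-\rho)^{N-\Nzero}$ tends to $0$, and the argument proves nothing. This is not a cosmetic slip: conditional only on $\htheta_2\le\theta/2$, the minority really does reach the shortlist (and then gets hired with constant probability via $\eta$) after polynomially many rounds, so constant-level underestimation does \emph{not} persist---that is precisely the self-correction phenomenon behind Theorem~\ref{thm_smlcand}.

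The paper closes exactly this hole by letting the underestimation level depend on $N$: its event $\EP$ is $|\htheta_{2,\Nzero_2}| < (b/2)\theta$ with $b = O(1/\log N)$, which costs $\Pr[\EP] = \Omega(b) = \Omega(1/\log N)$---this is why the theorem claims only $\tilOmega(1)$ rather than $\Omega(1)$---and it decouples the per-round failure through a \emph{fixed} threshold $\frac{1}{2}\mu_x\theta$: event $\EP'(n)$ (the minority's score stays below the threshold; fails with probability $\exp(-\Omega(1/b))$ per round, i.e.\ $o(1/N)$, so a union bound over rounds closes) and event $\EQ''(n)$ (at least two majority scores lie above the same threshold; fails with probability $(K_1+1)2^{-K_1}$, which is where the hypothesis $K_1 - \log_2(K_1+1) > \log_2 N$ enters). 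Comparing both groups against a deterministic threshold, rather than against each other, also restores the independence your binomial computation needs. Your remaining ingredients are sound and match the paper: the uniform-in-$t$ control $\htheta_{1,t}\ge \theta/2$ of the adaptively built majority estimator (the paper's $\EQ$, via the martingale bound of Lemma~\ref{lem:abbasi}) and your part (b) deriving $\tilOmega(N)$ regret from perpetual exclusion. The fix you need is confined to recalibrating event (i) to the $b=O(1/\log N)$ level and introducing the fixed-threshold decoupling.
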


\paragraph*{Proof.}
See Appendix~\ref{subsec: proof of thm_2SLF}.

The proof idea of Theorem~\ref{thm_2SLF} is as follows. Under laissez-faire, each firm $n$ interviews the two finalists with the greatest estimated skills, $\hat{q}_i(n)$. If both finalists belong to the majority group, then minority candidates are never hired, regardless of the $\eta_i$ for each finalist. By evaluating the probability that both finalists are majority candidates, we derive the probability of perpetual underestimation.
Thus, even in a two-stage setting, the laissez-faire decision has linear regret under an imbalanced population.

\subsection{The Rooney Rule and Exploration}\label{subsec: rooney exploration}

Given laissez-faire does not mitigate perpetual underestimation, desirable policy intervention is necessary.

\begin{definition}[Rooney Rule]\label{def:rooney}
In the two-stage hiring model, the \emph{Rooney Rule} requires each firm $n$ to select at least one finalist from every group $g \in G$; i.e., for every $n$ and every $g\in G$, $I^F(n)$ must satisfy
\begin{equation}\label{ineq:const_rooney}
    \left|\left\{i\in I^F(n) \mid g(i) = g\right\}\right| \ge 1.
\end{equation}
\end{definition}

Under Assumption \ref{assp:twogroups} and \ref{assp: two finalists}, each firm interviews one majority candidate and one minority candidate. To analyze how the Rooney Rule resolves statistical discrimination, we introduce a weaker notion of regret, \emph{constrained two-stage regret}.

\begin{definition}[Constrained Two-Stage Regret]
In the two-stage hiring model, the \emph{constrained two-stage regret} ($\CRegret$) of decision rule $\iota$ is defined as follows:
\begin{align}\label{eq: regret two-stage constrained}
\CRegret(N) &= \sum_{n=1}^N \left\{\max_{i \in \breve{I}^F(n)} \left(q_i  + \eta_i \right) - \left( q_{\iota(n)} + \eta_{\iota(n)} \right)\right\},
\end{align}
where $\breve{I}^F(n)$ is given by
\begin{align}\label{ineq:rooneyfstbest constrained}
    &\breve{I}^F(n) = \argmax_{I'\subseteq I(n)} \sum_{i\in I}q_i\\
    \text{s.t. }&|I'| = K^F, \label{eq: Rooney const 1}\\
    &\forall g\in G, \ \left|\left\{i\in I' \mid g(i) = g\right\}\right| \ge 1. \label{eq: Rooney const 2}
\end{align}
\end{definition}

In plain words, $\breve{I}^F(n)$ is the best set of finalists who satisfy the constraint \eqref{ineq:const_rooney}. If Eq.~\eqref{ineq:const_rooney} is imposed as an ``exogenous constraint'' (rather than a policy), the first-best decision rule would interview $\breve{I}^F(n)$ to maximize social welfare.
Constrained regret enables us to identify whether the Rooney Rule prevents perpetual underestimation: if perpetual underestimation occurs under the Rooney Rule, then the constrained regret is linear in $N$.

Under the Rooney Rule, myopic firm $n$ greedily chooses candidates based on estimator $\hat{q}_i(n)$ subject to the following constraints:
\begin{align} \label{ineq:rooneyfstbest constrained estimated}
    &I^F(n) = \argmax_{I'\subseteq I(n)} \sum_{i\in I} \hat{q}_i(n) \hspace{1em}\text{s.t. }\eqref{eq: Rooney const 1}\text{ and }\eqref{eq: Rooney const 2}.
\end{align}
and $\iota(n) = \argmax_{i \in I^F(n)} \left\{\hat{q}_i(n) + \eta_i\right\}$.

The following theorem states that the Rooney Rule resolves underestimation.
\begin{thm}[Sublinear Constrained Regret under the Rooney Rule]\label{thm:rooney_main}
Suppose Assumptions \ref{assp:twogroups}, \ref{assp:idcontext}, \ref{assp:normalcontext}, \ref{assp: two finalists}, and \ref{assp:normaleta}. Then, regret under the Rooney Rule is bounded as 
\begin{equation}\label{eq: C2S regret result}
\mathbb{E}\left[\CRegret^{\mathrm{Rooney}}(N)\right] 
= \tilO(\sqrt{N}).
\end{equation}
\end{thm}

\paragraph*{Proof.}
See Appendix \ref{subsec:Rooney_main proof}.

In the proof of Theorem~\ref{thm:rooney_main}, we show that the factor of \eqref{eq: C2S regret result} exhibits an exponential dependency\footnote{See definition of $C_6$ in the proof.} on signal variance $\sigma_\eta$, which implies that a sufficiently large $\sigma_\eta$ is required for a reasonable bound.

\subsection{The Rooney Rule and Exploitation}\label{subsec: rooney exploitation}

Although the Rooney Rule prevents statistical discrimination (Theorem \ref{thm:rooney_main}), it may worsen social welfare in terms of the original unconstrained regret. 
The intuition is as follows.
An unbalanced population ratio produces a significant probability that more than one majority candidate is highly skilled. In that case, the \emph{true} predicted skill of the second-best majority candidate ($q_i$) is likely to be greater than that of the best minority candidate. This feature raises constant regret per round: when $\eta_i$ is normally distributed, any finalist has a positive probability of being hired. Hence, the skill level of all finalists matters, and therefore, firms prefer to interview top-$K^F$ candidates who have the greatest skill. The Rooney Rule prevents this outcome. This effect would present even when firms had perfect information about coefficients $\btheta$. Consequently, the loss from the constraint \eqref{ineq:const_rooney} is constant per round, and the Rooney Rule results in $\Omega(N)$ unconstrained regret for $N$ rounds.

\begin{thm}[Linear Unconstrained Regret under the Rooney Rule]\label{thm:Rooney Unconstrained Failure}
Suppose Assumptions \ref{assp:twogroups}, \ref{assp:idcontext}, \ref{assp:normalcontext}, \ref{assp: two finalists}, and \ref{assp:normaleta}. Then, regret under the Rooney Rule is bounded as 
\begin{equation}
\mathbb{E}\left[\URegret^{\mathrm{Rooney}}(N)\right] = \Omega(N).
\end{equation}
\end{thm}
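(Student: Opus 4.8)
The plan is to lower-bound the Rooney rule's regret by a purely structural, information-free quantity: the per-round gap between the unconstrained first-best and the \emph{constrained} first-best benchmark (the rule that would be optimal if $\btheta$ were known but the one-from-each-group quota were still imposed). Writing $\text{UFB}(n)=\max_{i\in\bar{I}^F(n)}(q_i+\eta_i)$ and $\text{CFB}(n)=\max_{i\in\breve{I}^F(n)}(q_i+\eta_i)$ for the realized benchmark values, I would reduce the theorem to two claims: $\Ex[\URegret^{\mathrm{Rooney}}(N)]\ge\sum_n\Ex[\text{UFB}(n)-\text{CFB}(n)]$, and $\Ex[\text{UFB}(n)-\text{CFB}(n)]\ge c$ for a constant $c>0$ independent of $N$. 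Since the gap is identically distributed across rounds, summing yields $\Omega(N)$.

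First I would establish the reduction, namely that the actual Rooney rule (which selects finalists using the noisy estimates $\hat{q}$) does no better \emph{in expectation} than the constrained first-best. The crucial observation is that $I^F(n)$ is chosen from $\hat{q}_i(n)=\bx_i'\hbtheta_{g(i)}(n)$, a function of the history and the current characteristics only; the current-round signals $\eta_i$ are revealed strictly after the finalists are fixed and are i.i.d.\ and independent of everything used in selection. Conditioning on the history and the realized characteristics (hence on both $q$ and $\hat{q}$), the Rooney finalist pair $(a,b)$---one majority, one minority---is deterministic, and $\eta_a,\eta_b$ are fresh Gaussians. Two monotonicity facts then apply: (i) hiring by the true index dominates hiring by $\hat{q}+\eta$ pointwise, so the Rooney value is at most $\max(q_a+\eta_a,q_b+\eta_b)$; and (ii) since $\Ex_\eta[\max(q_a+\eta_a,q_b+\eta_b)]$ is nondecreasing in each of $q_a,q_b$ and $\eta$ is independent of $q$, the valid pair maximizing it is exactly the top-$q$ majority together with the top-$q$ minority, which is $\breve{I}^F(n)$. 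Combining (i)--(ii) and taking expectations gives $\Ex[\text{Rooney}(n)]\le\Ex[\text{CFB}(n)]$, hence the reduction.

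Next I would prove the structural gap $\Ex[\text{UFB}(n)-\text{CFB}(n)]\ge c>0$. Under Assumptions~\ref{assp:idcontext} and \ref{assp:normalcontext} the scores $q_i=\bx_i'\btheta$ are i.i.d.\ $\Normal(\bmu_x'\btheta,\sigma_x^2\|\btheta\|^2)$ across all $K$ candidates (I take $\btheta\neq\bm{0}$, else the claim is vacuous), so $\bar{I}^F(n)$ and $\breve{I}^F(n)$ coincide exactly when the two largest scores fall in different groups. On the event $B$ that the two largest $q$'s are both majority---whose probability is the positive constant $\binom{K_1}{2}/\binom{K}{2}$ by exchangeability, provided $K_1\ge 2$ (equivalently $K\ge 3$; the case $K=2$ makes the quota vacuous)---the unconstrained rule interviews the top two majority candidates $M_1,M_2$, while the constrained rule must swap $M_2$ for the best minority $m^*$, with $q_{M_2}\ge q_{m^*}$ almost surely. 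A coupling/monotonicity argument using the full support of the Gaussian signals shows the conditional gap $\Ex_\eta[\max(q_{M_1}+\eta_{M_1},q_{M_2}+\eta_{M_2})-\max(q_{M_1}+\eta_{M_1},q_{m^*}+\eta_{m^*})]$ is strictly positive; restricting further to a positive-probability sub-event where $q_{M_2}-q_{m^*}$ exceeds a fixed margin and the signals make $M_2$ the hire pins this to a uniform constant $c>0$.

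The main obstacle is Step~(1): ruling out that estimation noise in finalist selection lets the Rooney rule beat the constrained benchmark in a given round (it can, by luck, in a single realization). The resolution is precisely the independence of the current-round $\eta_i$ from the finalist choice, together with the two monotonicity facts, which hold in expectation over the signals and survive averaging over histories; I would isolate these as a short lemma. The remaining work---evaluating the order-statistic gap and checking the degenerate cases ($K=2$ and $\btheta=\bm{0}$)---is routine. Summing the per-round bound over $n=1,\dots,N$ then gives $\Ex[\URegret^{\mathrm{Rooney}}(N)]\ge cN=\Omega(N)$.
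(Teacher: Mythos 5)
Your proposal is correct, and it follows essentially the route the paper intends: the paper omits the formal proof of Theorem~\ref{thm:Rooney Unconstrained Failure}, offering only the verbal sketch in Section~\ref{subsec: rooney exploitation} (a constant per-round loss from the quota on the constant-probability event that the two highest-$q$ candidates are both majority), and your two steps are exactly that sketch made rigorous. What you genuinely add is the reduction lemma --- that the Rooney rule cannot beat the \emph{constrained} first-best in expectation, proved via pointwise domination by the finalist maximum together with coordinate-wise monotonicity of $(q_a,q_b)\mapsto\Ex_\eta\left[\max(q_a+\eta_a,\,q_b+\eta_b)\right]$ and independence of the current-round signals from finalist selection --- plus the uniform order-statistic gap via the coupling $\eta_{M_2}=\eta_{m^*}$; the paper treats both as self-evident. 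You are also right to flag the non-degeneracy conditions ($K_1\ge 2$ and $\btheta\neq\bm{0}$) that the theorem statement leaves implicit: when $K_1=K_2=1$ the quota is vacuous, both candidates are interviewed every round, both groups accumulate data at a linear rate, and the unconstrained regret is in fact sublinear, so your caveat is a needed correction rather than mere pedantry.
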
%
The proof is straightforward from the argument above, and therefore, is omitted.

Although the laissez-faire and the Rooney Rule have linear unconstrained regret, these two results have different causes for the outcome in each case: laissez-faire produces linear regret due to underexploration, whereas the Rooney Rule produces linear regret due to underexploitation.
One way to resolve this is by combining the two. By starting with the Rooney Rule and abolishing it after obtaining sufficiently rich data, we could mitigate the approach's disadvantage. 
Section~\ref{sec: simulation} demonstrates the performance of such a mechanism.

\section{Simulation}\label{sec: simulation}
This section presents the outcomes of our simulations. 
Unless specified, model parameters are set as $d = \Simd, \btheta = \Simtheta, \bmu_x = (\Simmux, \dots, \Simmux), \sigma_x = \Simsigmax$, $\sigma_\eps = \Simsigmaeps$, $\lambda = \Simlambda$, and $N= 1,000$. Group sizes are set to be $(K_1,K_2) = (10, 2)$. The initial sample size is $\Nzero = K_1 + K_2$, and the sample size for each group is equal to its population ratio: $\Nzero_1 = K_1, \Nzero_2 = K_2$.
We draw $\Numrun$ paths independently for each simulation scenario.
The value of $\delta$ in the confidence bound is set to $\Simdelta$. 

\subsection{The Effects of Population Ratio}

\begin{figure}[t!]
    \centering
    \begin{minipage}[t]{0.48\textwidth}
        \centering
         \includegraphics[width=\textwidth]{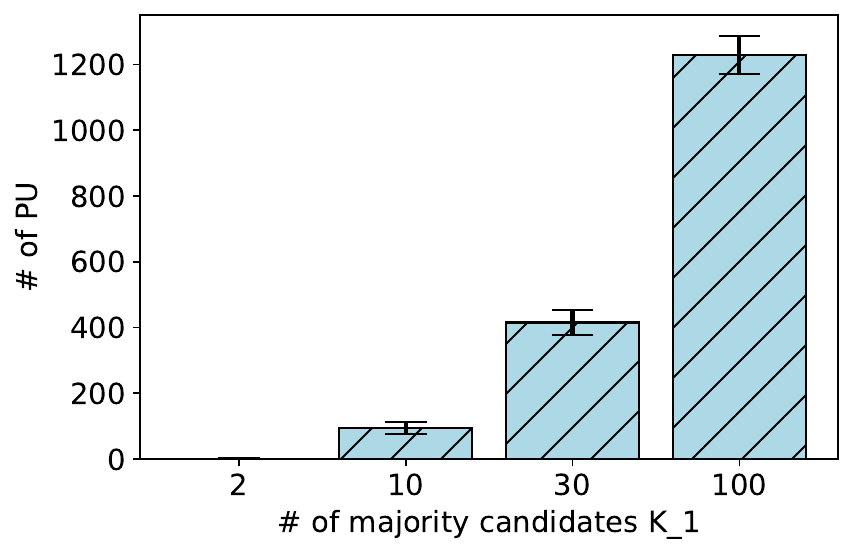}
         \caption{Frequency of perpetual underestimation under laissez-faire.}
         \label{fig:groupsize_pu}
    \end{minipage}
    \hspace{0.02\textwidth}
    \begin{minipage}[t]{0.48\textwidth}
         \centering
         \includegraphics[width=\textwidth]{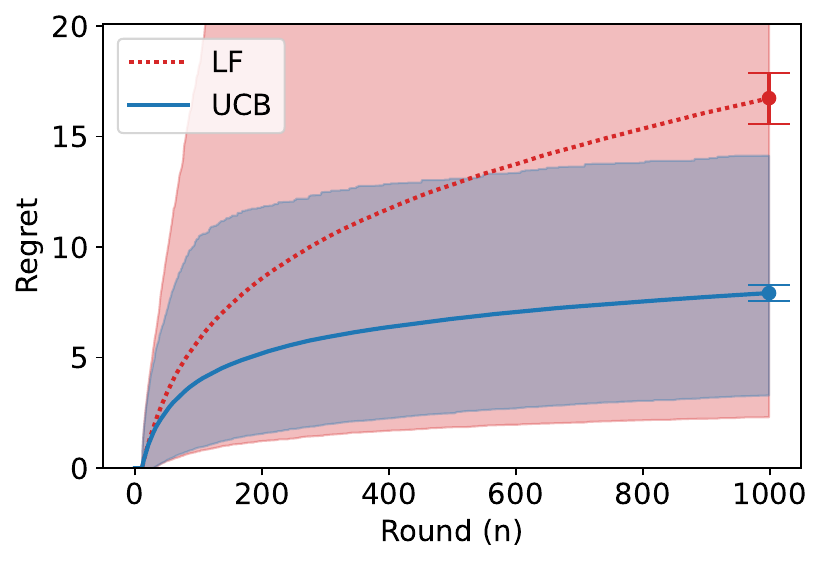}
         \caption{Regret under the LF and UCB decision rules.}
         \label{fig:policycomp_regret}
    \end{minipage}
    \raggedright
    \footnotesize
    \medskip
  
    \textbf{Left:} Across $\Numrun$ runs. The error bars represent the two-sigma binomial confidence intervals.
    
    \textbf{Right:} The lines are averages over sample paths, the areas cover between $5\%$ and $95\%$ percentiles of runs, and the error bars at $N = 1,000$ are the two-sigma confidence intervals.
\end{figure}

We test how the population ratio impacts the frequency of perpetual underestimation. The decision rule is fixed to laissez-faire (LF). We fix the number of minority candidates in each round to two (i.e., $K_2 = 2$) and vary the number of majority candidates ($K_1 = 2, 10, 30, 100$).

Figure~\ref{fig:groupsize_pu} exhibits the simulation result. Consistent with our theoretical analyses, we observe that (i) as indicated by Theorem~\ref{thm_smlcand}, laissez-faire rarely produces perpetual underestimation if the population is balanced (i.e., $K_1$ is close to $K_2 = 2$), and (ii) as indicated by Theorem~\ref{thm_reglower}, the larger the population of majority workers (i.e., $K_1$ increases), the more frequently perpetual underestimation occurs. 
With $K_1 = 10$, perpetual underestimation occurs more than 2\% of runs, which is large enough to ensure that laissez-faire produces (approximately) linear regret.

\subsection{Laissez-Faire vs the UCB Mechanism}

\begin{figure}[t!]
    \centering
    \begin{minipage}[t]{0.48\textwidth}
         \centering
         \includegraphics[width=\textwidth]{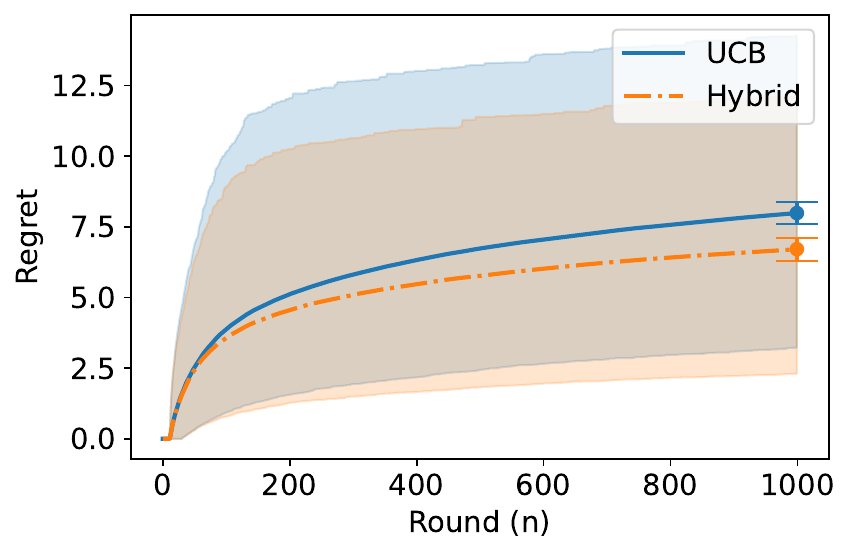}
         \caption{Regret under the UCB and hybrid decision rules.}
         \label{fig:iucb_regret}
    \end{minipage}
    \hspace{0.02\textwidth}
    \begin{minipage}[t]{0.48\textwidth}
         \centering
         \includegraphics[width=\textwidth]{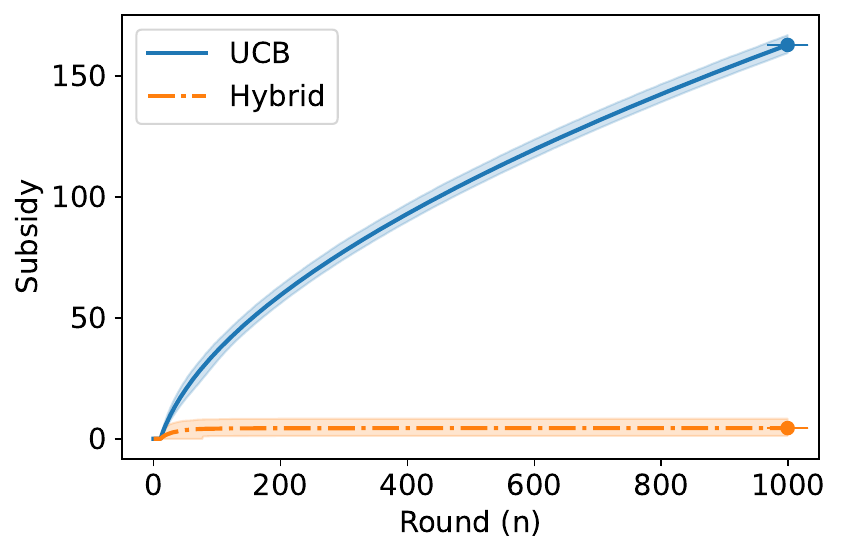}
         \caption{Budget required by the UCB and hybrid index subsidy rules.}
         \label{fig:iucb_subsidy}
    \end{minipage}
    \raggedright
    \footnotesize
    \medskip
    
    \textbf{Note:} The lines are averages over sample paths, the areas cover between $5\%$ and $95\%$ percentiles of runs, and the error bars at $N = 1,000$ are the two-sigma confidence intervals.
\end{figure}

Figure~\ref{fig:policycomp_regret} compares the regret associated with the laissez-faire (LF) decision rule and the UCB decision rule. As indicated by Theorem~\ref{thm_reglower}, our simulation shows that laissez-faire has a significant probability of underestimating the minority group. Consequently, laissez-faire sometimes causes perpetual underestimation, and regret grows (approximately) linearly to $n$. Furthermore, due to the possibility of perpetual underestimation, the confidence intervals of the sample paths (denoted by the red area) are very large, indicating the highly uncertain performance of laissez-faire. In contrast, consistent with Theorem~\ref{thm:ucb}, the UCB decision rule performs much more stably. Since the UCB rule avoids underexploration, it does not cause perpetual underestimation.

\subsection{The UCB Mechanism vs the Hybrid Mechanism}\label{subsec: simulation ucb vs hybrid}

Next, we compare the performance of the UCB and hybrid mechanisms. The parameter of the hybrid mechanism is set to be $a= \Sima$. Figure~\ref{fig:iucb_regret} shows the associated regret. 
As Theorems~\ref{thm:ucb} and \ref{thm:ucbimp} anticipated, the regret associated with the two decision rules are similar (these two decision rules have the same order: $\tilO(\sqrt{N})$). 
Figure~\ref{fig:iucb_subsidy} compares the subsidy rules. As Theorems~\ref{thm:ucb index subsidy scheme} and \ref{thm:ucbimp} predicted, the subsidy required for the UCB index rule grows at the rate of $\tilO(\sqrt{N})$, whereas the hybrid index subsidy rule only requires only a constant subsidy, implying that the policy intervention can be terminated at some point. Furthermore, the hybrid index subsidy rule requires a much smaller budget than the UCB index subsidy rule. To summarize, the hybrid mechanism produces similar regret as the UCB mechanism with a much smaller budget. 

In Appendix~\ref{subsec: simulation cost saving}, we demonstrate that while the budget required by UCB is improved substantially if the subsidy rule does not have to be an index policy, whereas its total subsidy cannot be bounded by a constant and requires a large subsidy in the long run.

\subsection{The Rooney Rule}

\begin{figure}[t!]
    \centering
    \begin{minipage}[t]{0.48\textwidth}
         \centering
         \includegraphics[width=\textwidth]{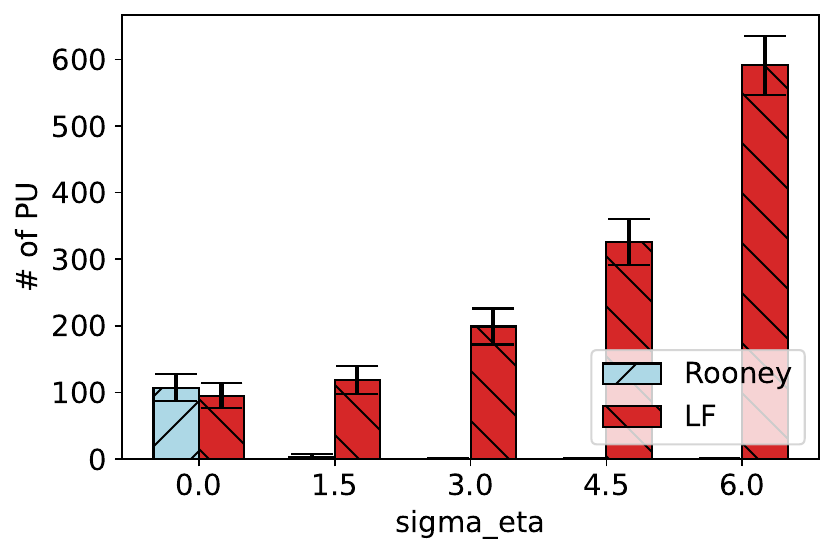}
    \caption{Frequency of perpetual underestimation in the two-stage model.}
    \label{fig:rooney_pu}
    \end{minipage}
    \hspace{0.02\textwidth}
    \begin{minipage}[t]{0.48\textwidth}
         \centering
         \includegraphics[width=\textwidth]{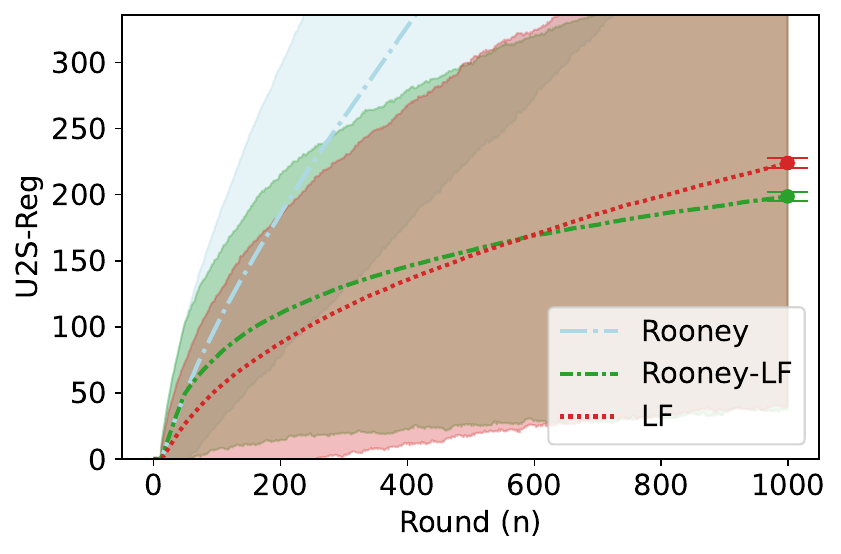}
        \caption{U2S-Reg under the LF, Rooney, and Rooney-LF decision rules.}
        \label{fig:rooney_largeg2_strong_regret}
    \end{minipage}
    \raggedright
    \footnotesize
    \medskip
    
    \textbf{Left:} Across $\Numrun$ runs. The error bars represent the two-sigma binomial confidence intervals.
    
    \textbf{Right:} The lines are averages over sample paths, the areas cover between $5\%$ and $95\%$ percentiles of runs, and the error bars at $N = 1,000$ are the two-sigma confidence intervals.
\end{figure}

This subsection compares the performance of the Rooney Rule with that of the laissez-faire decision rule. Figure~\ref{fig:rooney_pu} depicts the relationship between the frequency of perpetual underestimation and the informativeness of the signal obtained at the second stage (measured by $\sigma_\eta^2$, the variance of $\eta_i$) under both rules. When $\sigma_\eta^2$ is large, the Rooney Rule effectively resolves underestimation.

Figure~\ref{fig:rooney_largeg2_strong_regret} compares U2S-Reg associated with each rule. We set $\sigma_\eta = \Simsigmaeta$. While both rules produce linear regret, the Rooney Rule suffers from more regret due to underexploitation. This shortcoming can be overcome by using the Rooney Rule as a temporary policy. the ``Rooney-LF'' decision rule begins with the Rooney Rule and shifts to laissez-faire after $\Simrooneystop$ rounds. This approach achieves both less regret and fairer hiring.

\section{Conclusion}\label{sec: conclusion}

We have studied statistical discrimination using a contextual multi-armed bandit model.
Our dynamic model articulates how a failure of social learning produces statistical discrimination. In our model, the insufficiency of data about minority groups is endogenously generated. This data shortage prevents firms from accurately estimating the skill of minority candidates. 
Consequently, firms tend to prefer hiring majority candidates, leading the data sufficiency to persist. This form of statistical discrimination is not only unfair but also inefficient.
We have demonstrated that an unbalanced population ratio leads laissez-faire to tend toward perpetual underestimation, an unfair and inefficient consequence.

We analyzed two possible policy interventions. One is subsidy rules that incentivize firms to hire minority candidates. Our hybrid mechanism achieves $\tilO(\sqrt{N})$ regret with $\tilO(1)$ subsidy. Another intervention is the Rooney Rule, which requires firms to interview at least one minority candidate. Our result indicates that terminating the Rooney Rule at an appropriate point would resolve statistical discrimination while maintaining the social welfare level. These results contrast with some of the previous studies  \citep[e.g.,][]{Foster1992AnEA,coate1993will,moro_general_2004} demonstrating the possible counterproductivity of affirmative-action policies.

Our analyses of the two interventions provide a consistent policy implication: Affirmative actions effectively resolve statistical discrimination caused by data insufficiency, but such actions should be lifted upon acquiring sufficient information. Accordingly, a \emph{temporary} affirmative action constitutes the best approach to resolving statistical discrimination as a social learning failure.

\bibliographystyle{apalike}
\bibliography{manual.bib}

\appendix

\section{The Design of Subsidy Rules}\label{sec: appendix subsidy rule design}

\subsection{Pivot Subsidy Rules}

This subsection provides a rationale for focusing on the UCB and hybrid index subsidy rules. First, we define an \emph{index} and \emph{index policy} as follows.

\begin{definition}[Index]
A sequence of functions $Q = (Q_i)$ where $Q_i(n; \cdot): H(n) \to \Real$ is an \emph{index} if for all $n$ and $i\in I(n)$, $Q_i(n;\cdot)$ only depends on $\bX_{g(i)}(n)$, $Y_{g(i)}(n)$,  and $\bx_i$. A subsidy rule $s$ is an \emph{index policy} if $s$ is an index.
\end{definition}

In our study, we slightly modify the standard definition of an index policy often used in multi-armed bandit literature \citep{gittins1979}. The conventional definition demands that the index of an arm (in this context, a worker) is contingent only on the data generated by that particular arm. However, as we consider a group of arms as a collective entity, focusing on the data generated by a single arm isn't very meaningful. Hence, in our definition, we stipulate that the subsidy for worker $i$ should be unaffected by two factors: (i) the characteristics of other agents (workers) in the same round (i.e., $\bx_j$ for any $j\in I(n)\setminus{i}$), and (ii) the data pertaining to other groups (i.e., $\bX_{g'}(n)$ for any $g'\neq g(i)$). This modification accommodates our group-based approach to analyzing statistical discrimination in hiring decisions.

Having a subsidy rule as an index policy is practically beneficial. When determining the subsidy assigned to the employment of worker $i$, the government does not need to observe the characteristics of all other potential candidates in the pool $I(n)\setminus\{i\}$. This feature is particularly advantageous for real-world applications. In many instances, it is challenging for government entities to gain access to the data regarding the characteristics of candidates who were not selected for the job. Consequently, implementing a non-index policy, which would require such information, becomes extremely difficult.

The estimated skill $\hat{q}$, the UCB index $\tilde{q}$, and the hybrid index $\Uimp$ are indices. The UCB index subsidy rule and the hybrid index rule are index policies. Furthermore, they also belong to a class of \emph{pivot subsidy rules} that are defined as follows:
\begin{definition}
Given that a decision rule $\iota$ that maximizes an index $Q$, i.e., 
\begin{equation}
    \iota(n) = \argmax_{i \in I(n)} Q_i(n),
\end{equation}
a \emph{pivot subsidy rule} $s$ is specified by
\begin{equation}
    s_i(n) = Q_i(n) - \hat{q}_i(n).
\end{equation}
\end{definition}

Thus, the UCB index subsidy rule is obtained by substituting $Q_i(n) = \tilde{q}_i(n)$ and the hybrid index subsidy rule is obtained by substituting $Q_i(n) = \Uimp_i(n)$.

The following theorem states the optimality of the pivot subsidy rule. Among all index policies, the pivot subsidy rule requires the smallest subsidy amount under certain conditions.

\begin{thm}[Optimality of the Pivot Subsidy Rule]\label{thm: pivot subsidy rule}
Suppose that (i) a decision rule $\iota$ maximizes an index $Q$, (ii) $Q_i(n; h(n)) \ge \hat{q}_i(n; h(n))$ for all $i$, $n$, $h(n)$, and (iii) $\inf_{i, n, h(n)}Q_i(n; h(n)) = \hat{q}_i(n)$. Then, we have the following.
\begin{enumerate}[(a)]
    \item A pivot subsidy rule implements $\iota$.
    \item Let $s$ be a pivot subsidy rule and $s'$ be an arbitrary index policy that implements $\iota$. Then, for all $i$, $n$ and $h(n)$, we have
    \begin{equation}
        s_i(n;h(n)) \le s_i'(n; h(n)).
    \end{equation}
\end{enumerate}
\end{thm}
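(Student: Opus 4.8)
The plan is to handle the two claims separately. Part (a) is an immediate verification, while part (b) is the substantive claim. Its engine is the \emph{index} property of $s'$ (which lets me transport the value $s'_i(n;h(n))$ unchanged into a freshly constructed history, since $s'_i$ depends on the history only through $\bX_{g(i)}(n)$, $Y_{g(i)}(n)$, and $\bx_i$) together with condition (iii) (which lets me manufacture a competitor whose index sits arbitrarily close to its own point estimate, with a subsidy that I need not control beyond nonnegativity).

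For part (a): under the pivot rule $s_i(n) = Q_i(n) - \hat{q}_i(n)$, condition (ii) gives $s_i(n) \ge 0$, so it is a legitimate subsidy rule. Firm $n$'s objective becomes $\hat{q}_i(n) + s_i(n) = Q_i(n)$, so its maximizer coincides with $\argmax_{i\in I(n)} Q_i(n) = \iota(n)$; hence the pivot rule implements $\iota$ (breaking ties consistently with $\iota$).

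For part (b), fix a target triple $(i,n,h(n))$ and write $q^\star = Q_i(n;h(n))$ and $\hat{q}^\star = \hat{q}_i(n;h(n))$. For arbitrary $\eps > 0$ I build a round containing worker $i$ with the \emph{identical} group data $(\bX_{g(i)}, Y_{g(i)})$ and characteristics $\bx_i$ (so that $\hat{q}_i$, $Q_i$, and the index value $s'_i$ are all unchanged by portability) together with a competitor $j$ from a different group $g' \ne g(i)$ endowed with a large, well-conditioned dataset. Condition (iii) — concretely, the shrinking of $\EC_{g'}(n)$ as group $g'$'s data grows — drives $Q_j$ to within $\eps$ of $\hat{q}_j$, while the freedom to pick $\bx_j \in \Real^d$ lets me set $\hat{q}_j = \bx_j'\hbtheta_{g'}(n) = q^\star - 2\eps$. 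Then $Q_j \le \hat{q}_j + \eps = q^\star - \eps < q^\star = Q_i$, so $\iota$ selects $i$ over $j$ (I make all remaining candidates carry even smaller indices). Since $s'$ implements $\iota$, firm's optimum must also be $i$, whence $\hat{q}_i + s'_i \ge \hat{q}_j + s'_j$; using only $s'_j \ge 0$ this gives $s'_i \ge \hat{q}_j - \hat{q}_i = q^\star - 2\eps - \hat{q}^\star$. By portability $s'_i$ equals $s'_i(n;h(n))$, so letting $\eps \to 0$ yields $s'_i(n;h(n)) \ge q^\star - \hat{q}^\star = s_i(n;h(n))$.

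The main obstacle — and the step I would be most careful about — is the construction of the competitor $j$: I must be able to push $Q_j$ down to $\hat{q}_j$ \emph{and} independently tune $\hat{q}_j$ to my chosen target, which is exactly why $j$ must live in a separate group whose dataset can be enriched without perturbing group $g(i)$'s data (and hence without perturbing $\hat{q}_i$, $Q_i$, $s'_i$). The remaining care is bookkeeping: checking that the constructed history is a feasible element of $H(\cdot)$ for a suitable round, and handling ties so that the strict gap $Q_j < Q_i$ forces $\iota$ to pick $i$ uniquely, thereby delivering the clean implementation inequality $\hat{q}_i + s'_i \ge \hat{q}_j + s'_j$.
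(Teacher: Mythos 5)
Part (a) of your proposal is the same verification the paper gives. For part (b), however, you take a genuinely different route. The paper never stages a contest between worker $i$ and a manufactured competitor: it introduces the correspondence $\EU(Q;s')$ of attainable firm payoffs at each index value, observes that implementation of $\iota$ is equivalent to $\min\EU(Q_j;s') > \max\EU(Q_i;s')$ whenever $Q_j > Q_i$, uses a gap-filling argument to reduce to the case where the payoff is a single-valued function $u(Q)$ of the index, and then derives $u(Q)\ge Q$ from condition (iii) together with nonnegativity of the \emph{worker's own} subsidy along histories in which $\hat{q}$ approaches $Q$ \emph{at that same index value}. Your argument instead fixes $(i,n,h(n))$, uses the index (portability) property of $s'$ to hold $\hat{q}_i$, $Q_i$, $s'_i$ fixed while rebuilding the rest of the history, and extracts the bound from the obedience constraint against a competitor $j$ with $Q_j<Q_i$ and $\hat{q}_j\approx Q_i$, plus $s'_j\ge 0$. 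Your version is more elementary and avoids the paper's (rather sketchy) gap-filling step; the paper's version avoids ever needing a competitor at a strictly lower index value.

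That difference is exactly where your proof has a genuine gap. Your construction requires that, for the given value $Q^\star=Q_i(n;h(n))$, there exist workers in another group whose index is \emph{strictly} below $Q^\star$ while their estimated skill is arbitrarily close to $Q^\star$. Conditions (ii)--(iii) do not deliver this. For example, $Q_i=\max(\hat{q}_i,c)$ for a constant $c$ satisfies (ii) and (iii), yet for a worker with $\hat{q}_i<c$ (so $Q_i=c$ and the pivot subsidy $c-\hat{q}_i$ is strictly positive) \emph{no} candidate in any history has index below $c$, so the competitor you need does not exist; and replacing it with a tied competitor ($Q_j=c$) fails because ties are broken arbitrarily, so $\iota$ may select $j$, in which case obedience gives $\hat{q}_j+s'_j\ge\hat{q}_i+s'_i$ --- the wrong direction. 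This is precisely the regime in which the claimed inequality has content, and it is the regime your argument cannot reach, whereas the paper's same-index-value argument still applies there. A second, smaller point: at a fixed round $n$ the group sample sizes satisfy $\sum_g N_g(n)=n-1$, so you cannot literally ``enrich'' group $g'$'s data set within round $n$; you must either move to a later round (legitimate only if indices and index policies have no explicit dependence on $n$, which the definition leaves implicit) or, for the UCB index specifically, shrink $\norm{\bx_j}_{\bar{\bV}_{g'}^{-1}}$ by making the recorded characteristics extreme at fixed sample size. For the UCB and hybrid indices --- the cases the theorem is actually invoked for --- the required competitors do exist and your argument goes through; but as a proof of the abstract statement it needs this richness of the index as an explicit additional hypothesis.
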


Note that, the UCB index $\tilde{q}$ and the hybrid index $\Uimp$ satisfy Conditions (ii) and (iii). Accordingly, among all index policies that implement the same decision rule, the UCB index subsidy rule and the hybrid index subsidy rule require the smallest subsidy.

\paragraph*{Proof.}
(a) Since $\hat{q}_i(n) + s_i(n) = Q_i(n)$, firm $n$'s payoff from hiring worker $i$ is equal to $Q_i(n)$. Furthermore, since $Q_i(n) \ge \hat{q}_i(n)$, $s_i(n)\ge 0$ always holds. Accordingly, the pivot subsidy rule implements the targeted decision rule.

(b) For notational simplicity, we omit $n$, $\bX_{g}$, $Y_{g}$ from this proof. Define a correspondence $\EU$ by
\begin{equation}
    \EU(Q_i; s') \coloneqq \left\{u_i \in \Real \mid \exists i, \exists\bx_i \text{ s.t. } \hat{q}_i(\bx_i) + s_i'(\bx_i) = u_i, Q_i = Q_i(\bx_i)\right\}.
\end{equation}
The set $\EU(Q_i; s')$ represents the set all of firm $n$'s possible payoffs from hiring a worker with index $Q_i$, given that the subsidy rule $s'$ is used.

Clearly, subsidy rule $s'$ implements the decision rule $\iota$ if and only if for all distinct $i,j$, $Q_j > Q_i$ implies
\begin{equation}\label{eq: strong increasingness}
    \min \EU(Q_j; s') > \max \EU(Q_i; s').
\end{equation}
Since $\min \EU(\cdot ; s')$ is an increasing function, it is continuous at all but countably many points. Thus, $\EU(Q_i; s)$ is a singleton for almost all values of $Q_i$.

Now, suppose that $\EU(Q_i^*; s')$ is not a singleton for some $Q_i^* \in \Real_+$. Define $\Delta$ by
\begin{equation}
    \Delta \coloneqq \max\EU(Q_i^*; s') - \min\EU(Q_i^*; s').
\end{equation}
Define another subsidy rule $s''$ by setting
\begin{equation}
    s_i''(\bx_i) = \begin{cases}
    s_i'(\bx_i) & \ \text{if }Q_i(\bx_i) < Q_i^*,\\
    \min\EU(Q_i^*; s') - \hat{q}_i(\bx_i) & \ \text{if }Q_i(\bx_i) = Q_i^*,\\
    s_i'(\bx_i) - \Delta & \ \text{otherwise},
    \end{cases}
\end{equation}
for all $i$. Then, we have
\begin{equation}
    \EU(\tilde{q}_i; s'') = \begin{cases}
    \EU(\tilde{q}_i; s') & \ \text{if }Q_i < Q_i^*,\\
    \left\{\min\EU(Q_i^*; s')\right\} & \ \text{if }Q_i = Q_i^*,\\
    \EU(Q_i; s') - \Delta & \ \text{otherwise},
    \end{cases}
\end{equation}
which implies that $\EU(\cdot; s'')$ also satisfies \eqref{eq: strong increasingness}, or equivalently, $s''$ also implements the decision rule $\iota$. Furthermore, $s_i''(\bx_i) \le s_i'(\bx_i)$ for all $\bx_i$, with a strict inequality for some $\bx_i$. Accordingly, $s''$ needs a smaller budget than $s'$.

By the argument above, whenever $\EU(\cdot; s')$ does not return a singleton for some $Q_i$, the subsidy amount can be improved by filling a gap. Now, we discuss the case that $\EU(\cdot; s')$ returns a singleton for all $Q_i$; i.e., $\EU$ reduces to a function. We use $u(Q_i; s')$ to represent the firm's utility when it hires a worker with the index $Q_i$. We have
\begin{equation}
    s_i'(\bx_i) = u(Q_i(\bx_i) ;s') - \hat{q}_i(\bx_i)
\end{equation}
for all $\bx_i$. Since we require that $s_i'(\bx_i) \ge 0$ for all $\bx_i$,
\begin{equation}
    u(Q_i(\bx_i); s') - \hat{q}_i(\bx_i) \ge 0.
\end{equation}
After some history, $\hat{q}_i$ may become arbitrarily close to $Q_i$. Accordingly, $u$ must satisfy
\begin{equation}\label{eq: subsidy rule necessary condition}
    u(Q_i; s) \ge Q_i
\end{equation}
for all $q$. The pivot subsidy rule satisfies \eqref{eq: subsidy rule necessary condition} with equalities for all $q$: The UCB index subsidy rule satisfies $s_i = Q_i - \hat{q}_i$, and therefore, $u(Q_i; s) = Q_i$ for all $Q_i$. Accordingly, the pivot subsidy rule demands the minimum possible budget. \qed

\subsection{Cost-Saving Subsidy Rules}

If a subsidy rule need not be an index policy, then a decision rule can be implemented with a smaller budget. The \emph{cost-saving subsidy rule} provides a minimum subsidy to change the firm's hiring decision.

\begin{definition}[Cost-Saving Subsidy Rule]\label{defn: cost-saving}
Given an arbitrary decision rule $\iota$, a \emph{cost-saving subsidy rule} $s$ is specified by
\begin{equation}\label{eq: cost-saving subsidy amount}
    s_i(n;h(n)) = \begin{cases}
    \max_{j \in I(n)}\hat{q}_j(n;h(n)) - \hat{q}_i(n;h(n)) & \text{ if }i = \iota(n);\\
    0 & \text{ otherwise}.
    \end{cases}
\end{equation}
\end{definition}

The UCB and hybrid cost-saving subsidy rules are obtained by applying Definition~\ref{defn: cost-saving} to the UCB and hybrid decision rules.

A cost-saving subsidy rule subsidizes only the targeted worker, $\iota(n)$. Hence, for other workers $j \neq \iota(n)$, the payoff from the employment is $\hat{q}_j(n)$. The UCB cost-saving subsidy rule sets the subsidy amount $s_{\iota(n)}$ such that the payoff from hiring worker $\iota(n)$, which is $\hat{q}_{\iota(n)}(n) + s_{\iota(n)}(n)$, is equal to (or slightly larger than) the payoff from hiring the worker with the greatest estimated skill, $\max_{j \in I(n)}\hat{q}_j(n)$.

Clearly, the UCB cost-saving subsidy rule is the subsidy rule that requires the smallest budget to implement the UCB decision rule. Since fines (negative subsidies) are not allowed, the government cannot further discourage the employment of the other candidates, $j \in I(n)\setminus\{\iota(n)\}$. Hence, the UCB cost-saving subsidy rule requires the smallest budget among all subsidy rules that implements the decision rule \eqref{eq: UCB decision rule}.

\begin{thm}[Optimality of the Cost-Saving Subsidy Rule]\hspace{0.1em}\label{cor:ucb cost-saving subsidy scheme}
\begin{enumerate}[(a)]
    \item A cost-saving subsidy rule implements the decision rule with which the subsidy rule is associated.
    \item Let $s$ be a cost-saving subsidy rule and $s'$ be an arbitrary subsidy rule that implements the same decision rule. Then, for all $i$, $n$ and $h(n)$, we have
    \begin{equation}
        s_i(n;h(n)) \le s_i'(n; h(n)).
    \end{equation}
\end{enumerate}
\end{thm}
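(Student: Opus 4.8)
The plan is to establish (a) by a direct payoff computation and (b) by a pointwise lower-bound argument that leverages the non-negativity of subsidies. Both parts hold for an arbitrary target decision rule $\iota$, so I would not invoke any structure specific to the UCB or hybrid rules; the argument is purely about the incentive-compatibility constraint \eqref{eq: the candidate firm n hires} together with the requirement that subsidies lie in $\Real_+$.

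For part (a), I would evaluate firm $n$'s payoff $\hat{q}_i(n;h(n)) + s_i(n;h(n))$ under the cost-saving rule for each worker. Substituting the subsidy for the targeted worker gives $\hat{q}_{\iota(n)}(n) + \left(\max_{j\in I(n)}\hat{q}_j(n) - \hat{q}_{\iota(n)}(n)\right) = \max_{j\in I(n)}\hat{q}_j(n)$, whereas every other worker $i\neq\iota(n)$ receives no subsidy and hence yields payoff $\hat{q}_i(n) \le \max_{j\in I(n)}\hat{q}_j(n)$. Thus $\iota(n)$ attains the maximal achievable payoff and lies in the argmax, so the rule implements $\iota$ under the paper's convention that ties are broken arbitrarily (here, in favor of $\iota(n)$). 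I would also verify admissibility: $s_{\iota(n)}(n)\ge 0$ because $\max_{j}\hat{q}_j(n)\ge\hat{q}_{\iota(n)}(n)$, and all remaining subsidies equal zero.

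For part (b), fix $n$ and $h(n)$ and let $s'$ be any admissible subsidy rule implementing the same $\iota$; I would split on whether $i=\iota(n)$. When $i\neq\iota(n)$ the cost-saving subsidy is $s_i(n)=0$, and since subsidies must be non-negative we immediately get $s_i'(n)\ge 0 = s_i(n)$. When $i=\iota(n)$, let $j^\ast\in\argmax_{j\in I(n)}\hat{q}_j(n)$. Because $s'$ implements $\iota$, the targeted worker must be weakly preferred to $j^\ast$, i.e., $\hat{q}_{\iota(n)}(n)+s_{\iota(n)}'(n)\ge \hat{q}_{j^\ast}(n)+s_{j^\ast}'(n)$; combining this with $s_{j^\ast}'(n)\ge 0$ yields $\hat{q}_{\iota(n)}(n)+s_{\iota(n)}'(n)\ge \hat{q}_{j^\ast}(n)=\max_{j\in I(n)}\hat{q}_j(n)$, and therefore $s_{\iota(n)}'(n)\ge \max_{j\in I(n)}\hat{q}_j(n)-\hat{q}_{\iota(n)}(n)=s_{\iota(n)}(n)$. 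Note this also covers the degenerate case $\iota(n)=j^\ast$, where both sides are $0$.

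The only delicate point — and what I would flag as the crux — is identifying the binding competitor in the $i=\iota(n)$ case: the incentive constraint against the highest-estimated-skill worker $j^\ast$ is precisely the tight one, and the reason no admissible $s'$ can undercut the cost-saving amount is that fines (negative subsidies) are disallowed, so $s'$ cannot depress $j^\ast$'s payoff below its own estimated skill $\hat{q}_{j^\ast}(n)$. This non-negativity constraint is exactly what pins down the minimal subsidy. A secondary subtlety is tie-breaking: when $\iota(n)\neq j^\ast$ the cost-saving rule makes the two workers' payoffs exactly equal, so implementation rests on the stated convention that ties may be resolved in favor of $\iota(n)$; I would simply note that this is consistent with the model's tie-breaking rule and poses no real difficulty.
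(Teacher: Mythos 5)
Your proposal is correct and matches the paper's own (omitted, sketched-in-text) argument: part (a) is the payoff computation showing the targeted worker's subsidized payoff equals $\max_{j\in I(n)}\hat{q}_j(n)$ while all others receive their estimated skill, and part (b) pins down minimality via the incentive constraint against the highest-estimated-skill worker combined with the no-fines (non-negativity) restriction. Your explicit handling of the tie-breaking convention and the degenerate case $\iota(n)=j^\ast$ is a faithful formalization of exactly what the paper treats as ``straightforward from the argument above.''
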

The proof is straightforward from the argument above.

Since the government hardly observes rejected candidates' characteristics, a cost-saving subsidy rule is difficult to implement. Nevertheless, since it provides the smallest subsidy for implementing a decision rule, its performance is a useful theoretical benchmark. In Section~\ref{subsec: simulation cost saving}, we demonstrate that the cost-saving method effectively reduces the subsidy required for the UCB mechanism, whereas the hybrid mechanisms outperform substantially in the long run.

\section{Bayesian Approach}\label{sec: Bayesian Approach}

The frequentist approach, which we have adopted in our analysis, views probability as the long-run frequency of events. This approach is widely used in the multi-armed bandit literature primarily due to its robustness and the difficulty in implementing the Bayesian approach in practice. In the Bayesian approach, the analyst forms a prior belief about the unknown parameter's distribution and updates it as data becomes available. However, the selection of these prior beliefs can be somewhat subjective and can significantly influence the model's results. This can be especially challenging in situations where there is limited knowledge or lack of consensus about what the prior should be. On the other hand, the frequentist approach does not rely on prior beliefs. It bases its estimation solely on the observed data, making it robust to any realization of the parameter $(\btheta_g)_{g\in G}$. Given these advantages, we have chosen to develop and analyze a frequentist model in our study.

Nevertheless, adopting a Bayesian setting leads to a similar conclusion, so long as all firms share a common prior belief. Specifically, when the common prior belief is endowed as a normal distribution, i.e.,
\begin{equation}
\btheta_g \sim \Normal(0, \kappa^2 \bI_d),
\end{equation}
then the posterior belief would also be a normal distribution, with its mean becoming $\bx_i' \hat{\btheta}_{g(i)}(n)$.

Regarding the UCB mechanism, there exists a Bayesian version of confidence region\footnote{The bound here is derived from Eq.~(4.8) in \citet{kaufm2014}.} $\EC^{\mathrm{Bayes}}_g(n;\delta)$ such that 
\begin{equation}
\mathrm{Pr}^{\mathrm{Bayes}} 
\left( 
\bigcap_n
\{
\btheta_g \in \EC_g^{\mathrm{Bayes}}(n;\delta)
\}
\right)
\ge 1 - \delta,
\end{equation}
where $\mathrm{Pr}^{\mathrm{Bayes}}$ denotes probability over the Bayes posterior, by defining
\begin{equation}
    \EC_g^{\mathrm{Bayes}}(n;\delta)
    = \left\{ \bbtheta_g \in \Real^d: \norm{\bbtheta_g - \hbtheta_g(n)}_{\bar{\bV}_g(n)} \le \sigma_\eps
    \sqrt{
    d + 
    \log\left(\frac{\pi^2 N^2}{6 \delta}\right) + 
    2 \sqrt{d  \log\left( \frac{\pi^2 N^2}{6 \delta} \right)}
    }
    \right\}.
\end{equation}
Using $\EC_g^{\mathrm{Bayes}}(n;\delta)$, we can obtain a Bayesian version of the UCB mechanism.\footnote{Note that, to run the UCB mechanism in a model, the regulator needs to know the common prior belief of firms to calculate the confidence bound $\EC_g^{\mathrm{Bayes}}(n;\delta)$.}

\section{Lemmas}

This section describes the technical lemmas that are used for deriving the theorems. 

The Hoeffding inequality, which is one of the most well-known versions of concentration inequality, provides an upper bound of the sum of bounded independent random variables. 
\begin{lem}[Hoeffding Inequality]
Let $x_1,x_2,\dots,x_n$ be i.i.d.\ random variables in $[0,1]$. Let $\bar{x} = (1/n)\sum_{t=1}^n x_t$. Then,
\begin{align}
\Pr\left[\bar{x} - \Ex[\bar{x}] \ge k \right] &\le e^{-2nk^2} \nn
\Pr\left[\bar{x} - \Ex[\bar{x}] \le -k \right] &\le e^{-2nk^2}
\end{align}
and taking union bound yields
\begin{align}
\Pr\left[|\bar{x} - \Ex[\bar{x}]| \ge k \right] &\le 2e^{-2nk^2}.
\end{align}
\end{lem}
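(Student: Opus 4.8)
The plan is to use the Chernoff (exponential-moment) method, which reduces a tail bound to a bound on the moment generating function. First I would handle the upper tail. For any $s > 0$, monotonicity of $z \mapsto e^{sz}$ together with Markov's inequality gives
\[
\Pr\left[\bar{x} - \Ex[\bar{x}] \ge k\right] \le e^{-sk}\,\Ex\!\left[e^{s(\bar{x} - \Ex[\bar{x}])}\right].
\]
Writing $\bar{x} - \Ex[\bar{x}] = \frac{1}{n}\sum_{t=1}^n (x_t - \Ex[x_t])$ and invoking independence of the $x_t$, the moment generating function factorizes as $\prod_{t=1}^n \Ex[e^{(s/n)(x_t - \Ex[x_t])}]$, so the problem localizes to a single bounded summand.

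The key input is Hoeffding's lemma: for a centered random variable $Z$ supported on an interval of length one, $\Ex[e^{tZ}] \le e^{t^2/8}$ for every $t \in \Real$. Applying it with $t = s/n$ to each $Z = x_t - \Ex[x_t]$, which lies in the length-one interval $[-\Ex[x_t],\, 1 - \Ex[x_t]]$, bounds each factor by $e^{s^2/(8n^2)}$; the product over $t = 1, \dots, n$ is therefore at most $e^{s^2/(8n)}$. Combining the two displays yields $\Pr[\bar{x} - \Ex[\bar{x}] \ge k] \le e^{-sk + s^2/(8n)}$, and minimizing the exponent over $s > 0$ at $s = 4nk$ produces the claimed bound $e^{-2nk^2}$.

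For the lower tail I would apply the upper-tail bound to the variables $y_t = 1 - x_t$, which again lie in $[0,1]$ and satisfy $\bar{y} - \Ex[\bar{y}] = -(\bar{x} - \Ex[\bar{x}])$; hence $\Pr[\bar{x} - \Ex[\bar{x}] \le -k] = \Pr[\bar{y} - \Ex[\bar{y}] \ge k] \le e^{-2nk^2}$. The two-sided statement then follows immediately from a union bound over the two disjoint one-sided events, giving the extra factor of $2$.

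The main obstacle is establishing Hoeffding's lemma itself; the rest is routine. I would prove it through the cumulant generating function $\psi(t) = \log \Ex[e^{tZ}]$. One checks $\psi(0) = 0$ and $\psi'(0) = \Ex[Z] = 0$, and a direct computation shows that $\psi''(t)$ equals the variance of $Z$ under the exponentially tilted probability measure whose density is proportional to $e^{tZ}$. Since that tilted law remains supported on an interval of length one, Popoviciu's inequality bounds its variance by $1/4$, so $\psi''(t) \le 1/4$ uniformly in $t$. A second-order Taylor expansion with Lagrange remainder then gives $\psi(t) \le \tfrac{1}{2}\cdot\tfrac{1}{4}\, t^2 = t^2/8$, which is exactly the lemma.
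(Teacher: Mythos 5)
Your proof is correct in every step: the Markov/Chernoff reduction, the factorization of the moment generating function by independence, the application of Hoeffding's lemma to each centered summand (which lies in an interval of length one), the optimization at $s = 4nk$ giving the exponent $-4nk^2 + 2nk^2 = -2nk^2$, the reflection $y_t = 1 - x_t$ for the lower tail, and the union bound for the two-sided statement. Your proof of Hoeffding's lemma itself, via the cumulant generating function $\psi(t) = \log \Ex[e^{tZ}]$, the identification of $\psi''(t)$ as a variance under the exponentially tilted measure, and Popoviciu's bound $\psi''(t) \le 1/4$, is also sound.

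One point of comparison worth noting: the paper does not prove this lemma at all. It is listed among the technical lemmas as a standard, well-known fact (the classical Hoeffding inequality) and is simply invoked in the proofs of Theorems \ref{thm_smlcand}, \ref{thm_reglower}, \ref{thm:ucbimp}, and \ref{thm:rooney_main}. So there is no proof in the paper to measure yours against; what you have produced is the canonical textbook derivation, and it is a complete, self-contained justification of a result the paper takes as given.
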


The following is a version of concentration inequality for a sum of squared normal variables.
\begin{lem}[Concentration Inequality for Chi-squared distribution]\label{lem:concentration_chisq}
Let $Z_1,Z_2,\dots,Z_n$ be independent standard normal variables. Then,
\begin{equation}
\Pr\left[\left|\frac{1}{n} \sum_{k=1}^n Z_k^2 - 1\right| \ge t \right] \le 2 e^{-nt^2 / 8}
\end{equation}
\end{lem}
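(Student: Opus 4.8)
The plan is to prove the two-sided bound by treating the upper and lower deviations separately and combining them with a union bound, so that it suffices to show each of $\Pr[\frac{1}{n}\sum_k Z_k^2 \ge 1+t]$ and $\Pr[\frac{1}{n}\sum_k Z_k^2 \le 1-t]$ is at most $e^{-nt^2/8}$. For each tail I would apply the exponential Markov (Chernoff) inequality together with the moment generating function of a single squared standard normal, $\Ex[e^{sZ_k^2}] = (1-2s)^{-1/2}$ for $s < 1/2$; by independence the MGF of the sum factorizes as $(1-2s)^{-n/2}$.

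For the upper tail I would write, for $s \in (0,1/2)$,
\begin{equation}
\Pr\left[\sum_{k=1}^n Z_k^2 \ge n(1+t)\right] \le e^{-sn(1+t)}(1-2s)^{-n/2} = \exp\left(n\left[-s(1+t) - \tfrac{1}{2}\log(1-2s)\right]\right).
\end{equation}
Minimizing the bracketed exponent over $s$ gives the optimizer $s^\ast = t/(2(1+t))$, at which the exponent equals $\tfrac{1}{2}(\log(1+t) - t)$. It then remains to verify the elementary inequality $\log(1+t) - t \le -t^2/4$, which converts this into the claimed factor $e^{-nt^2/8}$.

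For the lower tail I would run the same argument with the MGF evaluated at a negative exponent, $\Ex[e^{-sZ_k^2}] = (1+2s)^{-1/2}$ for $s > 0$, obtaining $\Pr[\sum_k Z_k^2 \le n(1-t)] \le \exp(n[s(1-t) - \tfrac{1}{2}\log(1+2s)])$ and, after optimization at $s^\ast = t/(2(1-t))$, the exponent $\tfrac{1}{2}(t + \log(1-t))$. Here the required inequality $t + \log(1-t) \le -t^2/4$ follows at once from the series $\log(1-t) = -t - t^2/2 - t^3/3 - \cdots$, which in fact gives $t + \log(1-t) \le -t^2/2$; note also that the lower tail is vacuous when $t \ge 1$, since $\sum_k Z_k^2 \ge 0$ forces that event to have probability zero.

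I expect the main obstacle to be the log-inequality for the \emph{upper} tail. Setting $h(t) = \log(1+t) - t + t^2/4$, one checks $h(0)=0$ and $h'(t) = t(t-1)/(2(1+t))$, so $h$ is decreasing on $(0,1)$ and hence $h(t) \le 0$ there; this is exactly the regime $t \in (0,1)$ in which the two-sided bound is invoked in the main proofs. For $t$ substantially larger than $1$ the chi-squared tail is genuinely heavier and the Gaussian-squared bound must degrade to a sub-exponential $e^{-nt/8}$ form, so I would restrict attention to $t \in (0,1)$, which is all that is needed downstream. Everything else is routine calculus once the two exponent inequalities are established.
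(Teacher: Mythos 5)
You should first note that the paper states Lemma \ref{lem:concentration_chisq} without any proof, treating it as a standard fact, so there is no in-paper argument to compare against; your Chernoff/MGF route is the textbook way to prove such a bound, and your computations are all correct: the MGFs $(1-2s)^{-1/2}$ and $(1+2s)^{-1/2}$, the optimizers $s^\ast = t/(2(1+t))$ and $s^\ast = t/(2(1-t))$, the resulting exponents $\tfrac{1}{2}(\log(1+t)-t)$ and $\tfrac{1}{2}(t+\log(1-t))$, both log-inequalities (your derivative $h'(t) = t(t-1)/(2(1+t))$ is right), and the final union bound.

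The genuine issue is the scope restriction you impose at the end. You are correct that the inequality as stated cannot hold for all $t$: the upper tail of $\tfrac{1}{n}\sum_k Z_k^2$ has large-deviation rate $\tfrac{1}{2}(t - \log(1+t))$, which falls below $t^2/8$ once $t \gtrsim 1.6$, so the stated bound is in fact false for, say, $t=4$ and $n$ large. But your claim that $t \in (0,1)$ ``is all that is needed downstream'' is wrong. The paper invokes this lemma exactly once, in the proof of Lemma \ref{lem:lflarge_smltheta2}, to bound $\Pr[\ER^c]$ for $\ER = \{\sum_t e_t^2 \le 5 \sigma_x^2 \Nzero_2\}$; after standardizing, that is a deviation of $t = 4$, and the paper reads off the factor $2e^{-2\Nzero_2}$ from this lemma (modulo an apparent typo in that display). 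So restricting to $t<1$ leaves the one place the lemma is actually used uncovered. The fix stays entirely within your framework: prove the Bernstein-type form $\Pr\left[\left|\tfrac{1}{n}\sum_k Z_k^2 - 1\right| \ge t\right] \le 2\exp\left(-n\min(t,t^2)/8\right)$. Your upper-tail computation already gives exponent $\tfrac{1}{2}(t-\log(1+t))$, and for $t \ge 1$ this is at least $t/8$ because $3t \ge 4\log(1+t)$ there (true at $t=1$, and the left side grows faster). With that version the downstream application yields $\Pr[\ER^c] \le 2e^{-\Nzero_2/2}$ instead of $2e^{-2\Nzero_2}$, which still delivers the $1/4$ bound the paper needs once $\Nzero_2 \ge 5$, a constant-size adjustment that does not affect Theorem \ref{thm_reglower}.
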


\begin{lem}[Normal Tail Bound \citep{feller-vol-1}]\label{lem:normpdf}
Let $\phi(x) \coloneqq e^{-x^2/2}/\sqrt{2 \pi}$ be the probability density function (pdf) of a standard normal random variable. 
Let $\Phi^c(x) = \int_{x}^\infty \phi(x') dx'$.
Then,
\begin{equation}
\left(\frac{1}{x} - \frac{1}{x^3}\right) \frac{e^{-x^2/2}}{\sqrt{2 \pi}}
\le
\Phi^c(x)
\le 
\frac{1}{x} \frac{e^{-x^2/2}}{\sqrt{2 \pi}}
\end{equation}
\end{lem}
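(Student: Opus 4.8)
The plan is to prove both inequalities by exhibiting explicit antiderivatives and integrating over $[x,\infty)$, exploiting the elementary identity $\phi'(x') = -x'\phi(x')$. The whole argument hinges on recognizing that the two bounding expressions in the statement are themselves (up to sign) antiderivatives of $\phi$ perturbed by controllable correction terms, so that comparing the integrands with $\phi(x')$ immediately yields the bounds.

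For the upper bound, I would first verify by direct differentiation that
\[
\frac{d}{dx'}\left[-\frac{\phi(x')}{x'}\right] = \left(1 + \frac{1}{x'^2}\right)\phi(x').
\]
Integrating this identity from $x$ to $\infty$ and noting $\phi(x')/x' \to 0$ gives $\int_x^\infty (1 + x'^{-2})\phi(x')\,dx' = \phi(x)/x$. Since $1 + x'^{-2} \ge 1$, the integrand dominates $\phi(x')$, so $\Phi^c(x) = \int_x^\infty \phi(x')\,dx' \le \phi(x)/x$, which is the claimed upper bound.

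For the lower bound, I would differentiate the candidate expression $g(x') := (x'^{-1} - x'^{-3})\phi(x')$, obtaining $g'(x') = (3x'^{-4} - 1)\phi(x')$ after substituting $\phi' = -x'\phi$ and cancelling the $x'^{-2}$ terms. Hence $-g'(x') = (1 - 3x'^{-4})\phi(x') \le \phi(x')$, and integrating $-g'$ from $x$ to $\infty$ (with $g(x') \to 0$) yields $(x^{-1} - x^{-3})\phi(x) = \int_x^\infty (1 - 3x'^{-4})\phi(x')\,dx' \le \int_x^\infty \phi(x')\,dx' = \Phi^c(x)$, the claimed lower bound.

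The computations are routine once the antiderivatives are identified; the only genuine step is guessing the correct auxiliary functions, and since the lemma statement hands us the exact bounding expressions, I would simply differentiate them and read off the correction factors $(1 + x'^{-2})$ and $(1 - 3x'^{-4})$, whose signs relative to $1$ deliver the two inequalities. An equivalent route is the monotonicity argument: show $f(x) := \Phi^c(x) - (x^{-1}-x^{-3})\phi(x)$ satisfies $f'(x) = -3x^{-4}\phi(x) \le 0$ with $f(\infty) = 0$, forcing $f \ge 0$; but the integration approach handles both bounds symmetrically and is cleaner. I anticipate no real obstacle here—this is a classical Mills-ratio estimate, and the asymptotic tightness as $x \to \infty$ follows automatically from the integral representation.
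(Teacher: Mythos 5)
Your proposal is correct: both differentiation identities check out ($\frac{d}{dx'}[-\phi(x')/x'] = (1+x'^{-2})\phi(x')$ and $\frac{d}{dx'}[(x'^{-1}-x'^{-3})\phi(x')] = (3x'^{-4}-1)\phi(x')$), and integrating the pointwise comparisons with $\phi$ over $[x,\infty)$ delivers exactly the two bounds, with the only implicit standing assumption being $x>0$ so that the interval avoids the singularity and the upper bound is nontrivial. The paper itself gives no proof of this lemma---it is quoted directly from Feller---and your argument is precisely the classical one from that source (integrate $(1-3y^{-4})\phi(y)\le\phi(y)\le(1+y^{-2})\phi(y)$, whose outer members are exact derivatives), so the proposal coincides with the standard proof the citation points to.
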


\begin{lem}[Largest Context, Theorem 1.14 in \cite{gaussianmax_lecnote}]\label{lem:largestcontext}
Let
\begin{equation}
 \bx_i \sim \Normal(\bmu_x, \sigma_x \bI_d)
\end{equation}
for each $i \in I(n)$.
Let $\mu_x = ||\bmu_x||$ and
\begin{equation}
 \LCont{\delta} \coloneqq \mu_x + \sigma_x \sqrt{2 d (2 \log(KN) + \log(1/\delta))}.
\end{equation}
Then, with a probability of at least $1-\delta$, we have
\begin{equation}
\forall {i \in I(n), n \in [N]}, \ ||\bx_i|| \le \LCont{\delta}.
\end{equation}

\end{lem}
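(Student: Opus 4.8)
The plan is to reduce the statement to a uniform deviation bound on the norms of standard Gaussian vectors and then combine the chi-squared concentration inequality (Lemma~\ref{lem:concentration_chisq}) with a union bound over all candidates in all rounds.

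First I would decompose each context as $\bx_i = \bmu_x + \sigma_x \bz_i$ with $\bz_i \sim \Normal(\bZero, \bI_d)$ i.i.d., which is exactly the noise representation in Assumption~\ref{assp:normalcontext}. By the triangle inequality, $\|\bx_i\| \le \|\bmu_x\| + \sigma_x\|\bz_i\| = \mu_x + \sigma_x\|\bz_i\|$, so it suffices to show that, with probability at least $1-\delta$,
\[
\max_{n\in[N]}\,\max_{i\in I(n)}\|\bz_i\| \le \sqrt{2d\big(2\log(KN)+\log(1/\delta)\big)}.
\]
Since each round presents $K$ candidates over $N$ rounds, there are at most $KN$ such vectors $\bz_i$ in total, which is the total budget for the union bound.

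Next I would bound the norm of a single standard Gaussian vector. For fixed $i$, $\|\bz_i\|^2 = \sum_{j=1}^d Z_{ij}^2$ is a sum of $d$ independent squared standard normals, so Lemma~\ref{lem:concentration_chisq} with deviation $t = 2\big(2\log(KN)+\log(1/\delta)\big) - 1$ yields
\[
\Pr\Big[\|\bz_i\|^2 \ge 2d\big(2\log(KN)+\log(1/\delta)\big)\Big] \le 2\exp\!\big(-d\,t^2/8\big).
\]
Because $t = \Theta(\log(KN/\delta))$ makes the exponent quadratic in the logarithm while the required target probability $\delta/(KN)$ is only exponential in a single logarithm, the right-hand side is at most $\delta/(KN)$ in the regime of interest; this step exploits the slack built into the radius, namely the leading factor $2$ and the term $2\log(KN)=\log((KN)^2)$ inside the square root. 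Taking a union bound over all $\le KN$ vectors gives total failure probability at most $KN\cdot\delta/(KN)=\delta$. On the complementary event, every $\|\bz_i\|$ obeys the displayed bound simultaneously, and combining with the triangle-inequality reduction gives $\|\bx_i\| \le \mu_x + \sigma_x\sqrt{2d(2\log(KN)+\log(1/\delta))} = \LCont{\delta}$ for all $i\in I(n)$ and $n\in[N]$, as claimed.

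The main obstacle is the constant bookkeeping in the single-vector step: one must confirm that the chosen $t$ drives the chi-squared tail below $\delta/(KN)$ at exactly the stated radius $\LCont{\delta}$. This reduces to an inequality of the form $d\,t^2/8 \ge \log(2KN/\delta)$, which holds comfortably because the left side is quadratic in $\log(KN/\delta)$ while the right side is linear. If one prefers the cleanest constants, an alternative is to substitute the Gaussian Lipschitz concentration bound $\Pr[\|\bz_i\| \ge \sqrt{d}+s] \le \exp(-s^2/2)$ (using that $\bz \mapsto \|\bz\|$ is $1$-Lipschitz and $\Ex\|\bz_i\| \le \sqrt{d}$), take $s=\sqrt{2\log(KN/\delta)}$, and verify $\sqrt{d}+s \le \sqrt{2d(2\log(KN)+\log(1/\delta))}$ via $(a+b)^2 \le 2a^2+2b^2$; this matches the standard proof of Theorem~1.14 in \cite{gaussianmax_lecnote} and absorbs the constants using the same slack.
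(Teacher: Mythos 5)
First, a point of comparison: the paper itself gives no proof of Lemma~\ref{lem:largestcontext} --- it is imported wholesale as Theorem~1.14 of \cite{gaussianmax_lecnote}. So your scaffold (write $\bx_i = \bmu_x + \sigma_x \bz_i$, reduce via the triangle inequality to a uniform bound on $\|\bz_i\|$ over the at most $KN$ candidate vectors, and close with a union bound) is exactly the standard argument that the cited reference carries out; there is no competing in-paper proof to diverge from.

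However, your primary route contains a genuine flaw. Lemma~\ref{lem:concentration_chisq} is a small-deviation (sub-Gaussian-regime) bound: an inequality of the form $\Pr[|\frac{1}{n}\sum_k Z_k^2 - 1| \ge t] \le 2e^{-nt^2/8}$ can only hold for bounded $t$ (the standard statement requires $t \in (0,1)$; the paper omits this restriction, which is a misstatement on the paper's side). The chi-squared upper tail is sub-exponential, not sub-Gaussian: for $d=1$ and $t=10$, $\Pr[Z^2 \ge 1+t] = 2\Phi^c(\sqrt{11}) \approx 9\times 10^{-4}$, which exceeds the claimed bound $2e^{-t^2/8} = 2e^{-12.5} \approx 7.5\times 10^{-6}$. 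You invoke the lemma at $t = 2\bigl(2\log(KN)+\log(1/\delta)\bigr)-1 \gg 1$, precisely where it fails, so your stated source of slack --- ``the exponent is quadratic in the logarithm while the target is only linear'' --- is illusory. The conclusion survives, but for a different reason: the correct Chernoff bound $\Pr[\chi_d^2 \ge d(1+t)] \le \exp\bigl(-\tfrac{d}{2}(t - \log(1+t))\bigr)$ has an exponent \emph{linear} in $t$, and since the squared radius $2d\bigl(2\log(KN)+\log(1/\delta)\bigr)$ is at least $2d\log\bigl((KN)^2/\delta\bigr)$, this linear exponent still dominates $\log(KN/\delta)$ (using the slack from the extra $\log(KN)$), so the union bound closes in any regime of interest, e.g.\ the paper's choice $\delta = 1/N$. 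Your alternative route via Lipschitz concentration of the Gaussian norm is the standard correct proof and is where the weight should go; one caveat there too: the comparison $\sqrt{d}+\sqrt{2\log(KN/\delta)} \le \sqrt{2d(2\log(KN)+\log(1/\delta))}$ obtained from $(a+b)^2 \le 2a^2+2b^2$ actually fails at $d=1$ (it reduces to $1 + 2s \le 2\log(KN)$ with $s = \sqrt{2\log(KN/\delta)}$, which is violated for small $\delta$), so the $d=1$ case should instead be handled by the plain Gaussian tail $2\Phi^c(x) \le e^{-x^2/2}$, which gives the claim directly with per-vector failure probability $\delta/(KN)^2$.
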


The following bounds the variance of a conditioned normal variable.
\begin{lem}[Conditioned Tail Deviation]\label{lem:incdev}
Let $x \sim \Normal(a, 1)$ be a scalar normal random variable with its mean $a \in \Real$ and unit variance. Then, for any $b \in \Real$, the following inequality holds.
\begin{align}\label{ineq:incdev}
 \Var(x| x \ge b) \ge \frac{1}{10}.
\end{align}
\end{lem}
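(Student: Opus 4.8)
The plan is to reduce to the standard normal, write the conditional variance in closed form, and then exploit monotonicity in the truncation point. First I would use translation invariance of the variance: writing $z = x - a \sim \Normal(0,1)$ and $c = b - a$, conditioning on $x \ge b$ is conditioning on $z \ge c$, and $\Var(x \mid x \ge b) = \Var(z \mid z \ge c)$. So it suffices to lower-bound $V(c) := \Var(z \mid z \ge c)$. Integration by parts against $\phi'(t) = -t\phi(t)$ gives $\int_c^\infty t\,\phi(t)\,dt = \phi(c)$ and $\int_c^\infty t^2\phi(t)\,dt = c\,\phi(c) + \Phi^c(c)$, so with the inverse Mills ratio $m(c) := \phi(c)/\Phi^c(c)$ I obtain $\Ex[z \mid z \ge c] = m(c)$ and $\Ex[z^2 \mid z \ge c] = 1 + c\,m(c)$. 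Hence $V(c) = 1 + c\,m(c) - m(c)^2 = 1 - m(c)\big(m(c) - c\big)$.

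Next I would establish that $V$ is decreasing in $c$, which reduces the claim to its worst case. Differentiating $m$ gives $m'(c) = m(c)\big(m(c) - c\big)$, so $V(c) = 1 - m'(c)$ and $V'(c) = -m''(c)$; since the inverse Mills ratio of the normal is convex (the standard increasing-and-convex hazard-rate property, which one can also chase from $m'' = m[(m-c)(2m-c)-1]$ together with $m(c) > \max(c,0)$ and $0 < m'(c) < 1$), $V$ is strictly decreasing. For the easy half $c \le 0$ this already closes the argument: $V(c) \ge V(0) = 1 - 2/\pi \approx 0.363 > 1/10$, using $m(0) = \sqrt{2/\pi}$. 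For $c > 0$ it reduces the bound to evaluating $V$ at the largest admissible threshold.

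Here I must flag a genuine feature of the problem: $V(c) \downarrow 0$ as $c \to \infty$ (indeed $m(c) \sim c$ and $m(c) - c \sim 1/c$, so $m(c)(m(c)-c) \to 1$), and numerically $V(c)$ crosses $1/10$ near $c \approx 2.25$. A fixed constant lower bound such as $1/10$ can therefore hold only when $b - a$ stays in a bounded range --- which I read as the content of the companion hypothesis and the second displayed inequality that the full statement carries. On such a range, monotonicity leaves only a single endpoint computation: bound $m(c_0)$ at the endpoint $c_0$ using the two-sided estimate of Lemma~\ref{lem:normpdf}, namely $(1/c - 1/c^3)\phi(c) \le \Phi^c(c) \le (1/c)\phi(c)$, and verify $1 + c_0\,m(c_0) - m(c_0)^2 \ge 1/10$.

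The hard part will be twofold. First, justifying the monotonicity cleanly: the convexity $m'' > 0$ is a known fact about the normal hazard rate, but making it self-contained (rather than cited) takes some care because $(m-c)(2m-c) \to 1$ as $c \to \infty$, so the sign of $m''$ is delicate in the tail. Second, and more binding, the gap between the true value $V(c_0)$ and $1/10$ is thin near the endpoint, so the estimates of Lemma~\ref{lem:normpdf} (which are not tight) must be supplemented by sharper control of $\Phi^c(c_0)$. I would therefore either restrict the admissible range to a region where looser bounds already suffice (for instance $V(1) \approx 0.2$ leaves ample slack) or work directly with explicit numerical bounds on $\phi(c_0)$ and $\Phi^c(c_0)$ at the single point $c_0$.
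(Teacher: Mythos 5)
Your closed-form analysis is correct: writing $c = b-a$ and $m(c) = \phi(c)/\Phi^c(c)$, one indeed has $\Var(x \mid x \ge b) = V(c) = 1 - m(c)\bigl(m(c)-c\bigr)$, $V$ is decreasing in $c$, $V(0) = 1-2/\pi \approx 0.36$, and $V(c) \to 0$ as $c \to \infty$ (numerically $V(2) \approx 0.114$, $V(3) \approx 0.07$, with the crossing of $1/10$ near $c \approx 2.3$). What you have found is therefore not a weakness of your own argument but a genuine flaw in the statement: with the quantifier ``for any $b \in \Real$,'' Lemma~\ref{lem:incdev} is false (take $a = 0$, $b = 3$). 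Your charitable reading---that the phrase ``the following two inequalities'' hides a companion hypothesis bounding $b-a$---is not supported by the paper: only one inequality is displayed, no range restriction exists anywhere, and the lemma is invoked (inside Lemma~\ref{lem:normdiversity}, hence in the proofs of Theorem~\ref{thm_smlcand} and Theorem~\ref{thm:ucbimp}) with an arbitrary threshold $\hat{b}$, which is exactly the regime in which no uniform constant can exist.

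The paper's own proof takes a different route and errs precisely where your analysis shows any proof must fail. After shifting so that $b = 0$, it splits on the sign of the shifted mean $a$. The case $a > 0$ (truncation below the mean) is the benign one and agrees with your endpoint computation $V(c) \ge V(0) = 1 - 2/\pi$ for $c \le 0$. But for $a \le 0$ the paper asserts that the conditional density is ``$2\psi(x)$ for $x \ge 0$'' and identifies it with a folded normal; that identification is valid only at $a = 0$. For $a < 0$ the lower-truncated normal is neither the folded nor the half-normal, and its variance decays like $1/a^2$ as $a \to -\infty$, so the paper's ``manual evaluation'' covers only the boundary case. In short, your route (exact conditional moments, monotonicity via convexity of the inverse Mills ratio, endpoint evaluation) is sound and strictly more careful than the paper's; the reason you cannot finish is that the unrestricted claim is untrue. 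A correct statement must either impose $b - a \le c_0$ for some $c_0 \le 2$ (after which your endpoint evaluation closes the proof with room to spare, since $V(2) \approx 0.114 > 1/10$), or replace the constant $1/10$ by a quantity that degrades with the truncation level, much as the paper itself does elsewhere via the $1/\log K$ factor of Lemma~\ref{lem_varmax}.
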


\paragraph*{Proof.}
Without loss of generality, we assume $b = 0$ (otherwise we can reparametrize $x' = x - b \sim \Normal(a-b, 1)$). 
If $a \le 0$, the pdf of conditioned variable $x| x \ge 0$ is $2 \psi(x)$ for $x \ge 0$. Manual evaluation of this distribution\footnote{This distribution is called a folded normal distribution.} reveals that $\Var(x) \ge 1/10$. 
Otherwise ($a > 0$), the pdf of $x|x \ge b$ is $p(x) \ge \psi(x-a)$ for $x \ge a$, which implies $\Var(x|x \ge b) \ge \Var(z)$, where $z$ be a ``half-normal'' random variable\footnote{Half of the mass lies in $z>0$, the other half of mass is at $z=0$.} with its cumulative distribution function
\begin{equation}
P(z) =  
\begin{cases} \Phi(z) &\mbox{if } z > 0 \\
1/2 & \mbox{if } z = 0 \\
0 & \mbox{otherwise} 
\end{cases}.
\end{equation}
Manual evaluation of $\Var(z)$ also shows that $\Var(z) \ge 1/10$. \qed

The following diversity condition that simplifies the original definition of \cite{kannan2018} is used to lower-bound the expected minimum eigenvalue of $\bar{\bV}_g$.
\begin{lem}[Diversity of Multivariate Normal Distribution]\label{lem:normdiversity}
The context $\bx$ is $\lambda_0$-diverse for $\lambda_0 > 0$ if for any $\hat{b} \in \Real$, $\hbtheta \in \Real^d$
\begin{equation}
\lambdamin\left(\Ex\left[ \bx \bx' | \bx' \hbtheta \ge \hat{b} \right]\right) \ge \lambda_0.
\end{equation}
Let $\bx \sim \Normal(\bmu_x, \sigma_x \bI_d)$.
Then, the context $\bx$ is
$\lambda_0$-diverse with $\lambda_0 =  \sigma_x^2 / 10$. 
\end{lem}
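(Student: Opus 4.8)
The plan is to use the Rayleigh-quotient characterization of the smallest eigenvalue: it suffices to show that for every unit vector $\bw \in \Real^d$,
\[
\bw' \Ex\!\left[\bx\bx' \mid \bx'\hbtheta \ge \hat b\right]\bw = \Ex\!\left[(\bw'\bx)^2 \mid \bx'\hbtheta \ge \hat b\right] \ge \frac{\sigma_x^2}{10}.
\]
Since $\Ex[(\bw'\bx)^2\mid\cdot] \ge \Var(\bw'\bx\mid\cdot)$, I would discard the squared conditional mean and instead lower-bound the conditional variance of the scalar $\bw'\bx$. (The degenerate case $\hbtheta=\bZero$ is trivial: the conditioning event is either all of $\Real^d$, where $\lambdamin(\sigma_x^2\bI_d + \bmu_x\bmu_x') \ge \sigma_x^2$, or empty.)

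The key structural step exploits the isotropy of the covariance $\sigma_x^2\bI_d$. Writing $\bm{u} := \hbtheta/\norm{\hbtheta}$, the event depends on $\bx$ only through $\bm{u}'\bx$, since $\bx'\hbtheta \ge \hat b \iff \bm{u}'\bx \ge \hat b/\norm{\hbtheta}$. I would decompose the test vector as $\bw = \alpha\bm{u} + \bw_\perp$ with $\alpha = \bw'\bm{u}$ and $\bw_\perp \perp \bm{u}$, so that $\alpha^2 + \norm{\bw_\perp}^2 = 1$. Because a Gaussian with isotropic covariance has independent coordinates in any orthonormal basis, $\bm{u}'\bx$ and $\bw_\perp'\bx$ are independent; conditioning on the half-space constrains only $\bm{u}'\bx$ and leaves $\bw_\perp'\bx \sim \Normal(\bw_\perp'\bmu_x, \sigma_x^2\norm{\bw_\perp}^2)$ undisturbed and still independent of $\bm{u}'\bx$. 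Hence the conditional variance splits cleanly:
\[
\Var(\bw'\bx \mid \bx'\hbtheta \ge \hat b) = \alpha^2\,\Var\!\left(\bm{u}'\bx \,\middle|\, \bm{u}'\bx \ge \tfrac{\hat b}{\norm{\hbtheta}}\right) + \sigma_x^2\norm{\bw_\perp}^2.
\]

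Next I would invoke Lemma~\ref{lem:incdev}. Standardizing, $\bm{u}'\bx/\sigma_x \sim \Normal(\bm{u}'\bmu_x/\sigma_x, 1)$, so the conditioned tail-deviation bound yields $\Var(\bm{u}'\bx \mid \bm{u}'\bx \ge \hat b/\norm{\hbtheta}) \ge \sigma_x^2/10$. Substituting this and $\norm{\bw_\perp}^2 = 1-\alpha^2$ gives
\[
\Var(\bw'\bx \mid \bx'\hbtheta\ge\hat b) \ge \frac{\sigma_x^2}{10}\alpha^2 + \sigma_x^2(1-\alpha^2) = \sigma_x^2\left(1 - \frac{9}{10}\alpha^2\right) \ge \frac{\sigma_x^2}{10},
\]
because $\alpha^2 \in [0,1]$. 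As this holds for every unit $\bw$, the bound $\lambdamin \ge \sigma_x^2/10$ follows. The main obstacle — essentially the only substantive point — is justifying the variance splitting: one must confirm that conditioning on the half-space leaves the orthogonal component entirely untouched, which rests on the independence of orthogonal linear combinations under an isotropic Gaussian. The scalar tail-variance estimate is already supplied by Lemma~\ref{lem:incdev}, so once the decomposition is in place the conclusion is immediate, and the minimization over $\alpha^2$ confirms that the worst direction is $\bw$ aligned with $\hbtheta$, i.e.\ the direction along which the half-space conditioning compresses the variance most.
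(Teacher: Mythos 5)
Your proposal is correct and follows essentially the same route as the paper's proof: both reduce $\lambdamin$ to a conditional variance via the Rayleigh quotient, split the test direction into components parallel and perpendicular to $\hbtheta$ (the paper does this by choosing an orthonormal basis with $\be_1 \parallel \hbtheta$), apply Lemma~\ref{lem:incdev} to the parallel coordinate, and use that the half-space conditioning leaves the orthogonal coordinates untouched with variance $\sigma_x^2$. Your write-up is in fact slightly more explicit than the paper's about why the per-direction bounds combine (the independence-based variance decomposition $\alpha^2\,\Var(\bm{u}'\bx\mid\cdot) + \sigma_x^2\norm{\bw_\perp}^2$) and about the degenerate case $\hbtheta=\bZero$, which the paper leaves implicit.
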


\paragraph*{Proof.}
By definition,
\begin{align}
\lambdamin\left(\Ex\left[ \bx \bx' |  \bx' \hbtheta \ge \hat{b} \right]\right)
= \min_{\bv: ||\bv||=1} \Ex\left[ (\bv' \bx)^2  |  \bx' \hbtheta \ge \hat{b} \right] \ge \min_{\bv: ||\bv||=1} \Var\left[ \bv' \bx  |  \bx' \hbtheta \ge \hat{b} \right].
\end{align}
Let $\bbe_1,\bbe_2,\dots,\bbe_d$ be the orthogonal bases.
Without loss of generality, we assume $\hbtheta = \theta_1 \bbe_1$ for some $\theta_1 \ge 0$ and $\mu_x = u_1 \bbe_1 + u_2 \bbe_2$ for some $u_1, u_2 \in \Natural$.
Let 
\begin{align}
\bx = x_1 \bbe_1 + x_2 \bbe_2 + \dots + x_d \bbe_d.
\end{align}
Due to the property of the normal distribution, each coordinate $x_l$ for $l \in [d]$ is independent of each other. 
We will show the variance of 
\begin{equation}\label{ineq_vareachfeature}
 \Var\left[ x_l | \bx' \hbtheta \ge \hat{b} \right] \ge \sigma_x^2 / 10,
\end{equation}
which suffices to prove Lemma \ref{lem:normdiversity}.
\begin{itemize}
\item
For the first dimension, we have $x_1 \sim \Normal(u_1, \sigma_x^2)$ and
\begin{equation}
 \Var\left[ x_1 | \bx' \hbtheta \ge \hat{b} \right]  
 =
 \Var\left[ x_1 | x_1 \ge \hat{b}/\theta_1 \right].
\end{equation}
Applying Lemma \ref{lem:incdev} with $x = \sgn(\hat{b}/\theta_1)/\sigma_x$, $a = \mu_x/\sigma_x$, and $b = |\hat{b}/\theta_1|$ yield 
\begin{equation}
    \Var\left[ x_1 | x_1 \ge \hat{b}/\theta_1 \right] \ge \sigma_x^2/10.
\end{equation}
\item
For the second dimension, we have $x_2 \sim \Normal(u_2, \sigma_x^2)$ and
\begin{equation}
 \Var\left[ x_2 | \bx' \hbtheta \ge \hat{b} \right]  
 =
 \Var\left[ x_2 \right] = \sigma_x^2 > \sigma_x^2/10.
\end{equation}
\item $(x_3, x_4,\dots, x_d) \sim \Normal(0, \sigma_x^2 \bI_{d-2})$. In other words, these characteristics are normally distributed, and thus $\Var(x_l) = \sigma_x^2 > \sigma_x^2/10$.
\end{itemize}
In summary, we have Eq. \eqref{ineq_vareachfeature}, which concludes the proof. \qed

\begin{lem}[\citet{abbasi2011}]
\label{lem:abbasi}
Assume that $||\btheta_g|| \le S$. Let $\del > 0$ be arbitrary.
With a probability at least $1-\delta$, the true parameter $\btheta_g$ is bounded as
\begin{equation}\label{ineq:evtboundbase}
\forall n, \ \norm{\hbtheta_g(n) - \btheta_g}_{\bar{\bV}_g(n)} 
\le 
\sigma_\eps \sqrt{2\log\left(\frac{\det(\bar{\bV}_g(n))^{1/2}\det(\lambda \bI)^{-1/2}}{\delta}\right)}
+
\lambda^{1/2} S.
\end{equation}
Moreover, let $L = \max_{i,n} \norm{\bx_i(n)}_2$ and
\begin{equation}
\beta_n(L, \delta) = 
\sigma_\eps \sqrt{d \log\left(\frac{1 + nL^2 / \lambda}{\delta}\right)}
+
\lambda^{1/2} S.
\end{equation}
Then, with a probability at least $1-\delta$, 
\begin{equation}\label{ineq:evtbound}
\forall n, \ \norm{\hbtheta_g(n) - \btheta_g}_{\bar{\bV}_g(n)} \le \beta_n(L, \delta).
\end{equation}
\end{lem}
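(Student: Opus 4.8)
The plan is to follow the self-normalized martingale argument of \citet{abbasi2011}, specialized to the group-specific ridge estimator \eqref{eq: def theta hat}. First I would expand the estimation error. Writing $Y_g(n) = \bX_g(n)\btheta_g + E_g(n)$, where $E_g(n)$ stacks the independent noise terms $\eps_{\iota(n')}$ of the hired group-$g$ workers, and using $\bX_g(n)'\bX_g(n) = \bar{\bV}_g(n) - \lambda\bI_d$, the definition of the ridge estimator gives the decomposition
\begin{equation}
\hbtheta_g(n) - \btheta_g = \bar{\bV}_g(n)^{-1}\bX_g(n)'E_g(n) - \lambda\,\bar{\bV}_g(n)^{-1}\btheta_g.
\end{equation}
Applying the triangle inequality in the $\bar{\bV}_g(n)$-norm splits the quantity of interest into a deterministic regularization bias and a stochastic term.

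The bias term is handled by pure linear algebra: since $\bar{\bV}_g(n) \succeq \lambda\bI_d$ we have $\bar{\bV}_g(n)^{-1} \preceq \lambda^{-1}\bI_d$, so $\lambda\norm{\bar{\bV}_g(n)^{-1}\btheta_g}_{\bar{\bV}_g(n)} = \lambda\norm{\btheta_g}_{\bar{\bV}_g(n)^{-1}} \le \lambda\cdot\lambda^{-1/2}\norm{\btheta_g} \le \lambda^{1/2}S$, which is exactly the additive $\lambda^{1/2}S$ appearing in both \eqref{ineq:evtboundbase} and \eqref{ineq:evtbound}. The same norm manipulation reduces the stochastic term to controlling the self-normalized quantity $\norm{S_g(n)}_{\bar{\bV}_g(n)^{-1}}$, where $S_g(n) = \bX_g(n)'E_g(n) = \sum \bx_{\iota(n')}\eps_{\iota(n')}$ is a vector-valued martingale with respect to the history filtration.

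The crux, and the step I expect to be the main obstacle, is bounding this self-normalized martingale. The difficulty is that the rows of $\bX_g(n)$ are adaptively selected---firms choose whom to hire from the realized history---so the summands are not independent and a naive Gaussian tail bound is invalid. I would resolve this by the method of mixtures: for each fixed $\bv \in \Real^d$ the process $\exp\!\big(\bv'S_g(n) - \tfrac{\sigma_\eps^2}{2}\norm{\bv}_{\bV_g(n)}^2\big)$ is a nonnegative supermartingale (using the conditional subgaussianity of $\eps_{\iota(n')}$ given the past), and integrating it against a Gaussian mixing measure on $\bv$ with covariance proportional to $\lambda^{-1}\bI_d$ produces another nonnegative supermartingale whose closed form involves the ratio $\det(\bar{\bV}_g(n))^{1/2}\det(\lambda\bI_d)^{-1/2}$. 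Applying Ville's maximal inequality to the mixture then yields, with probability at least $1-\delta$ and uniformly over all $n$, a bound of the form $\norm{S_g(n)}_{\bar{\bV}_g(n)^{-1}}^2 \le 2\sigma_\eps^2\log\!\big(\det(\bar{\bV}_g(n))^{1/2}\det(\lambda\bI_d)^{-1/2}/\delta\big)$. Combined with the bias bound this delivers \eqref{ineq:evtboundbase}.

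Finally, to obtain the explicit bound \eqref{ineq:evtbound} I would remove the data-dependent determinant. By the determinant--trace (AM--GM on eigenvalues) inequality, $\det(\bar{\bV}_g(n)) \le (\mathrm{tr}(\bar{\bV}_g(n))/d)^d \le (\lambda + nL^2/d)^d$, since there are at most $n$ rows and each has squared norm at most $L^2$. Substituting this into the logarithm and simplifying (absorbing the factors of $d$ and bounding $1 + nL^2/(\lambda d) \le 1 + nL^2/\lambda$) reproduces the stated $\beta_n(L,\delta)$. Everything outside the supermartingale/maximal-inequality step is routine linear algebra and a standard determinant estimate, so that is where the real work lies.
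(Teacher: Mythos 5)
Your proposal is correct and follows essentially the same route as the paper's source: the paper does not prove this lemma at all but imports it verbatim from \citet{abbasi2011}, and your argument---the ridge-error decomposition, the bias bound from $\bar{\bV}_g(n) \succeq \lambda \bI_d$, the method-of-mixtures supermartingale combined with Ville's maximal inequality, and the determinant--trace simplification---is precisely the proof of Theorems 1 and 2 of that paper. The only blemish is one inherited from the cited result itself rather than introduced by you: the final absorption of $2\log(1/\delta)$ into $d\log(1/\delta)$ when passing to $\beta_n(L,\delta)$ is valid only for $d \ge 2$.
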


The following lemma is used in deriving a regret bound.
\begin{lem}[\citet{abbasi2011}]
\label{lem:contextsqsum}
Let $\lambda \ge 1$ and $L = \max_{n,i} \norm{\bx_i(n)}_2$.
If $\lambda \ge \max(1,L^2)$, then the following inequality holds:
\begin{equation}
\sum_{n: \iota(n)=g} \norm{\bx_{\iota(n)}}_{(\bar{\bV}_g(n))^{-1}}^2 
\le
2 L^2 \log\left( \frac{\det(\bar{\bV}_g(N))}{\det(\lambda \bI_d)} \right)
\end{equation}
for any group $g$.
\end{lem}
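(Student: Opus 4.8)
The plan is to run the standard \emph{elliptical potential} argument, converting the sum of squared $(\bar{\bV}_g(n))^{-1}$-norms into a telescoping product of determinants. First I would enumerate the rounds at which a group-$g$ worker is hired as $n_1 < n_2 < \cdots < n_T$ up to round $N$. Between two consecutive such rounds no new group-$g$ observation is recorded, so the regularized design matrix grows by exactly one rank-one update,
\[
\bar{\bV}_g(n_{k+1}) = \bar{\bV}_g(n_k) + \bx_{\iota(n_k)}\bx_{\iota(n_k)}',
\]
starting from $\bar{\bV}_g(n_1) = \lambda \bI_d$. This reduces the whole claim to controlling one rank-one increment at a time.

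Next I would apply the matrix determinant lemma to each increment, which gives
\[
\det\!\big(\bar{\bV}_g(n_{k+1})\big) = \det\!\big(\bar{\bV}_g(n_k)\big)\Big(1 + \norm{\bx_{\iota(n_k)}}_{(\bar{\bV}_g(n_k))^{-1}}^2\Big),
\]
so that $1 + \norm{\bx_{\iota(n_k)}}_{(\bar{\bV}_g(n_k))^{-1}}^2$ equals the ratio $\det(\bar{\bV}_g(n_{k+1}))/\det(\bar{\bV}_g(n_k))$. Taking logarithms and telescoping these ratios over $k$ collapses the product and yields
\[
\sum_{n:\iota(n)=g}\log\!\Big(1 + \norm{\bx_{\iota(n)}}_{(\bar{\bV}_g(n))^{-1}}^2\Big) = \log\frac{\det(\bar{\bV}_g(N))}{\det(\lambda\bI_d)}.
\]

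It then remains to trade each summand $y_n := \norm{\bx_{\iota(n)}}_{(\bar{\bV}_g(n))^{-1}}^2$ for $2L^2\log(1+y_n)$, after which summing and substituting the telescoped identity finishes the proof. The factor $2L^2$ comes from the elementary bound $v \le 2\log(1+v)$, valid for $v \in [0,1]$ (there $2/(1+v) - 1 \ge 0$): I would set $v = y_n/L^2$, note that $\lambda \ge 1$ forces $\bar{\bV}_g(n) \succeq \bI_d$ and hence $y_n \le \norm{\bx_{\iota(n)}}^2 \le L^2$, so $v \in [0,1]$, giving $y_n \le 2L^2\log(1+y_n/L^2) \le 2L^2\log(1+y_n)$. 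The only delicate point is exactly this per-term inequality: one must check $y_n \le L^2$ so that $v$ lands in the admissible range (this is where the hypothesis $\lambda \ge 1$ is used), and the last step silently relies on $L \ge 1$, which holds in the regime of interest since the context-norm bound $\LCont{\delta}$ of Lemma~\ref{lem:largestcontext} exceeds one. Everything else is bookkeeping on the telescoping indices.
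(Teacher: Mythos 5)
Your proof is correct and is essentially the paper's own proof: the paper does not reprove this lemma but cites \citet{abbasi2011}, whose corresponding lemma is established by exactly this determinant-telescoping (elliptical potential) argument followed by the elementary per-term bound $v \le 2\log(1+v)$ on $[0,1]$. Your caveat that $L \ge 1$ is needed to justify the stated constant $2L^2$ is a genuine (and correctly handled) point---the inequality as written can fail when $L < 1/\sqrt{2}$---but it is harmless here because the paper instantiates $L = \LCont{1/N}$, which exceeds one in the regime of interest.
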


The following inequality is used to bound the variation of the minimum eigenvalue of the sum of characteristics (contexts).
\begin{lem}[Matrix Azuma Inequality  \citep{troppmatconc}]\label{lem:matrixazuma}
Let $\bX_1,\bX_2,\dots,\bX_n$ be adaptive sequence of $d \times d$ symmetric matrices such that $\Ex_{k-1} \bX_k = \bZero$ and $\bX_k^2 \preceq \bA_k^2$ almost surely, where $\bA \succeq \bB$ between two matrices denotes $\bA - \bB$ is positive semidefinite. Let
$%
 \sigma_A^2 \coloneqq \norm{ \frac{1}{n} \sum_k \bA_k^2 }
$%
, where the matrix norm is defined by the largest eigenvalue.
Then, for all $t \ge 0$,
\begin{equation}
\Pr\left[ \lambdamin\left(\sum_k \bX_k\right) \le t \right]
\le
d \exp(-t^2/(8n\sigma_A^2) ).
\end{equation}
\end{lem}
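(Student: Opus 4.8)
The plan is to follow the matrix Laplace-transform method of \citet{troppmatconc}, which reduces an eigenvalue tail bound to the control of a trace moment generating function. Since $\lambdamin(\bY) = -\lambdamax(-\bY)$ and the sign-flipped sequence $\{-\bX_k\}$ satisfies the identical hypotheses ($\Ex_{k-1}[-\bX_k] = \bZero$ and $(-\bX_k)^2 = \bX_k^2 \preceq \bA_k^2$), the stated lower-tail bound for $\lambdamin$ is equivalent to an upper-tail bound for $\lambdamax$ applied to the sum of the $\{-\bX_k\}$. Hence it suffices to prove, for any adapted zero-mean sequence with $\bX_k^2 \preceq \bA_k^2$, that $\Pr[\lambdamax(\sum_k \bX_k) \ge t] \le d\exp(-t^2/(8n\sigma_A^2))$. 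First I would record the master bound: for any random symmetric matrix $\bY$ and any $\theta>0$, Markov's inequality applied to $e^{\theta\lambdamax(\bY)} = \lambdamax(e^{\theta\bY}) \le \mathrm{tr}\, e^{\theta\bY}$ yields
\begin{equation}
\Pr\left[\lambdamax(\bY) \ge t\right] \le e^{-\theta t}\,\Ex\left[\mathrm{tr}\, e^{\theta\bY}\right].
\end{equation}

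The core task is then to bound $\Ex[\mathrm{tr}\, e^{\theta\bY}]$ with $\bY = \sum_{k=1}^n \bX_k$, the difficulty being that $\{\bX_k\}$ is only a martingale-difference (adapted) sequence rather than an independent one. I would peel off the terms one at a time, conditioning on the first $k-1$ matrices. The indispensable tool is Lieb's concavity theorem: for each fixed symmetric $\bH$, the map $\bW \mapsto \mathrm{tr}\exp(\bH + \log\bW)$ is concave on the positive-definite cone. Applying Jensen's inequality through the conditional expectation $\Ex_{k-1}$ and iterating gives the subadditivity of the matrix cumulant generating function,
\begin{equation}
\Ex\left[\mathrm{tr}\exp\left(\theta\sum_{k=1}^n \bX_k\right)\right] \le \mathrm{tr}\exp\left(\sum_{k=1}^n \Psi_k\right),
\end{equation}
where each $\Psi_k$ is a deterministic matrix dominating $\log\Ex_{k-1} e^{\theta\bX_k}$ in the semidefinite order.

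Next I would supply that semidefinite bound via the matrix Hoeffding lemma: under $\Ex_{k-1}\bX_k = \bZero$ and $\bX_k^2 \preceq \bA_k^2$, one has $\log\Ex_{k-1} e^{\theta\bX_k} \preceq g(\theta)\,\bA_k^2$ for the appropriate scalar cumulant function $g$. I would prove this by dominating the scalar map $x \mapsto e^{\theta x}$ on the relevant range by an even quadratic-exponential function, using $\Ex_{k-1}\bX_k = \bZero$ to annihilate the odd (linear) contribution, and then lifting the resulting scalar inequality to the operator order through the transfer rule for functions of a symmetric matrix together with $\bX_k^2 \preceq \bA_k^2$.

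Combining these pieces with the monotonicity of the trace exponential and the elementary bound $\mathrm{tr}\exp(\cdot) \le d\,\exp(\lambdamax(\cdot))$ (which supplies the dimensional factor $d$) gives
\begin{equation}
\Pr\left[\lambdamax\left(\sum_k \bX_k\right) \ge t\right] \le d\,\inf_{\theta>0}\exp\left(-\theta t + g(\theta)\,\norm{\sum_k \bA_k^2}\right),
\end{equation}
and since $\norm{\sum_k \bA_k^2} = n\sigma_A^2$, carrying out the scalar optimization over $\theta$ produces the sub-Gaussian tail $d\exp(-t^2/(8n\sigma_A^2))$, with the constant $8$ in the exponent inherited from the matrix Hoeffding cumulant estimate. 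The main obstacle is the non-commutativity of matrix exponentials: unlike the scalar Azuma inequality, where the moment generating function of a sum factorizes so that logarithms simply add, here $e^{\bA+\bB}\neq e^{\bA}e^{\bB}$ in general, and the naive term-by-term reduction fails. Lieb's concavity theorem is the deep and delicate ingredient that rescues the peeling, and combining it correctly with the conditional expectations of an adapted (rather than independent) sequence via the tower property is the crux of the argument.
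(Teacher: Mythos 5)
Your proposal is correct and is, in substance, the very proof the paper relies on: the paper's entire proof of Lemma~\ref{lem:matrixazuma} is the citation ``Theorem 7.1 and Remark 3.10 in \citet{troppmatconc},'' where Theorem 7.1 is the $\lambdamax$ matrix Azuma bound established by exactly your Laplace-transform/Lieb/Hoeffding-CGF argument, and Remark 3.10 is exactly your sign-flip reduction from $\lambdamin$ to $\lambdamax$, so you have simply unfolded the cited material rather than taken a different route. One caveat you should make explicit: as printed, the conclusion reads $\Pr\left[\lambdamin\left(\sum_k \bX_k\right) \le t\right] \le d\exp(-t^2/(8n\sigma_A^2))$ for $t \ge 0$, which is false as literally stated, since for $t > \sum_k \norm{\bA_k}$ the left side equals one (because $\lambdamin\left(\sum_k \bX_k\right) \le \sum_k \norm{\bA_k}$ almost surely under $\bX_k^2 \preceq \bA_k^2$) while the right side drops below one for $t$ large. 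What your sign-flip reduction actually establishes---and the form in which the paper invokes the lemma downstream, where it always concludes $\lambdamin(\bar{\bV}) \ge n\lambda_0 - t$ with high probability (proofs of Theorems~\ref{thm_smlcand}, \ref{thm:ucbimp}, and \ref{thm:rooney_main})---is the lower-tail bound $\Pr\left[\lambdamin\left(\sum_k \bX_k\right) \le -t\right] \le d\exp(-t^2/(8n\sigma_A^2))$, so your opening equivalence should be stated for the threshold $-t$ rather than $t$.
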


\paragraph*{Proof.}
The proof directly follows from Theorem 7.1 and Remark 3.10 in \citet{troppmatconc}.

The following lemma states that the selection bias makes its variance slightly ($O(1/\log K)$ times) smaller than the original variance. 
\begin{lem}[Variance of Maximum, Theorem 1.8 in \citet{ding2015}]\label{lem_varmax}
Let $x_1,\dots,x_K \in \Real$ be i.i.d.\ samples from $\Normal(0, 1)$. Let $I_{\max} = \argmax_{i \in [K]} x_i$. Then, there exists a distribution-independent constant $C_{\mathrm{varmax}} > 0$ such that 
\begin{align}\label{ineq:varmax}
\Var[ I_{\max} ] \ge \frac{C_{\mathrm{varmax}}}{\log(K)}.
\end{align}
\end{lem}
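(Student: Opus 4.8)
The statement is about $M := \max_{i \in [K]} x_i = x_{I_{\max}}$, and I want a lower bound on $\Var[M]$. Rather than invoke the full Gumbel-type limit theory, the plan is to use a robust two-point argument. The elementary fact I would start from is that for any random variable $M$ and any reals $a < b$,
\[
\Var[M] \;\ge\; \tfrac{1}{4}\,\min\!\big(\Pr[M \le a],\,\Pr[M \ge b]\big)\,(b-a)^2 ,
\]
which follows by splitting on whether the mean $\Ex[M]$ lies below or above the midpoint $(a+b)/2$ and retaining only the probability mass on the far side (if $\Ex[M] \ge (a+b)/2$, then $\Var[M] \ge (\Ex[M]-a)^2\Pr[M\le a] \ge \tfrac14(b-a)^2\Pr[M\le a]$, and symmetrically otherwise). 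Thus it suffices to produce two thresholds $a < b$ whose tail masses are bounded below by universal constants and whose separation is $\Omega(1/\sqrt{\log K})$.

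Since $\Pr[M \le t] = \Phi(t)^K$, I would take $a_K$ and $b_K$ to be fixed quantiles of $M$, defined by $\Phi(a_K)^K = 1/4$ and $\Phi(b_K)^K = 3/4$. Then $\Pr[M \le a_K] = 1/4$ and $\Pr[M \ge b_K] = 1/4$, so the $\min$ in the displayed bound equals $1/4$ and the entire problem reduces to showing $b_K - a_K \ge c/\sqrt{\log K}$ for a universal constant $c>0$. In tail form, $\Phi^c(a_K) = 1 - 4^{-1/K}$ and $\Phi^c(b_K) = 1 - (3/4)^{1/K}$; for large $K$ these behave like $(\ln 4)/K$ and $(\ln(4/3))/K$, so the ratio $\Phi^c(a_K)/\Phi^c(b_K)$ converges to the constant $r := \ln 4 / \ln(4/3) > 1$ and in particular stays bounded away from $1$.

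To convert this ratio into a gap, I would use the Gaussian tail estimate of Lemma~\ref{lem:normpdf}. Both quantiles satisfy $a_K, b_K = (1+o(1))\sqrt{2\log K}$, which follows from $\Phi^c(a_K), \Phi^c(b_K) = \Theta(1/K)$ together with the two-sided bounds $\big(\tfrac1t - \tfrac1{t^3}\big)\phi(t) \le \Phi^c(t) \le \tfrac1t\phi(t)$. Feeding these into the ratio gives $\Phi^c(a_K)/\Phi^c(b_K) \asymp \phi(a_K)/\phi(b_K) = \exp\!\big((b_K^2-a_K^2)/2\big) = \exp\!\big((b_K-a_K)(b_K+a_K)/2\big)$. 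Because the left side is bounded away from $1$ while $b_K + a_K = \Theta(\sqrt{\log K})$, I can solve to obtain $b_K - a_K = \Omega(1/\sqrt{\log K})$. Substituting back into the two-point inequality yields $\Var[M] \ge C_{\mathrm{varmax}}/\log K$.

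The main obstacle is making the tail asymptotics rigorous and uniform: I must carry the two-sided bounds of Lemma~\ref{lem:normpdf} through the ratio computation with explicit control on $a_K/b_K \to 1$ so that the $\phi(t)/t$ prefactors do not spoil the constant, and pin down $a_K, b_K$ within $\sqrt{2\log K}$ with honest error terms rather than mere $o(1)$ heuristics. A secondary point is the small-$K$ regime, where $\log K$ is small and the asymptotic picture has not yet taken hold; this is harmless because for each fixed $K \ge 2$ the max is a non-degenerate random variable with strictly positive variance, so the universal constant can be chosen to cover any finite initial range of $K$ and then matched to the asymptotic bound beyond it.
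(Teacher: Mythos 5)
Your proposal is correct, but it does not follow the paper's route for the simple reason that the paper has no proof of this statement at all: Lemma~\ref{lem_varmax} is imported as a black box from Theorem~1.8 of \citet{ding2015}, a result on variances of suprema of general Gaussian fields. Your argument is therefore a genuinely different, self-contained alternative, and it holds up. The two-point inequality $\Var[M] \ge \tfrac{1}{4}\min\bigl(\Pr[M\le a],\Pr[M\ge b]\bigr)(b-a)^2$ is airtight (split on which side of $(a+b)/2$ the mean $\Ex[M]$ falls and keep the far-side mass); the quartiles defined by $\Phi(a_K)^K = 1/4$ and $\Phi(b_K)^K = 3/4$ have exact tail masses $1/4$ by continuity of the normal distribution; and the reduction to $b_K - a_K = \Omega(1/\sqrt{\log K})$ is exactly the right move. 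The analytic work you flag as remaining does go through: sandwiching $1-e^{-x}$ between $x - x^2/2$ and $x$ at $x=\ln 4/K$ and $x=\ln(4/3)/K$ shows the tail ratio $\Phi^c(a_K)/\Phi^c(b_K)$ exceeds $3$ for every $K \ge 2$; the two-sided bounds of Lemma~\ref{lem:normpdf} give $\phi(a_K)/\phi(b_K) \ge 3\,(a_K/b_K)\bigl(1 - 1/b_K^2\bigr) \ge 2$ once $K$ is large enough that $a_K \ge 1$ and both correction factors exceed $0.9$ (which happens because $a_K, b_K = (1+o(1))\sqrt{2\log K}$); then $\exp\bigl((b_K-a_K)(b_K+a_K)/2\bigr) \ge 2$ together with $a_K + b_K = O(\sqrt{\log K})$ yields the gap bound, and the finitely many small $K$ are absorbed into the constant since $\Var[\max_i x_i] > 0$ for each fixed $K \ge 2$. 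As for what each approach buys: the paper's citation buys brevity and a generality far beyond what is needed (only the i.i.d.\ standard-normal case enters the proof of Theorem~\ref{thm:rooney_main}), while your proof buys self-containedness---its only external ingredient, Lemma~\ref{lem:normpdf}, is already in the paper---and, in principle, an explicit constant $C_{\mathrm{varmax}}$. One further point in your favor: you read the statement as a bound on $\Var[x_{I_{\max}}] = \Var[\max_i x_i]$, which is indeed the quantity the paper uses in the proof of Theorem~\ref{thm:rooney_main}; the literal reading $\Var[I_{\max}]$, the variance of the random index, is not what is meant there.
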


\section{Proofs}\label{sec: proofs}
\setcounter{subsection}{-1}

\subsection{Common Inequalities}\label{subsec common inequalities}

In the proofs, we often ignore the events that happen with probability $O(1/N)$. Since the expected regret per round is at most $\max_i \bx_i' \btheta_{g(i)} - \min_i \bx_i' \btheta_{g(i)}$, which is $O(1)$ in expectation, the events that happen with probability $O(1/N)$ contributes to the regret by $O(N \times 1/N) = O(1)$, which are insignificant in our analysis.

Specifically, we regard all the contexts are bounded by $\LCont{1/N} = O(\sqrt{\log N}) = \tilO(1)$ because
\begin{equation}\label{ineq:bound_context}
\Pr\left[
\forall {n \in [N], i \in I(n)}, \  
||\bx_i(n)|| \le \LCont{1/N}
\right] \ge 1 - \frac{1}{N}.
\text{\ \ \ (by Lemma \ref{lem:largestcontext})}
\end{equation}

Moreover, we also regard all the confidence bounds hold with
\begin{equation}
    \beta_n\left(\LCont{1/N}, 1/N\right) \le \beta_N\left(\LCont{1/N}, 1/N\right) = O(\sqrt{\log N}) = \tilO(1)
\end{equation}
because
\begin{equation}\label{ineq:bound_conf}
\Pr\left[
\forall {n \in [N], g \in G}, \ 
\norm{\hbtheta_g(n) - \btheta_g}_{\bar{\bV}_g(n)} \le \beta_n\left(\LCont{1/N}, \frac{1}{N}\right)
\right]
\ge 1 - \frac{|G|}{N}.
\end{equation}
follows form Eq.~\eqref{ineq:evtbound} in Lemma \ref{lem:abbasi},

Throughout the proof, we ignore the case these events do not hold.
We also denote $\LContRaw \coloneqq \LCont{1/N}$ and $\beta_N = \beta_N\left(\LContRaw, 1/N\right)$.

We next discuss the upper confidence bounds.
\begin{remark}[Bound for $\tbtheta_i$]
Let $\tbtheta_i = \argmax_{\bbtheta_{g(i)} \in \EC_{g(i)}(n;\delta)} \bx_i'\bbtheta_{g(i)}$.
By definition of $\tbtheta_i$, the following inequality always holds:
\begin{equation}\label{ineq:bound_conf_th}
\forall n, \ \norm{\tbtheta_i - \hbtheta_{g(i)}(n)}_{\bar{\bV}_g(n)} \le \beta_N.
\end{equation}
and Eq.~\eqref{ineq:bound_conf} implies 
\begin{equation}\label{ineq:bound_conf_tildeup}
\forall n, \ \bx_i' (\tbtheta_i - \btheta_{g(i)}(n)) \ge 0.
\end{equation}
Moreover, by using triangular inequality, we have
\begin{equation}
    \norm{\tbtheta_i - \btheta_g(n)}_{\bar{\bV}_g(n)} \le \norm{\tbtheta_i - \hbtheta_g(n)}_{\bar{\bV}_g(n)} + \norm{\hbtheta_g(n) - \btheta}_{\bar{\bV}_g(n)}
\end{equation}
and thus Eq.~\eqref{ineq:bound_conf} implies
\begin{equation}\label{ineq:bound_conf_ttrue}
\forall n, \ \norm{\tbtheta_i - \btheta_g(n)}_{\bar{\bV}_g(n)} \le 2\beta_N.
\end{equation}
\end{remark}

We use the calligraphic font to denote events. 
For two events $\EA, \EB$, let $\EA^c$ be a complementary event and $\{\EA, \EB\} \coloneqq \{\EA \cap \EB\}$.
We also use prime to denote events that are close to the original event. For example, event $\EA'$ is different from event $\EA$ but these two events are deeply linked.
Finally, we discuss the minimum eigenvalue. We denote $\bA \succeq \bB$ for two $d \times d$ matrices if $\bA - \bB$ is positive semidefinite: That is, $\lambdamin(\bA-\bB) \ge 0$. Note that $\lambdamin(\bA + \bB) \ge \lambdamin(\bA) + \lambdamin(\bB)$ and $\lambdamin(\bA + \bB) \ge \lambdamin(\bA)$ if $\bB \succeq \bZero$. We have $\bx \bx' \succeq \bZero$ for any vector $\bx \in \Real^d$.

\subsection{Proof of Theorem~\ref{thm_smlcand}}\label{subsec_smlcand}

We first bound regret per round $\regret(n) \coloneqq \Regret(n) - \Regret(n-1)$ in Lemma \ref{lem:regretperround}. Then, we prove Theorem \ref{thm_smlcand}.

\begin{lem}[Regret per Round]\label{lem:regretperround}
Under the laissez-faire decision rule, the regret per round is bounded as:
\begin{equation}
\regret(n) \le 2 \max_{i\in I(n)} \norm{\bx_i}_{\bar{\bV}_g^{-1}} \norm{\btheta_{g(i)} - \hbtheta_{g(i)}}_{\bar{\bV}_g}.
\end{equation}
\end{lem}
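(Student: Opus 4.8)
The plan is to use the standard three-term decomposition of per-round regret together with the defining optimality property of the greedy (laissez-faire) rule, and then to control each of the resulting prediction errors by a weighted Cauchy--Schwarz inequality in the $\bar{\bV}_g$-geometry.

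First I would fix the round $n$ and write $i^\star := \argmax_{i \in I(n)} q_i$ for the true-best candidate and $\iota := \iota(n)$ for the hired candidate, so that $\regret(n) = q_{i^\star} - q_{\iota}$. The decomposition is
\[
q_{i^\star} - q_{\iota}
= \bigl(q_{i^\star} - \hat{q}_{i^\star}(n)\bigr)
+ \bigl(\hat{q}_{i^\star}(n) - \hat{q}_{\iota}(n)\bigr)
+ \bigl(\hat{q}_{\iota}(n) - q_{\iota}\bigr).
\]
Because the laissez-faire rule hires the candidate of maximal estimated skill, we have $\hat{q}_{\iota}(n) \ge \hat{q}_{i^\star}(n)$, so the middle term is nonpositive and can be dropped. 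This leaves
\[
\regret(n)
\le \bigl(q_{i^\star} - \hat{q}_{i^\star}(n)\bigr) + \bigl(\hat{q}_{\iota}(n) - q_{\iota}\bigr)
\le \bigl|q_{i^\star} - \hat{q}_{i^\star}(n)\bigr| + \bigl|\hat{q}_{\iota}(n) - q_{\iota}\bigr|.
\]

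Next I would bound a single prediction error. For any candidate $i$, recall $q_i = \bx_i'\btheta_{g(i)}$ and $\hat{q}_i(n) = \bx_i'\hbtheta_{g(i)}(n)$, so that $q_i - \hat{q}_i(n) = \bx_i'(\btheta_{g(i)} - \hbtheta_{g(i)}(n))$. Since $\lambda > 0$ makes $\bar{\bV}_{g(i)}(n)$ positive definite, its symmetric square root exists; inserting $\bar{\bV}_{g(i)}^{-1/2}\bar{\bV}_{g(i)}^{1/2}$ and applying the ordinary Cauchy--Schwarz inequality gives
\[
\bigl|q_i - \hat{q}_i(n)\bigr|
= \bigl|(\bar{\bV}_{g(i)}^{-1/2}\bx_i)' \, \bar{\bV}_{g(i)}^{1/2}(\btheta_{g(i)} - \hbtheta_{g(i)})\bigr|
\le \norm{\bx_i}_{\bar{\bV}_{g(i)}^{-1}} \, \norm{\btheta_{g(i)} - \hbtheta_{g(i)}}_{\bar{\bV}_{g(i)}},
\]
with the round index $n$ suppressed as in the statement. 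Applying this bound to both $i = i^\star$ and $i = \iota$, and noting that each is an element of $I(n)$, each surviving term is at most $\max_{i \in I(n)} \norm{\bx_i}_{\bar{\bV}_g^{-1}} \norm{\btheta_{g(i)} - \hbtheta_{g(i)}}_{\bar{\bV}_g}$. Summing the two contributions produces the factor $2$ and yields exactly the claimed inequality.

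There is essentially no deep obstacle here; the argument is routine. The only two points requiring care are (i) checking that it is precisely the greedy structure of laissez-faire that annihilates the cross term $\hat{q}_{i^\star}(n) - \hat{q}_{\iota}(n)$ --- this is the single place where the specific decision rule enters --- and (ii) orienting the weighted Cauchy--Schwarz correctly, pairing the $\bar{\bV}_g^{-1}$-norm of the context against the $\bar{\bV}_g$-norm of the parameter error, so that the two dual norms are well defined and the matrix factors cancel cleanly.
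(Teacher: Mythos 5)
Your proof is correct and is essentially identical to the paper's: the paper also adds and subtracts $\hbtheta_{g(\cdot)}$ (your three-term decomposition in disguise), drops the cross term via the greedy choice of the firm, and applies the weighted Cauchy--Schwarz inequality $|\bx_i'(\btheta_{g(i)} - \hbtheta_{g(i)})| \le \norm{\bx_i}_{\bar{\bV}_{g(i)}^{-1}} \norm{\btheta_{g(i)} - \hbtheta_{g(i)}}_{\bar{\bV}_{g(i)}}$ to each of the two remaining terms before taking the maximum over $i \in I(n)$. No discrepancies to report.
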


\paragraph*{Proof.}
We denote the first-best decision rule by $i^*(n) \coloneqq \argmax_{i\in I(n)} \bx_i' \btheta_{g(i)}$. Then,
\begin{align} 
\regret(n) 
&= \bx_{i^*}' \btheta_{g(i^*)} - \bx_\iota' \btheta_{g(\iota)} \\
&\le \bx_{i^*}' \left(\hbtheta_{g(i^*)} + \btheta_{g(i^*)} - \hbtheta_{g(i^*)} \right) - \bx_{\iota}' \left(\hbtheta_{g(\iota)} + \btheta_{g(\iota)} - \hbtheta_{g(\iota)} \right) \\
&\le \bx_{i^*}' \left(\btheta_{g(i^*)} - \hbtheta_{g(i^*)} \right) - \bx_{\iota}' \left(\btheta_{g(\iota)} - \hbtheta_{g(\iota)} \right) \text{\ \ (by the greedy choice of firm)} \\
&\le \norm{\bx_{i^*}}_{\bar{\bV}_{g(i^*)}^{-1}} \norm{\btheta_{g(i^*)} - \hbtheta_{g(i^*)}}_{\bar{\bV}_{g(i^*)}} + \norm{\bx_{\iota}}_{\bar{\bV}_{g(\iota)}^{-1}} \norm{\btheta_{g(\iota)} - \hbtheta_{g(\iota)}}_{\bar{\bV}_{g(\iota)}} \\
&\ \ \ \ \ \ \text{\ \ (by the Cauchy-Schwarz inequality)}\\
&\le 2  \max_{i\in I(n)} \norm{\bx_i}_{\bar{\bV}_{g(i)}^{-1}} \norm{\btheta_{g(i)} - \hbtheta_{g(i)}}_{\bar{\bV}_{g(i)}}.
\end{align}
\qed

Now, we provide the proof of Theorem~\ref{thm_smlcand}.

\paragraph*{Proof of Theorem~\ref{thm_smlcand}}
For ease of discussion, we assume $\Nzero = 0$. That is, there is no initial sampling phase. Extending our results to the case of $\Nzero > 0$ is trivial.
We first show that regardless of estimated values $\hbtheta_1$, $\hbtheta_2$, the candidate of group $2$ is drawn with constant probability. 
Let $\mu_x = ||\bmu_x||$.
Let
\begin{align}
\EM_1(n) &= \left\{\bx_1'(n) \hbtheta_1 \le 0\right\}, \\
\EM_2(n) &= \left\{\bx_2'(n) \hbtheta_2 > 0\right\}. 
\end{align}
The sign of $\bx_1' \hbtheta_1(n)$ is solely determined by the component of $\bx_1(n)$ that is parallel to $\hbtheta_1(n)$.
This component is drawn from $\Normal(\mu_{x, \parallel}, \sigma_x)$ where $\mu_{x, \parallel}$ is the component of $\bmu_x$ that is parallel to $\hbtheta_1(n)$.
Therefore, for any $\hbtheta_1$, we have\footnote{$\Pr[\EM(n)] = \Phi^c(\mu_x/\sigma_x)$ when $\mu_{x, \parallel} = \mu_x$. That is, the direction of $\bmu_x$ is exactly the same as $\hbtheta_1$. } 
\begin{equation}\label{ineq:phiunit_one}
\Pr[\EM_1(n)] \ge \Phi^c(\mu_x/\sigma_x).
\end{equation}
Likewise, for $\hbtheta_2 \ne 0$, we have\footnote{In the subsequent discussion, we do not care point mass $\hbtheta_2 = 0$ of measure zero for $N_2(n) > 0$.}
\begin{equation}\label{ineq:phiunit_two}
\Pr[\EM_2(n)] \ge \Phi^c(\mu_x/\sigma_x).
\end{equation}
Let $\EX_2(n) = \{g(\iota(n)) = g\}$ for $g \in \{1,2\}$. By using Eq. \eqref{ineq:phiunit_one} and \eqref{ineq:phiunit_two}, 
\begin{align}
\Pr[\EX_2(n)] 
&= \Pr[x_1'(n) \hat{\theta}_1 < x_2'(n) \hat{\theta}_2] \\
&\ge \Pr[x_1'(n) \hat{\theta}_1 \le 0 < x_2'(n) \hat{\theta}_2] \\
&= \Pr[\EM_1(n), \EM_2(n)] \\
&\ge \left( \Phi^c(\mu_x/\sigma_x) \right)^2. \label{ineq_phisqdraws} \text{\ \ \ \ (by Eq.~\eqref{ineq:phiunit_one}, \eqref{ineq:phiunit_two})}
\end{align}

Let $N_2^{(\EM)}(n) = \sum_{n'=1}^n \Ind[\EM_1(n'), \EX_2(n')] \le N_2(n)$. Eq. \eqref{ineq_phisqdraws} implies 
\begin{equation}
    \Ex[N_2^{(\EM)}(n)] \ge \left(  \Phi^c(\mu_x/\sigma_x) \right)^2 n.
\end{equation}

By using the Hoeffding inequality, with a probability at least $1-2/N^2$, we have
\begin{equation}\label{ineq:ntwosmallmin}
N_2^{(\EM)} \ge n \left((\Phi^c(\mu_x/\sigma_x))^2 - k\right)
\end{equation}
for $k = \sqrt{\frac{\log(N)}{n}}$.
Therefore, union bound over $n=1,2,\dots,N$ implies Eq.~\eqref{ineq:ntwosmallmin} holds with a  probability at least $1-\sum_n 2/N^2 = 1 - 2/N$.

In the following we bound the $\lambdamin(\bar{\bV}_g)$.
Note that a hiring of a worker $i_2$ under events $\EM_1(n), \EX_2(n)$ satisfies a diversity condition (Lemma \ref{lem:normdiversity}) with $\hat{b} = 0$, and we have 
\begin{equation}
\lambdamin(\Ex[\bx_\iota \bx_\iota'| \EM_1(n), \EX_2(n)]) \ge \lambda_0 
\end{equation}
with $\lambda_0 = \sigma_x^2/10$.
Using the matrix Azuma inequality (Lemma \ref{lem:matrixazuma}) for subsequence $\{\bx_\iota \bx_\iota': \EM_1(n), \EX_2(n)\}$ with $\bX = \bx_\iota \bx_\iota' - \Ex[\bx_\iota \bx_\iota']$ and $\sigma_A = 2 L^2$, for $t = \sqrt{32 N_2 \sigma_A^2} \log(dN)$, with probability $1-1/N$
\begin{align}\label{ineq:contextmin}
\lambdamin\left(\sum_{n:\iota(n)=2} \bx_\iota \bx_\iota'\right) \ge N_2^{(\EM)} \lambda_0 - t.
\end{align}

In summary, with probability $1 - 4/N$, Eq.~\eqref{ineq:ntwosmallmin} and \eqref{ineq:contextmin} hold, and then, we have
\begin{align}
\lambdamin(\bar{V}_2) 
&\ge N_2^{(\EM)} \lambda_0 - \sqrt{32 N_2 \sigma_A^2} \log(dN) \\
&\ge (n (\Phi^c(\mu_x/\sigma_x))^2 - k) \lambda_0 - \sqrt{32 N_2 \sigma_A^2} \log(dN) \\
&= n (\Phi^c(\mu_x/\sigma_x))^2  \lambda_0 - \tilO(\sqrt{n}). \label{ineq:linearntwo}
\end{align}
By using the symmetry of the two groups, exactly the same results as Eq. \eqref{ineq:linearntwo} holds for group $1$.

In the following, we bound the regret as a function of $\min_g \lambdamin(\bar{V}_g)$. 
Eq. \eqref{ineq:linearntwo} holds with probability $1 - O(1/N)$, and we ignore events of probability $O(1/N)$ that do not affect the analysis. 
The regret is bounded as 
\begin{align}\label{ineq:thm_smlcand_final}
\Regret(N) &  
\le 2 \sum_n \max_i \norm{\bx_{i}}_{\bar{\bV}_{g(i)}^{-1}} \norm{\btheta_{g(i)} - \hbtheta_{g(i)}}_{\bar{\bV}_{g(i)}} \text{\ \ \ (by Lemma \ref{lem:regretperround})} \nn
&\le 2 \sum_n \max_i \norm{\bx_{i}}_{\bar{\bV}_{g(i)}^{-1}} \beta_N \text{\ \ \ (by Eq.~\ref{ineq:bound_conf})} \\
&\le 2 \sum_n \max_i \frac{||\bx_{i}||}{\lambdamin(\bar{\bV}_{g(i)})} \beta_N \text{\ \ \ (by definition of eigenvalues)} \\
&\le 2 \sum_n \max_i \frac{L}{\lambdamin(\bar{\bV}_{g(i)})} \beta_N \text{\ \ \ (by Eq.~\eqref{ineq:bound_context})} \\
&\le 2 L \sum_n \max_i \min\left(\frac{1}{\lambdamin(\bar{\bV}_{g(i)})}, \frac{1}{\lambda}\right) \beta_N \text{\ \ \ (by $\lambdamin(\bar{\bV}_{g(i)}) \ge \lambda$)} \\
&\le 2 L \sum_n \min\left(\sqrt{\frac{1}{n (\Phi^c(\mu_x/\sigma_x))^2 \lambda_0 - \tilO(\sqrt{n})}}, \frac{1}{\lambda}\right) \beta_N \text{\ \ \ (by Eq. \eqref{ineq:linearntwo})} \\
&\le 4 L \sqrt{\frac{N}{(\Phi^c(\mu_x/\sigma_x))^2 \lambda_0}}\beta_N + \tilO(1) \\ 
&\text{\ \ \ \ \ \ $\biggl($by $\sum_{n=C^2+1}^{N} \left\{\dfrac{1}{\sqrt{n - C \sqrt{n}}}\right\} = 2 \sqrt{N} + \tilO(1)$ for $C=\tilO(1))\biggr)$} 
\end{align}
which completes Proof of Theorem \ref{thm_smlcand}. \qed

\subsection{Proof of Theorem \ref{thm_reglower}}\label{subsec_reglower}

\paragraph*{Proof.}
Since we consider $d=1$ case in this theorem, we remove bold styles in scalar variables. 
In this proof, we assume $\mu_x \theta > 0$ and $\theta > 0$. The proof for the case of $\mu_x \theta < 0$ or $\theta < 0$ is similar. 
Let $\htheta_{g,t}$ be the value of $\htheta_g$ when group $g$ candidate was chosen $t$ times. 
With a slight abuse of notation, we use $i_2 = i_2(n)$ to denote the unique candidate of group $2$ in each round $n$.
We first define the several events that characterize the perpetual underestimation. That are, 
\begin{align}
\EP &= \left\{ \left|\hat{\theta}_{2,\Nzero_2}\right| < \frac{b}{2} \theta \right\}  \\
\EP'(n) &= \left\{x_{i_2(n)} \hat{\theta}_{2,\Nzero_2} < \frac{1}{2} \mu_x \theta \right\} \\ 
\EQ &= \left\{\forall{t \ge \Nzero_1}, \  \htheta_{1,t} \ge \frac{1}{2} \theta \right\} \\ 
\EQ'(n) &= \left\{\exists i \text{ s.t. } g(i)=1, x_i \htheta_{1,N_1(n)} \ge \frac{1}{2} \mu_x \theta \right\} 
\end{align}
where $b$ is a small\footnote{We will specify $b = O(1/(\log N))$.} constant that we specify later. 
$\EP$ and $\EP'$ are about the minority whereas $\EQ$ and $\EQ'$ are about the majority:
Intuitively, Event $\EP$ states that $\htheta_2$ is largely underestimated, and $\EP'$ states that the minority candidate is undervalued.
$\EQ$ states that the majority parameter $\htheta_1$ is consistently lower-bounded, and $\EQ'$ states the stability of the best candidate of the majority after $n$ rounds.
Under laissez-faire, 
\begin{equation}
\bigcap_{n=\Nzero+1}^N (\EP'(n) \cap \EQ'(n)) 
\end{equation}
implies the majority candidate is always chosen ($g(\iota) = 1$ for all $n$), which is exactly the perpetual underestimation of Definition \ref{def:perpunderrepr}. Therefore, proving 
\begin{equation}\label{ineq_reglower_total}
\Pr\left[ \bigcap_{n=\Nzero+1}^N (\EP'(n) \cap \EQ'(n)) \right] \ge \tilO(1)
\end{equation}
concludes the proof. We bound these events by the following lemmas and finally derive Eq. \eqref{ineq_reglower_total}.

\begin{lem}\label{lem:lflarge_smltheta2}
\begin{equation}
\Pr[\EP] \ge C_1 b    
\end{equation}
for some constant $C_1$.
\end{lem}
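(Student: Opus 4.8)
The plan is to use the closed form of the scalar ($d=1$) ridge estimator after the initial sampling phase and show that $\htheta_{2,\Nzero_2}$ lands in the short interval $(-\tfrac{b}{2}\theta,\tfrac{b}{2}\theta)$ of width $b\theta$ with probability scaling linearly in $b$. Let the sums below range over the $\Nzero_2$ group-$2$ workers drawn in the initial phase, where $x_t\sim\Normal(\mu_x,\sigma_x^2)$ i.i.d.\ (by Assumptions \ref{assp:idcontext} and \ref{assp:normalcontext}) and $y_t = x_t\theta + \eps_t$. Writing $V := \sum_t x_t^2$ and $W := \sum_t x_t\eps_t$, the estimator is
\begin{equation}
\htheta_{2,\Nzero_2} = \frac{\theta V + W}{V + \lambda}.
\end{equation}
Conditioning on the contexts $(x_t)$, the errors are independent $\Normal(0,\sigma_\eps^2)$, so $W \mid (x_t) \sim \Normal(0,\sigma_\eps^2 V)$ and hence
\begin{equation}
\htheta_{2,\Nzero_2} \mid (x_t) \sim \Normal\!\left(\frac{\theta V}{V+\lambda}, \ \frac{\sigma_\eps^2 V}{(V+\lambda)^2}\right).
\end{equation}
Since this conditional law depends on $(x_t)$ only through $V$, it is also the conditional law of $\htheta_{2,\Nzero_2}$ given $V$.

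First I would restrict to the event $\{V\in[v_1,v_2]\}$ for fixed $0<v_1<v_2<\infty$. As $V$ is a sum of $\Nzero_2\ge 1$ squared non-degenerate normals, it has a continuous law supported on $(0,\infty)$, so $p_0 := \Pr[V\in[v_1,v_2]]>0$ for suitable $v_1,v_2$. On this event the conditional mean $\theta V/(V+\lambda)$ lies in $(0,\theta)$ and the conditional variance $\sigma_\eps^2 V/(V+\lambda)^2$ is bounded away from both $0$ and $\infty$ by constants depending only on $\theta,\sigma_\eps,\lambda,v_1,v_2$. Assuming $b\le 1$ (which is all we need, since ultimately $b=O(1/\log N)$), the interval $[-\tfrac{b}{2}\theta,\tfrac{b}{2}\theta]$ sits inside $[-\tfrac{\theta}{2},\tfrac{\theta}{2}]$, so the exponent $(x-\mu)^2/(2s^2)$ of the Gaussian density stays uniformly bounded there ($|x-\mu|\le \tfrac{3}{2}\theta$). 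Hence the conditional density is bounded below by a positive constant $c_1$ uniformly on that interval, giving $\Pr[\EP \mid V]\ge c_1\,b\theta$ on $\{V\in[v_1,v_2]\}$. Averaging yields
\begin{equation}
\Pr[\EP] \ge p_0\, c_1\, b\,\theta = C_1 b, \qquad C_1 := p_0\, c_1\, \theta.
\end{equation}

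The argument is essentially routine, and I expect the only point requiring care to be the uniform lower bound on the conditional density: this is precisely why I would condition $V$ into a compact subinterval of $(0,\infty)$ rather than attempt to control the unconditional density of $\htheta_{2,\Nzero_2}$ directly (whose behaviour as $V\to 0$ or $V\to\infty$ is awkward to handle). I would also emphasize that only linearity in $b$ is claimed and needed downstream, so no sharper estimate of the constants is required here.
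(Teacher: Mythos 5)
Your proposal is correct and follows essentially the same route as the paper: both proofs exploit the closed form of the scalar ridge estimator, condition on the characteristics so that the estimator is conditionally Gaussian given $V=\sum_t x_t^2$, restrict $V$ to a constant-probability band, and lower-bound the probability of the width-$b\theta$ target interval by (interval length) $\times$ (uniform lower bound on the conditional density). The only difference is in how the band for $V$ is obtained — the paper pins it down quantitatively via chi-squared concentration and a median argument (yielding an explicit $C_1$ with its dependence on $\Nzero_2$, $\mu_x$, $\sigma_x$, $\sigma_\eps$, $\theta$), whereas you invoke the soft fact that the continuous law of $V$ assigns positive mass to any compact subinterval of $(0,\infty)$; since the lemma and its downstream use in Theorem \ref{thm_reglower} only require a positive constant independent of $b$ and $N$, your non-explicit constant suffices.
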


\paragraph*{Proof.}
We denote $x_{i_2,t}$ for representing $t$-th sample of group $2$ during the initial sampling phase, which is an i.i.d.\ sample from $\Normal(\mu_x, \sigma_x^2)$.
Likewise, we also denote $y_{i_2,t} = x_{i_2,t} \theta + \eps_t$.
\begin{align}
\Pr[\EP] 
&= \Pr\left[ \left| \frac{\sum_{t=1}^{\Nzero_2} x_{i_2,t} (x_{i_2,t} \theta +  \eps_t)}{\sum_{t=1}^{\Nzero_2} x_{i_2,t}^2 + \lambda} \right| \le \frac{b}{2} \theta \right] \\
&= \Pr\left[ \left| \sum_{t=1}^{\Nzero_2} x_{i_2,t} (x_{i_2,t} \theta +  \eps_t) \right| \le \frac{b}{2} \theta \left(\sum_{t=1}^{\Nzero_2} x_{i_2,t}^2 + \lambda\right) \right] \\
&= \Pr\left[ -g(b) \le \sum_{t=1}^{\Nzero_2} x_{i_2,t} (x_{i_2,t} \theta +  \eps_t) \le g(b) \right] 
\end{align}
where
\begin{equation}
    g(b) = \frac{b}{2} \theta \left(\sum_{t=1}^{\Nzero_2} x_{i_2,t}^2 + \lambda\right).
\end{equation}

Let $x_{i_2,t} = \mu_x + e_t$. 
Define an event $\ER$ as follows.
\begin{equation}
\ER = \left\{\sum_{t=1}^{\Nzero_2} e_t^2 \le 5 \sigma_x^2 \Nzero_2\right\}
\subseteq \left\{\sum_{t=1}^{\Nzero_2} x_{i_2,t}^2 \le 2 \Nzero_2 (\mu_x^2 + 5 \sigma_x^2)  \right\}
\end{equation}
where we used $x_{i_2,t}^2 = (\mu_x + e_t)^2 \le 2 (\mu_x^2 + e_t^2)$ in the last transformation.
By using Lemma \ref{lem:concentration_chisq}, we have
\begin{equation}
\Pr[\ER^c] \le 1 - 2 e^{-2 N_2^{(0)}} \le 1/4.
\end{equation}
Moreover, let 
\begin{equation}
\ES = \left\{\sum_{t=1}^{\Nzero_2} x_{i_2,t}^2 = \sum_{t=1}^{\Nzero_2} (\mu_x + e_t)^2 \ge \Nzero_2 \mu_x^2\right\}.
\end{equation}
It is easy to confirm that $\Pr[\sum_n (\mu_x + e_t)^2 \ge \Nzero_2 \mu_x^2] \ge 1/2$, and thus 
\begin{equation}\label{ineq:cdfourth}
\Pr[\ER \cap \ES] \ge 1 - 1/4 - 1/2 = 1/4.
\end{equation}
Note that $\ES$ implies
\begin{equation}\label{ineq:gblower}
g(b) \ge \frac{b}{2} \theta \Nzero_2 \mu_x + \lambda.
\end{equation}

Conditional on $x_{i_2,t}$, we have $x_{i_2,t} \eps_t \sim \Normal(0, x_{i_2,t}^2 \sigma_\eps^2)$. 
Moreover, by using the property on the sum of independent normal random variables, 
\begin{equation}\label{ineq:epsvarcond}
\sum_t x_{i_2,t} \eps_t \sim \Normal\left(0, \sum_t x_{i_2,t}^2 \sigma_\eps^2\right).
\end{equation}
Letting 
\begin{align}
 L_R &= \frac{-g(b) - \sum_t x_{i_2,t}^2\theta}{ \sigma_\eps \sqrt{\sum_t x_{i_2,t}^2} }, \\
 U_R &= \frac{g(b) - \sum_t x_{i_2,t}^2\theta}{ \sigma_\eps \sqrt{\sum_t x_{i_2,t}^2} }, \\
 M_R &= \frac{L_R+U_R}{2} = \frac{- \left(\sqrt{\sum_t x_{i_2,t}^2}\right) \theta}{ \sigma_\eps },
\end{align}
we have
\begin{align}
\lefteqn{
\Pr\left[ -g(b) \le \sum_{t=1} (x_{i_2,t}^2 \theta + x_{i_2,t} \eps_n) \le g(b) \right] 
}\\
&\ge \Pr\left[ -g(b) \le \sum_{t=1} (x_{i_2,t}^2 \theta + x_{i_2,t} \eps_t) \le g(b), \ER, \ES\right] \\
&\ge \Pr\left[ -g(b)-\sum_{t=1} x_{i_2,t}^2 \theta \le \sum_{t=1} x_{i_2,t} \eps_t \le g(b) - \sum_{t=1} x_{i_2,t}^2 \theta, \ER, \ES\right] \\
&\ge \Pr[\ER, \ES] \min_{\{e_n: \ER, \ES\}}\left[ \int_{L_R}^{U_R} \phi(y) dy\right] \text{\ \ \ (by Eq. \eqref{ineq:epsvarcond})} \\ 
&\ge \frac{1}{4} \min_{\{e_n: \ER, \ES\}}\left[ \int_{L_R}^{U_R} \phi(y) dy\right].  \text{\ \ \ (by Eq. \eqref{ineq:cdfourth})} \label{ineq:integralthin}
\end{align}
The following bounds Eq. \eqref{ineq:integralthin}. 
The integral's bandwidth is 
\begin{equation}
U_R - L_R 
= \frac{2g(b)}{\sigma_\eps \sqrt{\sum_t x_{i_2,t}^2}} 
\ge \frac{2g(b)}{\sigma_\eps \sqrt{2 \Nzero_2 (\mu_x^2 + 5 \sigma_x^2)}}.
\text{\ \ \ (by event $\ER$)}
\end{equation}
The value of $\phi(y)$ within $[M_R-1, M_R+1]$ is at least $\phi(M_R)/e^{1/2} \ge (1/2) \phi(M_R)$. Therefore,
\begin{equation}\label{ineq:intbandwidth}
\int_{L_R}^{U_R} \phi(y) dy \ge \min\left(2, \frac{2g(b)}{\sigma_\eps \sqrt{2 \Nzero_2 (\mu_x^2 + 5 \sigma_x^2)}}\right) \times \frac{\phi(M_R)}{2}.
\end{equation}
Moreover,
\begin{align}
\phi(M_R) 
&= \frac{1}{\sqrt{2\pi}} \exp\left(-\frac{(M_R)^2}{2}\right) = \frac{1}{\sqrt{2\pi}} \exp\left(-\frac{\theta^2 \sum_{t=1}^{\Nzero_2} x_{i_2,t}^2 }{2\sigma_\eps^2}\right) \\
&\le \frac{1}{\sqrt{2\pi}} \exp\left(-\frac{2 \theta^2 \Nzero_2 (\mu_x^2 + 5 \sigma_x^2)}{2\sigma_\eps^2}\right). \text{\ \ \ (by event $\ER$)} \label{ineq:phim}
\end{align}
By using these, we have
\begin{align}
\int_{L_R}^{U_R} \phi(y) dy
&\ge \min\left(2, \frac{2g(b)}{\sigma_\eps \sqrt{2  (\mu_x^2 + 5 \sigma_x^2)}}\right) \frac{\phi\left( M_R \right)}{2}  \text{\ \ \ (by Eq.~\eqref{ineq:intbandwidth})}\\
&= \min\left(1, \frac{g(b)}{\sigma_\eps \sqrt{2 \Nzero_2 (\mu_x^2 + 5 \sigma_x^2)}}\right) \phi(M_R)\\
&= O\left(b \sqrt{\Nzero_2} \exp\left(-\frac{2 \theta^2 \Nzero_2 (\mu_x^2 + 5 \sigma_x^2)}{2\sigma_\eps^2}\right) \right). \text{\ \ \ (by Eq. \eqref{ineq:gblower}, \eqref{ineq:phim})}
\end{align}
The exponent does not depend on $b$: Given all model parameters as constant, the probability of $\EP$ is $O(b)$, which concludes the proof. \qed

The following Lemma \ref{lem:lflarge_goodtheta1} on $\EQ$ is about the stability of the mean estimator, which is widely used to prove lower bounds in multi-armed bandit problems. That is, for any $\Delta > 0$, a wide class of mean estimators $\htheta$ of $\theta$ satisfies 
\begin{equation}\label{ineq:meanstable}
\Pr\left[\bigcup_{n=1}^\infty \left(\htheta(n) \ge \theta - \Delta\right) \right] \ge C
\end{equation}
for some constant $C = C(\theta, \Delta) > 0$. Lemma \ref{lem:lflarge_goodtheta1} is a version Eq. \eqref{ineq:meanstable} for our ridge estimator.
\begin{lem}\label{lem:lflarge_goodtheta1}
There exists a constant $n$ that is independent on $N$ such that, with a warm-start of size $\Nzero_1 \ge n$,
\begin{equation}
\Pr[\EQ] \ge C_2
\end{equation}
holds with $C_2 = 1/4$.
\end{lem}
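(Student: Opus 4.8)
The plan is to reduce the ``for all $t$'' statement to a single uniform-in-$t$ comparison between a martingale and its quadratic variation, and then invoke the self-normalized bound of Lemma~\ref{lem:abbasi}. In the scalar ($d=1$) case, write $V_t = \sum_{s=1}^t x_s^2$ for the group-$1$ design (the sum runs over the first $t$ group-$1$ hires, starting with the $\Nzero_1$ i.i.d.\ warm-start samples) and $M_t = \sum_{s=1}^t x_s \eps_s$. Since $y_s = x_s\theta + \eps_s$, the ridge estimator decomposes as
\begin{equation}
\htheta_{1,t} - \theta = \frac{M_t - \lambda \theta}{V_t + \lambda}.
\end{equation}
Because $\eps_s$ is mean-zero and independent of the history and of $x_s$ (the selection uses only past data and the current round's characteristics, never the current skill noise), $(M_t)$ is a martingale with conditional variance process $\sigma_\eps^2 V_t$. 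Hence $\EQ$ holds as soon as $|\htheta_{1,t}-\theta| \le \tfrac12\theta$ for every $t \ge \Nzero_1$, which I would control through the two ingredients below.

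First I would apply Lemma~\ref{lem:abbasi} to group $1$: with probability at least $1-\delta$, simultaneously for all $t$,
\begin{equation}
|\htheta_{1,t} - \theta| \le \frac{\beta_t}{\sqrt{V_t + \lambda}},
\end{equation}
where $\beta_t = \tilO(\sqrt{\log t})$. This is precisely the uniform-in-$t$ (``stability'') control that distinguishes $\EQ$ from a pointwise statement, and it is the reason the self-normalized martingale bound, rather than a single concentration inequality, is the right tool. Second, I would establish a uniform linear lower bound $V_t \ge c\,t$ for all $t \ge \Nzero_1$ and some constant $c>0$: on the warm-start block the $x_s^2$ are i.i.d.\ with mean $\mu_x^2+\sigma_x^2$, so the Hoeffding inequality (with the context bound of Lemma~\ref{lem:largestcontext} used to truncate) gives $V_{\Nzero_1} \ge c\,\Nzero_1$; for the subsequent rounds each increment $x_s^2$ is nonnegative with a strictly positive conditional mean, so a martingale/Azuma argument (Lemma~\ref{lem:matrixazuma} specialized to $d=1$) yields $V_t \ge c\,t$ throughout, again uniformly in $t$.

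Combining the two bounds, on their intersection one gets $|\htheta_{1,t}-\theta| \le \beta_t/\sqrt{ct+\lambda} = \tilO(\sqrt{(\log t)/t})$ for every $t \ge \Nzero_1$. This ratio is decreasing in $t$ once $t$ exceeds an absolute constant, so its maximum over $t \ge \Nzero_1$ is attained at $t=\Nzero_1$; taking $\Nzero_1$ larger than an absolute constant forces the ratio below $\tfrac12\theta$ for all $t \ge \Nzero_1$, which is exactly $\EQ$. Tuning the two failure probabilities (the $\delta$ in Lemma~\ref{lem:abbasi} and the deviation probability for the lower bound on $V_t$) so that they sum to $3/4$ then yields $\Pr[\EQ] \ge 1/4 = C_2$ by a union bound.

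The main obstacle is making the threshold $\Nzero_1$ genuinely independent of the horizon $N$. The difficulty is that $\beta_t$ depends on $N$ both through the chosen confidence level and through the global context bound $L = \LContRaw = \tilO(\sqrt{\log N})$, so naively $\beta_{\Nzero_1}$ retains a residual (at most doubly-logarithmic) $N$-dependence. What resolves this is that the horizon enters $\beta_t$ only inside a logarithm, whereas $V_t$ grows linearly; since a linear term dominates any logarithmic one, the decisive comparison $c\,t \ge (4/\theta^2)\,\beta_t^2$ holds for all $t \ge \Nzero_1$ with $\Nzero_1$ an absolute constant. I would also take care that the lower bound on $V_t$ not secretly assume $\htheta_1>0$ (the very conclusion being sought): this is avoided because both the martingale control and the nonnegativity of the increments $x_s^2$ hold regardless of the sign of the running estimate.
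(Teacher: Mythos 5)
There is a genuine gap, and it sits exactly at the point you flag as ``the main obstacle.'' Your argument runs through the $L$-dependent bound $\beta_t(L,\delta)$ of Eq.~\eqref{ineq:evtbound} together with a uniform linear lower bound $V_t \ge c\,t$, and you then claim that the decisive comparison $c\,t \ge (4/\theta^2)\beta_t^2$ holds for all $t \ge \Nzero_1$ with $\Nzero_1$ an absolute constant because ``a linear term dominates any logarithmic one.'' That dominance argument is valid only for the $\log t$ part of $\beta_t^2$; the $N$-dependent part (a $\log\log N$ term through $L = \LContRaw = \tilO(\sqrt{\log N})$, or a $\log N$ term if $\delta$ were taken as $1/N$) is a \emph{constant in $t$}, so at the fixed time $t = \Nzero_1$ the inequality $c\,\Nzero_1 \ge (4/\theta^2)\left(\log \Nzero_1 + \log\log N + O(1)\right)$ fails once $N$ is large enough. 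The same defect re-enters through your second ingredient: establishing $V_t \ge c\,t$ uniformly by Hoeffding/Azuma requires truncating the increments $x_s^2$ at $L^2 = \tilO(\log N)$, which places an $N$-dependent factor in the deviation term and pushes the time at which $V_t \ge c\,t$ becomes reliable to a horizon-dependent threshold. So, as written, your proof yields a warm-start size $n$ that grows (slowly) with $N$, which is precisely what the lemma must rule out.

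The paper avoids this entirely by never leaving the self-normalized world: it uses the determinant form of Lemma~\ref{lem:abbasi}, Eq.~\eqref{ineq:evtboundbase}, with the \emph{constant} confidence level $\delta = 1/2$, so the radius depends only on $\bar{V}_{1,t}$ itself and not on $L$, $N$, or $t$. Combining this with the elementary inequality $\sqrt{\log x}\le\sqrt{x}$, the error satisfies $|\htheta_{1,t}-\theta| \le \bigl(\sigma_\eps\sqrt{2}\,\lambda^{-1/4}\,\bar{V}_{1,t}^{1/4} + \lambda^{1/2}S\bigr)/\sqrt{\bar{V}_{1,t}}$, which is a decreasing function of $\bar{V}_{1,t}$ alone; since the increments $x_s^2$ are nonnegative, $\bar{V}_{1,t}$ is nondecreasing in $t$, so a single constant lower bound $\bar{V}_{1,\Nzero_1} \ge \mu_x^2\Nzero_1$ (which holds with probability at least $1/2$ from the i.i.d.\ warm start) suffices for all $t \ge \Nzero_1$ simultaneously --- no linear-growth condition on $V_t$, no Azuma step, and no appearance of $N$ anywhere. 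If you want to rescue your route, replace $\beta_t(L,\delta)$ by the determinant bound (or at least by a bound whose context truncation is at level $\tilO(\sqrt{\log t})$ rather than $\tilO(\sqrt{\log N})$); but at that point you will have essentially reconstructed the paper's proof, and the requirement $V_t \ge c\,t$ becomes superfluous.
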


\paragraph*{Proof.}
In this proof, we use $t \ge 0$ to denote the estimator where the $t$-th sample is drawn. 
For example, $\bar{V}_{g,t} \coloneqq \bar{V}_g(n)$ of $n: N_1(n-1) = t$. 
Note that we consider $d=1$ case and $\bar{V}_{1,t} = \sum_{t'=1}^{t} x_{1,t}^2 + \lambda$. 
By martingale bound (Eq. \eqref{ineq:evtboundbase}), with probability $1-\delta$, 
\begin{equation}\label{ineq:theta1dbound}
\forall {t \ge 1},\ \ \ |\htheta_{1,t} - \theta| \sqrt{\bar{V}_{1,t}} 
\le 
\sigma_\eps \sqrt{\log\left(\frac{\bar{V}_{1,t}^{1/2}\lambda^{-1/2}}{\delta}\right)}
+
\lambda^{1/2} S.
\end{equation}
Let $\delta = 1/2$. 
It follows from $\sqrt{\log{x}} \le \sqrt{x}$ for any $x > 0$ that
\begin{equation}\label{ineq:sqrtlog}
\sqrt{\log\left(2\bar{V}_{1,t}^{1/2}\lambda^{-1/2}\right)} \le \sqrt{2\bar{V}_{1,t}^{1/2}\lambda^{-1/2}}.
\end{equation}
Therefore,
\begin{align}
|\htheta_{1,t} - \theta| &\le 
\frac{
\sigma_\eps \sqrt{\log\left(\dfrac{\bar{V}_{1,t}^{1/2}\lambda^{-1/2}}{\delta}\right)}
+
\lambda^{1/2} S
}{
\sqrt{\bar{V}_{1,t}} 
} \text{\ \ \ (by Eq. \eqref{ineq:theta1dbound})} \\
&\le 
\frac{
\sigma_\eps \sqrt{2\bar{V}_{1,t}^{1/2}\lambda^{-1/2}}
+
\lambda^{1/2} S
}{
\sqrt{\bar{V}_{1,t}} 
} \text{\ \ \ (by \eqref{ineq:sqrtlog})} \\
\end{align}
and thus 
\begin{equation}
\forall {t \ge \Nzero_1}, \ |\htheta_{1,t} - \theta| \le \frac{1}{2} |\theta|
\end{equation}
holds if 
\begin{align}
\sqrt{\bar{V}_{1,\Nzero_1}} 
\ge 
2 \theta \max\left(
\sigma_\eps \sqrt{2\bar{V}_{1,\Nzero_1}^{1/2}\lambda^{-1/2}}
,
\lambda^{1/2} S
\right)
\end{align}
whose sufficient condition for the initial sample size $\Nzero_1$ is 
\begin{equation}\label{ineq_suffnzero}
\bar{V}_{1,\Nzero_1} \ge 
\max\left[
\frac{64}{\theta^4} (\sigma_\eps^4 / \lambda^2), 
\frac{4}{\theta^2} \lambda S^2
\right].    
\end{equation}
Note that $\Pr[\bar{V}_{1,\Nzero_1} \ge \mu_x^2 \Nzero_1] \ge 1/2$.
Letting the observation noise $\sigma_\eps$ and regularizer $\lambda$ be constants, constant size of warm-start is enough to assure this bound with probability $C_2 = 1/2 \times 1/2 = 1/4$. \qed

The following lemma states that, when $\htheta_2$ is very small, the estimated quality $x_{i_2} \htheta_2$ of the minority group is likely to be small.
\begin{lem}\label{lem:lflarge_loweval2} 
There exists a constant $C_3, C_4$ that is independent of $N$ such that
\begin{equation}\label{ineq:epprime}
\Pr[\EP'(n)|\EP] \ge 1 - C_3 \exp\left(- C_4/b \right)
\end{equation}
holds.
\end{lem}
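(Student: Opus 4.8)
The plan is to bound the complementary probability $\Pr[\EP'(n)^c \mid \EP]$ and show it is at most $C_3 \exp(-C_4/b)$. The key structural observation is that $\EP$ constrains only the initial-sampling estimator $\hat{\theta}_{2,\Nzero_2}$, which is a deterministic function of the characteristics and skills realized during the initial sampling phase, whereas $\EP'(n)$ additionally involves the round-$n$ minority candidate's characteristic $x_{i_2(n)} \sim \Normal(\mu_x,\sigma_x^2)$, drawn freshly and independently of all initial-phase data. Hence, conditioning on $\EP$, I may treat $\hat{\theta}_{2,\Nzero_2}$ as a fixed scalar obeying $|\hat{\theta}_{2,\Nzero_2}| < (b/2)\theta$, while $x_{i_2(n)}$ retains its $\Normal(\mu_x,\sigma_x^2)$ law.

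Next I reduce $\EP'(n)^c$ to a single tail event. On $\EP$ we have $x_{i_2(n)}\hat{\theta}_{2,\Nzero_2} \le |x_{i_2(n)}|\,|\hat{\theta}_{2,\Nzero_2}| < |x_{i_2(n)}|\,(b/2)\theta$, so (recalling $\mu_x\theta>0$ and $\theta>0$ as assumed in the proof of the theorem) the event $\EP'(n)^c = \{x_{i_2(n)}\hat{\theta}_{2,\Nzero_2} \ge (1/2)\mu_x\theta\}$ can occur only when $|x_{i_2(n)}|\,(b/2)\theta > (1/2)\mu_x\theta$, i.e.\ $|x_{i_2(n)}| > \mu_x/b$. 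This yields $\Pr[\EP'(n)^c \mid \EP] \le \Pr[\,|x_{i_2(n)}| > \mu_x/b\,]$.

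Finally I apply the normal tail bound (Lemma~\ref{lem:normpdf}). Writing $x_{i_2(n)} = \mu_x + \sigma_x Z$ with $Z$ standard normal, the right tail is $\Phi^c\!\left(\mu_x(1-b)/(\sigma_x b)\right)$ and the left tail $\Phi^c\!\left(\mu_x(1+b)/(\sigma_x b)\right)$ is no larger; both are dominated by the right tail. For $b$ small (say $b \le 1/2$, which holds since we will later set $b = O(1/\log N)$) the argument exceeds $\mu_x/(2\sigma_x b) \ge 1$, so Lemma~\ref{lem:normpdf} gives $\Phi^c(\cdot) \le (2\pi)^{-1/2}\exp\!\left(-\mu_x^2/(8\sigma_x^2 b^2)\right)$. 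Summing the two tails yields $\Pr[\EP'(n)^c \mid \EP] \le (2/\sqrt{2\pi})\exp\!\left(-\mu_x^2/(8\sigma_x^2 b^2)\right)$. Since $b \le 1$ implies $1/b^2 \ge 1/b$, this is bounded by $C_3\exp(-C_4/b)$ with $C_3 = 2/\sqrt{2\pi}$ and $C_4 = \mu_x^2/(8\sigma_x^2)$, which is exactly Eq.~\eqref{ineq:epprime}.

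The step requiring the most care is the conditioning argument of the first paragraph: one must verify that $x_{i_2(n)}$ is genuinely independent of $\EP$, which holds because $\EP$ is measurable with respect to the initial-sampling data alone while the round-$n$ candidate is sampled afresh. The tail computation itself is routine; the only mild subtlety is recognizing that the natural bound decays like $\exp(-\Theta(1/b^2))$, which is stronger than the stated $\exp(-\Theta(1/b))$, so the weaker claimed form follows immediately for $b \le 1$.
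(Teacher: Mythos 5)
Your proof is correct and takes essentially the same route as the paper's: observe that $x_{i_2(n)}$ is independent of the initial-sampling data defining $\EP$, reduce $\EP'(n)^c$ to a normal tail event for the fresh characteristic, and apply the tail bound of Lemma~\ref{lem:normpdf}. If anything, your version is slightly more careful than the paper's---you handle both tails via $|x_{i_2(n)}| > \mu_x/b$ (the paper only bounds the right tail $x_{i_2}(n) \ge 2/b$), and you note the stronger $\exp\left(-\Theta(1/b^2)\right)$ decay before weakening it to the stated $\exp\left(-\Theta(1/b)\right)$ form.
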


\paragraph*{Proof.}
We have
\begin{align}
\Pr[\EP'(n)|\EP] 
&\ge 1 - \Pr\left[x_{i_2}(n) \ge \frac{2}{b}\right] \\
&\ge 1 - \Phi^c\left(\frac{1}{\sigma_x}\left(\frac{2}{b} - \mu_x\right)\right) \\
&\ge 1 - \frac{1}{\sqrt{2 \pi \sigma_x^2}}
\exp\left(-
\frac{1}{\sigma_x}\left(\frac{2}{b} - \mu_x\right)
\right),
\text{\ \ \ (by Lemma \ref{lem:normpdf})}
\end{align}
where we have assumed $\left(\frac{2}{b} - \mu\right)/\sigma_x
\ge 1$ in the last transformation (which holds for sufficiently small $b$).
Eq. \eqref{ineq:epprime} holds for 
$C_3 = \frac{1}{\sqrt{2 \pi \sigma_x^2}} e^{\mu/\sigma_x}$ and 
$C_4 = 2/\sigma_x$.
\qed

\begin{lem}\label{lem:lflarge_goodcand1} 
\begin{equation}
\Pr[\EQ'(n)\mid \EQ] \ge 1 - (1/2)^{K_1}.
\end{equation}
\end{lem}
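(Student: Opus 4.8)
The plan is to reduce the conditional bound to an elementary tail count for i.i.d.\ normal characteristics. First I would record the deterministic implication that $\EQ$ forces at round $n$. Recall we are in the regime $\theta>0$ and $\mu_x>0$, and that $N_1(n)\ge\Nzero_1$; hence on $\EQ$ we have $\htheta_{1,N_1(n)}\ge\tfrac12\theta>0$. Consequently, if any majority candidate $i\in I(n)$ satisfies $x_i\ge\mu_x$, then, multiplying the two positive quantities, $x_i\htheta_{1,N_1(n)}\ge \mu_x\cdot\tfrac12\theta=\tfrac12\mu_x\theta$, which is exactly the inequality defining $\EQ'(n)$. Thus a single ``above-mean'' majority candidate certifies $\EQ'(n)$, and it suffices to lower-bound the probability that at least one of the $K_1$ majority candidates in round $n$ has $x_i\ge\mu_x$.

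Next I would carry out the count. Under Assumption~\ref{assp:normalcontext} the $K_1$ majority characteristics in round $n$ are i.i.d.\ $\Normal(\mu_x,\sigma_x^2)$, so each obeys $\Pr[x_i\ge\mu_x]=1/2$ because $\mu_x$ is the median of this distribution. By independence across the $K_1$ majority candidates, the probability that all of them fall strictly below $\mu_x$ is $(1/2)^{K_1}$, so at least one lies above the mean with probability $1-(1/2)^{K_1}$. Combining with the implication of the first step then yields $\Pr[\EQ'(n)\mid\EQ]\ge 1-(1/2)^{K_1}$.

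The one point requiring care---and the step I would write out most carefully---is the conditioning on $\EQ$. As defined, $\EQ$ constrains $\htheta_{1,t}$ for \emph{all} $t\ge\Nzero_1$, so a priori it could correlate with the round-$n$ candidate draws (a round-$n$ majority hire feeds into $\htheta_{1,t}$ for $t>N_1(n)$). The resolution I would use is that $\EQ'(n)$ consumes only the value $\htheta_{1,N_1(n)}$, which is measurable with respect to the history $h(n)$ since every majority worker counted by $N_1(n)$ was hired strictly before round $n$. Writing $\EG=\{\exists\,i\in I(n):g(i)=1,\ x_i\ge\mu_x\}$, the first step shows $\EQ\cap\EG\subseteq\EQ'(n)$, whence $\Pr[\EQ'(n)\mid\EQ]\ge\Pr[\EG\mid\EQ]$. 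Because the fresh round-$n$ characteristics are independent of $h(n)$, and in the assembly of Eq.~\eqref{ineq_reglower_total} it is precisely the $h(n)$-measurable consequence $\htheta_{1,N_1(n)}\ge\tfrac12\theta$ of $\EQ$ that is invoked, the count of the previous paragraph passes through unchanged, giving $\Pr[\EG\mid\EQ]\ge 1-(1/2)^{K_1}$. I expect this independence bookkeeping, rather than the arithmetic, to be the only genuine obstacle.
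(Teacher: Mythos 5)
Your proof is correct and takes essentially the same route as the paper's: on $\EQ$ the estimate satisfies $\htheta_{1,N_1(n)} \ge \tfrac12\theta > 0$, so any majority candidate with $x_i \ge \mu_x$ certifies $\EQ'(n)$, and the symmetry of the normal distribution plus independence of the $K_1$ fresh characteristics gives the failure probability $(1/2)^{K_1}$ — which is precisely the paper's (two-line) argument. Your extra bookkeeping about the conditioning — reducing the use of $\EQ$ to its $h(n)$-measurable consequence $\htheta_{1,N_1(n)} \ge \tfrac12\theta$, against which the round-$n$ draws are independent — addresses a subtlety the paper glosses over entirely, and is the right reading for how the lemma is actually invoked in the assembly of Theorem~\ref{thm_reglower}.
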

Event $\EQ'(n)$ states that all the candidates' estimated quality $x_i \htheta$ is not below mean.
Lemma \ref{lem:lflarge_goodcand1} states that the probability of $\EQ'(n)$ is exponentially small to the number of candidates. The proof of Lemma \ref{lem:lflarge_goodcand1} directly follows from the symmetry of normal distribution and independence of characteristics $\bx_i$.

\noindent\textit{Proof of Theorem \ref{thm_reglower}, continued.}
By using Lemmas \ref{lem:lflarge_smltheta2}--\ref{lem:lflarge_goodcand1}, we have
\begin{align}
\Pr[\EP] &\ge C_1 b \label{ineq:lflarge_smltheta2} \\
\Pr\left[\EQ \right] &\ge C_2 \label{ineq:lflarge_goodtheta1} \\
\Pr[\EP'(n) |\EP] &\ge 1 - C_3 \exp\left(- C_4 b \right) \label{ineq:lflarge_loweval2} \\
\Pr[\EQ'(n)|\EQ] &\ge 1 - (1/2)^{K_1}. \label{ineq:lflarge_goodcand1}
\end{align}
From these equations, the probability of perpetual underestimation is bounded as:
\begin{align}
&\Pr\left[\bigcup_n \{\iota(n)=1\} \right]\\
&\ge 
\Pr\left[\bigcup_n \{\EP'(n), \EQ'(n)\}, \EP, \EQ \right] \\
&\ge
\Pr\left[\EP\right] \Pr\left[\EQ\right] 
\Pr\left[\bigcup_n \{\EP'(n), \EQ'(n)\} \mid \EP, \EQ \right]\text{\ \ \ (by the independence of $\EP$ and $\EQ$)} \\
&\ge 
C_1 b \times
C_2 \times 
\left(1 - N C_3 \exp\left(- C_4 b \right)\right) \times 
\left( 1 - N \left(\frac{1}{2}\right)^{K_1}\right)\text{\ \ \ (by the union bound)} \label{ineq:reglower_finalform}
\end{align}
which, by letting $b = O(1/\log(N))$ and $K_1 > \log_2(N)$, is $\tilO\left(1\right)$. 
\qed

\subsection{Proof of Theorem \ref{thm:ucb}}\label{subsec_ucb}

\paragraph*{Proof.}
Let $\regret(n) = \Regret(n) - \Regret(n-1)$. 
Notice that under the UCB decision rule, 
\begin{equation}\label{ineq:ucbselect}
 \iota(n)   
= \max_{i \in I(n)} ( \bx_i' \tbtheta_i(n) ).
\end{equation}

By Lemma \ref{lem:abbasi}, with a probability at least $1- \delta$, the true parameter of group $g$ lies in $\EC_g$, and thus
\begin{equation}\label{ineq:ucbbound}
\bx_i' \tbtheta_i(n) \ge \bx_i' \btheta_g 
\end{equation}
for each $i \in I(n)$. 

Let $i^* = i^*(n) \coloneqq \argmax_{i \in I(n)} \bx_i' \btheta_{g(i)}$ be the first-best worker, and $g^* = g(i^*)$ be the group $i^*$ belongs to. The regret in round $n$ is bounded as
\begin{align}
\regret(n)
&= \bx_{i^*}' \btheta_{g^*}  - \bx_{\iota}' \btheta_{g(\iota)} \\
&\le \bx_{i^*}' \tbtheta_{i^*}  - \bx_{\iota}' \btheta_{g(\iota)} 
\text{\ \ \ (by Eq. \eqref{ineq:ucbbound})}\\
&\le 
\bx_{\iota}' \tbtheta_{\iota}  
- 
\bx_{\iota}' \btheta_{g(\iota)} \label{ineq:iotadiff}
\text{\ \ \ (by Eq. \eqref{ineq:ucbselect})}\\
&\le ||\bx_{\iota}'||_{\bar{\bV}_{g(\iota)}^{-1}} \norm{\btheta_{g(\iota)} - \tbtheta_{\iota}}_{\bar{\bV}_{g(\iota)}}
\text{\ \ \ (by the Cauchy-Schwarz inequality)} \\
&\le ||\bx_{\iota}'||_{\bar{\bV}_{g(\iota)}^{-1}} \beta_N. \text{\ \ \ (by Eq.~\eqref{ineq:bound_conf})} \label{ineq:regucbper}
\end{align}

The total regret is bounded as:
\begin{align}
\Regret(N) = \sum_n \regret(n) 
&\le \sqrt{N \sum_n \regret(n)^2} \text{\ \ \ (by the Cauchy-Schwarz inequality)}\\
&\le 2 \beta_N \sqrt{N \sum_n ||\bx_{\iota}'||^2_{\bar{\bV}_{g(\iota)}^{-1}}(n) } \\
&\le 2 \beta_N \sqrt{2 N L^2 \sum_{g \in G} \log(\det(\bar{\bV}_g(N)))} \text{\ \ (by Lemma \ref{lem:contextsqsum})} \label{ineq:subsec_ucb_factor} \\
&\le \tilO(\sqrt{N|G|})
\end{align} 
where we have used the fact that $\log(\det(\bar{\bV}_g)) = O(\log(N)) = \tilO(1)$.
\qed

\subsection{Proof of Theorem \ref{thm:ucb index subsidy scheme}}\label{subsec: ucb index subsidy rule}

\paragraph*{Proof.}
We bound the amount of total subsidy $\Subsid(N)$.
\begin{align}
\Subsid(N) 
&\coloneqq \sum_n \bx_{\iota(n)}' (\tbtheta_{\iota(n)} - \hbtheta_{g(\iota(n))}) \le \sum_n ||\bx_{\iota(n)}'||_{\bar{\bV}_{g(\iota(n))}^{-1}} \beta_N, 
\text{\ \ \ (by Eq.~\eqref{ineq:bound_conf_th})}
\end{align}
which is the same as Eq. \eqref{ineq:regucbper} and thus the same bound as regret applies.
\qed

\subsection{Proof of Theorem~\ref{thm:ucbimp}}\label{subsec:ucbimp}

\paragraph*{Proof.}
We adopt ``slot'' notation for each group. 
Group $g$ is allocated $K_g$ slots and at each round $n$, one candidate arrives for each slot. 
We use index $i \in [K]$ to denote each slot: Although $\bx_i$ at two different rounds $n$,$n'$ (= $\bx_i(n), \bx_i(n')$) represent different candidates, they are from the identical group $g = g(i)$.
In summary, we use index $i$ to represent the $i$-th slot and with a slight abuse of argument. We also call candidate $i$ to represent the candidate of slot $i$.
Note that this does not change any part of the model, and the slot notation here is  for the sake of analysis.

Under the hybrid decision rule, a firm at each round hires the candidate of the largest index. 
Letting $\ahybconst \coloneqq \ahybconst \sigma_x/\sigma_x$,
\begin{equation}
\iota(n) = \argmax_{i\in I(n)} \Uimp_i(n)
\end{equation}
where
\begin{equation}\label{eq: new UCB decision rule rep}
\Uimp_i(n) \coloneqq
\begin{cases} 
\tilde{q}_i(n) &\mbox{if } \sui_i(n) > \ahybconst \sigma_x ||\hbtheta_{g(i)}(n)||, \\
\hat{q}_i(n) & \mbox{otherwise}.
\end{cases}
\end{equation}
We also denote $\tiliota(n) = \argmax_{i \in I(n)} \bx_i' \tbtheta_{i}$. 
That is, $\tiliota$ indicates the candidate who would have been hired if we have used the standard UCB decision rule (Eq.~\eqref{eq: UCB decision rule})

The following bounds the regret into estimation errors of $\tiliota$ and $\iota$.
\begin{align}
\regret(n)
&= \bx_{i^*}' \btheta_{g^*}  - \bx_{\iota}' \btheta_{g(\iota)} \\
&\le 
\bx_{i^*}' \tbtheta_{i^*}  - \bx_{\iota}' \btheta_{g(\iota)} 
\text{\ \ \ (by Eq. \eqref{ineq:bound_conf_tildeup})}\\
&\le \bx_{\tiliota}' \tbtheta_{\tiliota}  - \bx_{\iota}' \btheta_{g(\iota)} 
\text{\ \ \ (by definition of $\tiliota$)}\\
&= \bx_{\tiliota}' \tbtheta_{\tiliota}  - \bx_{\iota}' \tbtheta_{\iota}
+ \bx_{\iota}' (\tbtheta_{\iota}-\btheta_{g(\iota)})\\
&\le \bx_{\tiliota}' (\tbtheta_{\tiliota}  -  \hbtheta_{g(\tiliota)})
+ \bx_{\iota}' (\tbtheta_{\iota}-\btheta_{g(\iota)}). \text{\ \ \ (by definition of $\iota$) } \label{ineq:iucb_twoiota} 
\end{align}

Here,
\begin{align}
\bx_{\tiliota}' (\tbtheta_{\tiliota}-\hbtheta_{g(\tiliota)})
&\le ||\bx_{\tiliota}'||_{\bar{\bV}_{g(\tiliota)}^{-1}} \norm{ \tbtheta_{\tiliota} - \hbtheta_{g(\tiliota)}}_{\bar{\bV}_{g(\tiliota)}}
\text{\ \ \ (by the Cauchy-Schwarz inequality)} \\
&\le ||\bx_{\tiliota}'||_{\bar{\bV}_{g(\tiliota)}^{-1}} \beta_N. \text{\ \ \ (by Eq. \eqref{ineq:bound_conf_th})} \\
&\le \frac{||\bx_{\tiliota}'||}{\sqrt{\lambdamin(\bar{\bV}_{g(\tiliota)})}} \beta_N \text{\ \ \ (by definition of eigenvalues)} \\
&\le \frac{L}{\sqrt{\lambdamin(\bar{\bV}_{g(\tiliota)})}} \beta_N. \text{\ \ \ (by Eq.~\eqref{ineq:bound_context})}
\label{ineq:iucb_reg_lmbmin}
\end{align}

Moreover, the estimation error of candidate $\iota$ is bounded as
\begin{align}
\bx_\iota' (\tbtheta_\iota-\btheta_{g(\iota)})
&\le ||\bx_{\iota}'||_{\bar{\bV}_{g(\iota)}^{-1}} \norm{ \tbtheta_{\iota} - \btheta_{g(\iota)}}_{\bar{\bV}_{g(\iota)}}
\text{\ \ \ (by the Cauchy-Schwarz inequality)} \\
&\le 2 ||\bx_\iota'||_{\bar{\bV}_{g(\iota)}^{-1}} \beta_N \text{\ \ \ (by Eq. \eqref{ineq:bound_conf_ttrue})} \\
&\le \frac{2||\bx_\iota'||}{\sqrt{\lambdamin(\bar{\bV}_{g(\iota)})}} \beta_N \text{\ \ \ (by definition of eigenvalues)} \\
&\le \frac{2L}{\sqrt{\lambdamin(\bar{\bV}_{g(\iota)})}} \beta_N. \text{\ \ \ (by Eq.~\eqref{ineq:bound_context})}
\label{ineq:iucb_reg_lmbmintwo}
\end{align}

Based on the above bounds, the regret is bounded as follows.
\begin{align}
\Regret(N) 
&= \sum_{n=1}^N \regret(n) \\
&\le \sum_{n=1}^N 
\left( 
\frac{2}{\sqrt{\lambdamin(\bar{\bV}_{g(\iota)})}} +
\frac{1}{\sqrt{\lambdamin(\bar{\bV}_{g(\tiliota)})}} 
\right)
L \beta_N \text{\ \ \ (by Eq.\eqref{ineq:iucb_twoiota}, \eqref{ineq:iucb_reg_lmbmin}, \eqref{ineq:iucb_reg_lmbmintwo} )}\\
&\le 2 L \beta_N \sum_{i \in [K]} \sum_{n=1}^N  \Ind[\iota=i]
\frac{1}{\sqrt{\lambdamin(\bar{\bV}_{g(i)})}}
 +
L \beta_N \sum_{i \in [K]} \sum_{n=1}^N  \Ind[\tiliota = i]
\frac{1}{\sqrt{\lambdamin(\bar{\bV}_{g(i)})}}. \label{ineq_twotildecompose}
\end{align}
Eq.~\eqref{ineq_twotildecompose} consisted of two components. The first component is the estimation error of the hired candidate $\iota$.
The second component is the estimation error of $\tiliota$, the candidate who would have hired if we had posed the UCB decision rule. The Hybrid decision rule $\iota$ can be different from the UCB decision rule $\tiliota$, which is the main challenge of deriving regret bound in the hybrid decision rule.

We first define the following events
\begin{align}
\EV_i(n) &\coloneqq \left\{(\bx_i(n))' (\tbtheta_i(n) - \hbtheta_{g(i)}(n)) \le \ahybconst \sigma_x \norm{\hbtheta_{g(i)}(n)}\right\}, \\
\EW_i(n) &\coloneqq \{ \tiliota(n) = i \}, \\
\EX_i(n) &\coloneqq \{ \iota(n) = i \}, \\
\EX_i'(n) &\coloneqq \left\{ \bx_i(n)' \hbtheta_{g(i)}(n) \ge \argmax_{j \ne i} \Uimp_j \right\} \subseteq \EX_i.
\end{align}
Event $\EV_i$ states that the candidate $i$ is not subsidized.
Event $\EW_i$ states that $i$ would have been hired if it was subsidized in the UCB decision rule. 
Event $\EX_i$ states that $i$ is hired and $\EX_i'$ states that $i$ is hired regardless of the subsidy.

The following lemma is the crux of bounding the components in Eq.~\eqref{ineq_twotildecompose}.
\begin{lem}[Proportionality]\label{lem:parity}
The following two inequalities hold.
\begin{align}
\Pr[\EX_i'] &\ge \exp(-\ahybconst^2/2) \Pr[\EW_i],\label{ineq:parity} \\ 
\Pr[\EX_i'] &\ge \exp(-\ahybconst^2/2) \Pr[\EX_i]. \label{ineq:parity2} 
\end{align}
\end{lem}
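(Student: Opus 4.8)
The plan is to prove both inequalities conditionally on the history $h(n)$ and on the characteristics of every competitor $\bx_j$ with $j \neq i$, and then to take expectations over the competitors. Conditioning this way freezes the two thresholds $T := \max_{j\neq i}\Uimp_j$ and $T' := \max_{j\neq i}\tilde{q}_j$, and since $\Uimp_j \le \tilde{q}_j$ always holds we have $T \le T'$. The only remaining randomness is $\bx_i \sim \Normal(\bmu_x,\sigma_x^2\bI_d)$. The key reduction is to project onto the single direction that governs the estimated skill: writing $\hat{u} := \hbtheta_{g(i)}(n)/\norm{\hbtheta_{g(i)}(n)}$ and $w := \hat{u}'(\bx_i - \bmu_x) \sim \Normal(0,\sigma_x^2)$, we get $\hat{q}_i = \bmu_x'\hbtheta_{g(i)} + \norm{\hbtheta_{g(i)}}\,w$, so the robust-hiring event becomes a clean half-line $\EX_i' = \{w \ge w^\ast\}$ with $w^\ast = (T - \bmu_x'\hbtheta_{g(i)})/\norm{\hbtheta_{g(i)}}$.

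The core step is a shifted–half-line containment combined with a Gaussian change of measure. On the unsubsidized part $\EW_i \cap \EV_i$, the definition of $\EV_i$ gives $\sui_i \le a\sigma_x\norm{\hbtheta_{g(i)}}$, while $\tilde{q}_i = \hat{q}_i + \sui_i \ge T' \ge T$; hence $\hat{q}_i \ge T - a\sigma_x\norm{\hbtheta_{g(i)}}$, i.e. $\EW_i \cap \EV_i \subseteq \{w \ge w^\ast - a\sigma_x\}$, and the identical inclusion holds for $\EX_i \cap \EV_i$. To convert the shifted half-line back into $\EX_i'$ without paying a threshold-dependent price, I would compare the law $P$ of $\bx_i$ with the law $\hat{P}$ obtained by translating the mean by $+a\sigma_x\hat{u}$: the Radon--Nikodym derivative equals $\exp\!\big(aw/\sigma_x - a^2/2\big)$, which on the favorable region $w \ge 0$ is bounded below by $\exp(-a^2/2)$. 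Applying this reweighting to the half-line $\{w \ge w^\ast\}$, together with the inclusion above, yields $\Pr[\EX_i'] \ge \exp(-a^2/2)\,\Pr[\EW_i\cap\EV_i]$ and, via $T \le T'$, the analogous bound against $\EX_i\cap\EV_i$; both desired inequalities then follow after integrating over the competitors.

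The main obstacle is the subsidized regime $\EV_i^c$, on which the boost $\sui_i$ exceeds the threshold $a\sigma_x\norm{\hbtheta_{g(i)}}$ and may be arbitrarily large, so the bounded-shift inclusion into $\EX_i'$ breaks down and a naive translation no longer buys the uniform factor $\exp(-a^2/2)$. Here I would exploit the structure specific to the hybrid rule: on $\EW_i\cap\EV_i^c$ one has $\Uimp_i = \tilde{q}_i \ge T' \ge T = \max_{j\neq i}\Uimp_j$, so worker $i$ is in fact hired by the hybrid rule and its contribution is routed through $\EX_i$ rather than compared directly to the half-line; for the residual mass one would use the explicit coupling $\sui_i = \beta_N\norm{\bx_i}_{\bar{\bV}_{g(i)}^{-1}}$ between the boost and $\bx_i$, which forces $\hat{q}_i$ into a Gaussian tail whenever the boost is large. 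Making the density-ratio estimate uniform over the (possibly large) threshold $w^\ast$ while absorbing this subsidized mass is the delicate point, and is precisely where the choice of the threshold $a\sigma_x\norm{\hbtheta_{g(i)}}$ in the hybrid index is expected to do the real work.
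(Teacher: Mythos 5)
Your proposal follows the same route as the paper's own proof: condition so as to freeze the threshold $T=\max_{j\neq i}\Uimp_j$, project $\bx_i$ onto the direction of $\hbtheta_{g(i)}$, and try to purchase a mean shift of $a\sigma_x$ along that direction at the uniform price $\exp(-a^2/2)$; this is exactly the paper's Eq.~\eqref{ineq:parity_tail} with $c=T$ and $d=a$, combined with two event inclusions. The first genuine gap is that this Gaussian step is invalid, not merely under-justified: no inequality of the form $\Pr[w\ge w^\ast]\ge e^{-a^2/2}\Pr[w\ge w^\ast-a\sigma_x]$ can hold uniformly in the threshold. For $w\sim\Normal(0,\sigma_x^2)$ the tail ratio behaves like $e^{-aw^\ast/\sigma_x+a^2/2}$, which falls below $e^{-a^2/2}$ once $w^\ast>a\sigma_x$; numerically, with $\sigma_x=1$ and $a=1$ one has $\Phi^c(2)\approx 0.023$, which is smaller than $e^{-1/2}\Phi^c(1)\approx 0.096$. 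Your change of measure points the wrong way: the bound $d\hat{P}/dP=e^{aw/\sigma_x-a^2/2}\ge e^{-a^2/2}$ on $\{w\ge 0\}$ yields only $\Pr_P[w\ge w^\ast-a\sigma_x]\ge e^{-a^2/2}\Pr_P[w\ge w^\ast]$, which is trivially true because the left event contains the right one; the inequality you need requires the reverse bound $d\hat{P}/dP\le e^{a^2/2}$, i.e.\ $w\le a\sigma_x$, on the entire half-line $\{w\ge w^\ast\}$, which an unbounded half-line cannot satisfy. So the claim that the reweighting avoids ``a threshold-dependent price'' is exactly what fails: the true price is of order $e^{-aw^\ast/\sigma_x}$ and hence threshold dependent.

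The second gap is the one you flag yourself---the subsidized regime $\EV_i^c$---and it cannot be absorbed the way you sketch. Routing $\EW_i\cap\EV_i^c$ ``through $\EX_i$'' only reduces \eqref{ineq:parity} to \eqref{ineq:parity2}, and \eqref{ineq:parity2} suffers the same failure on $\EV_i^c$: being hired while subsidized carries no information about $\EX_i'$. Concretely, condition on a history in which group $g(i)$ has essentially no data, so that $\sui_i=\tilde{q}_i-\hat{q}_i$ is large and worker $i$ is always subsidized, while $\norm{\hbtheta_{g(i)}}$ is close to zero; then $\EW_i$ and $\EX_i$ occur with constant (indeed, for large $\mu_x/\sigma_x$, near-one) probability because the UCB bonus dominates the comparison, whereas $\EX_i'$ requires $\hat{q}_i\approx 0$ to exceed $\max_{j\neq i}\Uimp_j$, an event whose probability is exponentially small in $K_1$ and in $(\mu_x/\sigma_x)^2$ and thus arbitrarily smaller than $e^{-a^2/2}$. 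Since the conditional-on-history form of Lemma~\ref{lem:parity} is what the Hoeffding arguments in Lemmas~\ref{lem_iucb_iota} and \ref{lem_iucb_tiliota} actually consume, no completion of your argument can close this case as the statement stands; proportionality is only tenable after restricting to unsubsidized rounds (where $\EX_i\cap\EV_i=\EX_i'\cap\EV_i$ holds for free) or to thresholds below the mean of $\hat{q}_i$. You should know that these are precisely the two points the paper's own proof glosses over: its inclusion $\EW_i\subseteq\{\bx_i'\hbtheta_{g(i)}\ge c-a(\sigma_x\norm{\hbtheta_{g(i)}})^2\}$ silently presupposes that the subsidy is below the threshold, i.e.\ presupposes $\EV_i$ (note also the mismatch between the squared quantity there and the unsquared threshold defining $\EV_i$), and its Eq.~\eqref{ineq:parity_tail} asserts the threshold-uniform Gaussian inequality refuted above. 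In that sense your blind reconstruction is faithful to the paper, and it stalls exactly where the paper's argument is itself incomplete.
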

\paragraph*{Proof.}
We first prove, for any $c \in \Real$, $d>0$,
\begin{equation}\label{ineq:parity_tail}
\Pr\left[\bx_i' \hbtheta_{g(i)} \ge c \right]
\ge 
\exp(-d^2/2) \Pr\left[
\bx_i' \hbtheta_{g(i)} \ge c - d \sigma_x \norm{\hbtheta_{g(i)}}
\right].
\end{equation}
Let $x_{\parallel} \coloneqq (\bx_i' \hbtheta_{g(i)})/||\hbtheta_{g(i)}||$ be the projection of $\bx_i$ into the direction of $\hbtheta_{g(i)}$. Then, 
$\bx_i' \hbtheta_{g(i)} = x_{\parallel} ||\hbtheta_{g(i)}||$.
From the symmetry of a normal distribution, $x_{\parallel} ||\hbtheta_{g(i)}||$ is drawn from a normal distribution with its standard derivation $\sigma_x ||\hbtheta_{g(i)}||$, from which Eq. \eqref{ineq:parity_tail} follows.

Eq. \eqref{ineq:parity} follows by
letting $c = \max_{j \ne i} \Uimp_j$, $d = \ahybconst$ because 
\begin{align}
\EW_i &\subseteq \left\{\bx_i' \hbtheta_{g(i)} \ge c - d \sigma_x \norm{\hbtheta_{g(i)}}\right\}  \\
\EX_i' &\supseteq \left\{\bx_i' \hbtheta_{g(i)} \ge c\right\}
\end{align}

Eq. \eqref{ineq:parity2} also follows 
\begin{align}
\EX_i &\subseteq \left\{\bx_i' \tbtheta_{i} \ge c\right\}  \\
\EX_i' &\supseteq \left\{\bx_i' \tbtheta_{i} \ge c + d\sigma_x \norm{\hbtheta_{g(i)}}\right\}
\end{align}
and exactly the same discussion as Eq.~\eqref{ineq:parity_tail} applies for\footnote{Note that $\bx_i' \hbtheta_{g(i)}$ in Eq.~\eqref{ineq:parity_tail} is replaced by $\bx_i' \tbtheta_{i}$ in Eq.~\eqref{ineq:parity_tail_ttheta}, which does not change the subsequent derivations at all.
}
\begin{equation}\label{ineq:parity_tail_ttheta}
\Pr\left[\bx_i' \tbtheta_{i} \ge c + d \sigma_x \norm{\hbtheta_{g(i)}}\right]
\ge 
\exp(-d^2/2) \Pr\left[
\bx_i' \tbtheta_{i} \ge c 
\right].
\end{equation}
\qed
Lemma \ref{lem:parity} is intuitively understood as follows. 
Assume that candidate $i$ would have been hired under the UCB rule. The candidate may not be hired under the hybrid rule because it can cut subsidies for that candidate. However, there is a constant probability such that a slightly better (``$a $-good'') candidate appears on slot $i$, and such a candidate is hired under the hybrid rule.

The following two lemmas, which utilizes Lemma \ref{lem:parity}, bounds the two terms of Eq.~\eqref{ineq_twotildecompose}.
\begin{lem}\label{lem_iucb_iota}
\begin{equation}
\Ex\left[
\sum_{n=1}^N  \Ind[\iota=i]
\frac{1}{\sqrt{\lambdamin(\bar{\bV}_{g(i)})}}
\right]
\le 
\frac{2 e^{\ahybconst^2/4}}{\lambda_0} \sqrt{N} + O(1).
\end{equation}
\end{lem}
\begin{lem}\label{lem_iucb_tiliota}
\begin{equation}
\Ex\left[
\sum_{n=1}^N  \Ind[\tiliota = i]
\frac{1}{\sqrt{\lambdamin(\bar{\bV}_{g(i)})}}
\right]
\le 
\frac{2 e^{\ahybconst^2/4}}{\lambda_0} \sqrt{N} + O(1).
\end{equation}
\end{lem}

With Lemmas~\ref{lem_iucb_iota} and \ref{lem_iucb_tiliota}, the regret is bounded as
\begin{align}
\Regret(N) 
&\le 2 L \beta_n(L, 1/N) \sum_{i \in [K]} \sum_{n=1}^N  \Ind[\iota=i]
\frac{1}{\sqrt{\lambdamin(\bar{\bV}_{g(i)})}}
\nn&\ \ \ +
L \beta_n(L, 1/N) \sum_{i \in [K]} \sum_{n=1}^N  \Ind[\tiliota = i]
\frac{1}{\sqrt{\lambdamin(\bar{\bV}_{g(i)})}} \text{\ \ \ (by Eq.~\eqref{ineq_twotildecompose})}\\
&\le 6 L \beta_n(L, 1/N) K \frac{e^{\ahybconst^2/4} \sqrt{N}}{\lambda_0 } + \tilO(1)
\text{\ \ \ (by Lemma \ref{lem_iucb_iota} and \ref{lem_iucb_tiliota})} \label{ineq:ucbimp_const}
\end{align}
which completes the proof of Theorem~\ref{thm:ucbimp}.
\qed %

The following is the proof of Lemma~\ref{lem_iucb_iota}.

\paragraph*{Proof.}
Let $N_i(n)$ be the number of the rounds before $n$ such that the worker of slot $i$ is selected.
Let $\tau_t$ be the first round such that $N_i(n)$ reaches $t$ and $N_{i,t} = \sum_{n \le \tau_t} \Ind[\EX_i'(n)]$. 
Lemma \ref{lem:parity} implies $\Ex[N_{i,t}] \ge e^{-\ahybconst^2/2} t$ and applying the Hoeffding inequality on binary random variables $(\Ind[\EX_i'(\tau_1)],\Ind[\EX_i'(\tau_2)],\dots,...,\Ind[\EX_i'(\tau_t)])$ yields
\begin{equation}\label{ineq:bdhoeffding}
\Pr\left[
N_{i,t} <
\left( 
e^{-\ahybconst^2/2} t - \sqrt{(\log N)t}
\right)
\right] \le \frac{2}{N^2}.
\end{equation}
By using this, we have
\begin{align}
\lefteqn{
\Pr\left[
\bigcap_{t=1}^N \left\{ N_{i,t} <
\left( 
e^{-\ahybconst^2/2} t - \sqrt{(\log N)t}
\right)
\right\}
\right]
}\nn
&\le 
\sum_t \Pr\left[
N_{i,t} <
\left( 
e^{-\ahybconst^2/2} t - \sqrt{(\log N)t}
\right)
\right]
\text{\ \ \ (by union bound)}\nn
&\le
\sum_t \frac{2}{N^2} = \frac{2}{N}.
\text{\ \ \ (by Eq.~\eqref{ineq:bdhoeffding})}\nn
\end{align}
In the following, we focus on the case 
\begin{equation}\label{ineq:draw_ew_enough}
N_{i,t} \ge e^{-\ahybconst^2/2} t - \sqrt{(\log N)t},
\end{equation}
which occurs with a probability at least $1-2/N$.

Let $\bar{\bV}_i(n) \coloneqq \sum_{n'\le n} \Ind[\iota = i] \bx_i \bx_i' \preceq \bar{\bV}_{g(i)}(n)$. The context $\bx_i$ conditioned on event $\EX_i'$ satisfies assumptions in Lemma \ref{lem:normdiversity} with $\hbtheta = \tbtheta_i$ and $\hat{b} = \max_{j \ne i} \Uimp_j$.
We have,
\begin{align}
\sum_{n=1}^N  \Ind[\iota = i]
\frac{1}{\sqrt{\lambdamin(\bar{\bV}_{g(i)})}}
&\le 
\sum_{n=1}^N \Ind[\iota = i]
\frac{1}{\sqrt{\lambdamin(\bar{\bV}_i)}}
\text{\ \ \ (by $\bar{\bV}_{g(i)} \succeq \bar{\bV}_i$)} \nn
&\le 
\sum_{n=1}^N \sum_{t=1}^N  \Ind[\iota = i, N_i(n) = t]
\frac{1}{\sqrt{\lambdamin(\bar{\bV}_i)}} \nn
&\text{\ \ \ (by $N_i(N) \le N$)} \nn
&\le 
\sum_{t=1}^N 
\frac{1}{\sqrt{\lambdamin(\bar{\bV}_i(\tau_t))}}. \nn
&\text{\ \ \ (by $\Ind[\iota = i, N_i(n) = t]$ occurs at most once)} 
\end{align}
In other words, lower-bounding $\lambdamin(\bar{\bV}_i(\tau_t))$ suffices the regret bound, which we demonstrate in the following.
We have 
\begin{align}
\Ex\left[
\lambdamin(\bar{\bV}_i(\tau_t))
\right]
&\ge \lambdamin(\sum_n \Ex[\Ind[\EX_i'(n)] \bx_i \bx_i'] ) \ge \lambda_0
N_{i,t}.
\text{\ \ \ (by Lemma \ref{lem:normdiversity})}
\end{align}
By using the matrix Azuma inequality (Lemma \ref{lem:matrixazuma}), with probability of at least $1-1/N$
\begin{equation}\label{ineq:iota_lmdmin}
\lambdamin(\bar{\bV}_i(\tau_t)) \ge 
\left(\lambda_0 N_{i,t} - \sqrt{32 N_{i,t} \sigma_A^2} \log(dN)\right)
\end{equation}
where $\sigma_A = 2L^2$.
By using Eq. \eqref{ineq:draw_ew_enough}, \eqref{ineq:iota_lmdmin}, we have
\begin{equation}
\lambdamin(\bar{\bV}_i(\tau_t)) \ge \lambda_0 e^{-\ahybconst^2/2} t -  O(\sqrt{t})
\end{equation}
and thus
\begin{align}
\sum_{t=1}^N 
\frac{1}{\sqrt{\lambdamin(\bar{\bV}_i(\tau_t))}}
&\le 
\sum_{t=1}^N 
\frac{1}{\sqrt{
\lambda_0 e^{-\ahybconst^2/2} t - O(\sqrt{t})
}} \le 
\frac{2  e^{\ahybconst^2/4}}{\lambda_0} \sqrt{N} + O(1).
\end{align}
\qed

The following is the proof of Lemma~\ref{lem_iucb_tiliota}.

\paragraph*{Proof.}
Let $N_i^{\EW_i}(n) = \sum_{n'\le n} \Ind[\EW_i]$ and 
let $\tau_t$ be the first round such that $N_i^{\EW_i}(n)$ reaches $t$ and $N_{i,t} = \sum_{n \le \tau_t} \Ind[\EX_i'(n)]$.
The following discussions are very similar to the one of Lemma \ref{lem_iucb_iota}, which we write for completeness.
Then, we have
\begin{align}
\lefteqn{
\Pr\left[
\bigcap_{t=1}^N \left\{ N_{i,t} <
\left( 
e^{-\ahybconst^2/2} t - \sqrt{(\log N)t}
\right)
\right\}
\right]
}\nn
&\le 
\sum_t \Pr\left[
N_{i,t} <
\left( 
e^{-\ahybconst^2/2} t - \sqrt{(\log N)t}
\right)
\right]
\text{\ \ \ (by union bound)}\nn
&\le
\sum_t \frac{2}{N^2}
= \frac{2}{N}.
\text{\ \ \ (by Lemma \ref{lem:parity} and the Hoeffding inequality)} 
\end{align}
In the following, we focus on the case
\begin{equation}\label{ineq:draw_ew_enough_tiliota}
N_{i,t} \ge e^{-\ahybconst^2/2} t - \sqrt{(\log N)t}
\end{equation}
that occurs with a probability at least $1-2/N$.

We have,
\begin{align}
\sum_{n=1}^N  \Ind[\tiliota = i]
\frac{1}{\sqrt{\lambdamin(\bar{\bV}_{g(i)})}}
&\le 
\sum_{n=1}^N \Ind[\tiliota = i]
\frac{1}{\sqrt{\lambdamin(\bar{\bV}_i)}}
\text{\ \ \ (by $\bar{\bV}_{g(i)} \succeq \bar{\bV}_i$)} \nn
&\le 
\sum_{n=1}^N \sum_{t=1}^N  \Ind[\tiliota = i, N_i^{\EW_i}(n) = t]
\frac{1}{\sqrt{\lambdamin(\bar{\bV}_i)}} \nn
&\le 
\sum_{t=1}^N 
\frac{1}{\sqrt{\lambdamin(\bar{\bV}_i(\tau_t))}}. \nn
&\text{\ \ \ (by $\{\tiliota = i\}$ increments $N_i^{\EW_i}$)} 
\end{align}
The following lower-bounds $\lambdamin(\bar{\bV}_i(\tau_t))$.

We have 
\begin{align}
\Ex\left[
\lambdamin(\bar{\bV}_i(\tau_t))
\right]
\ge \lambdamin\left(\sum_n \Ex[\Ind\left[\EX_i'(n)] \bx_i \bx_i'\right] \right) \ge \lambda_0
N_{i,t}.
\text{\ \ \ (by Lemma \ref{lem:normdiversity})}
\end{align}
By using the matrix Azuma inequality (Lemma \ref{lem:matrixazuma}), at least $1-1/N$
\begin{equation}\label{ineq:tiliota_lmdmin}
\lambdamin(\bar{\bV}_i(\tau_t)) \ge 
\left(\lambda_0 N_{i,t} - \sqrt{32 N_{i,t} \sigma_A^2} \log(dN)\right)
\end{equation}
where $\sigma_A = 2(L(1/N))^2$.
By using Eq. \eqref{ineq:draw_ew_enough_tiliota}, \eqref{ineq:tiliota_lmdmin}, we have
\begin{equation}
\lambdamin(\bar{\bV}_i(\tau_t)) \ge \lambda_0 e^{-\ahybconst^2/2} t -  O(\sqrt{t})
\end{equation}
and thus
\begin{align}
\sum_{t=1}^N 
\frac{1}{\sqrt{\lambdamin(\bar{\bV}_i(\tau_t))}}
&\le 
\sum_{t=1}^N 
\frac{1}{\sqrt{
\lambda_0 e^{-\ahybconst^2/2} t - O(\sqrt{t})
}} \le 
\frac{2  e^{\ahybconst^2/4}}{\lambda_0} \sqrt{N} + O(1).
\end{align}

We here bound the amount of the subsidy.  Eq.~\eqref{ineq:iucb_reg_lmbmin}, \eqref{ineq:iucb_reg_lmbmintwo}
imply
\begin{align}
\bx_i' \left(\tbtheta_{i} - \hbtheta_{g(i)}\right) 
&\le \frac{1}{\sqrt{\lambdamin(\bar{\bV}_{g(i)})}} L \beta_N \nn
\left|\bx_i' \hbtheta_{g(i)} - \btheta_{g(i)}\right|
&\le 2 \frac{1}{\sqrt{\lambdamin(\bar{\bV}_{g(i)})}} L \beta_N
\end{align}
and thus the subsidy $\shi_i(n) = 0$ for 
\begin{equation}
\lambdamin(\bar{\bV}_{g(i)}) 
\ge 
\left(\frac{2 L \beta_N}{\norm{\btheta}}\right)^2
\max\left(1, \frac{1}{{\ahyb}^2} \right) 
\eqqcolon C_s = \tilO(1). \label{ineq:subsid_cutoff}
\end{equation}
Hence, it follows that
\begin{align}
\Subsid(N) 
&= \sum_n \shi_\iota(n) \le \sum_n \sum_i \Ind[\EX_i] \shi_\iota(n) \nn
&\le L \beta_N \sum_i \sum_n  \Ind[\lambdamin(\bar{\bV}_{g(i)}) 
 \le C_s] \frac{1}{\sqrt{\lambdamin(\bar{\bV}_{g(i)})}} 
\text{\ \ \ (by Eq.~\eqref{ineq:subsid_cutoff})} \nn
&\le L \beta_N \sum_i \sum_t \Ind[\lambda_0 e^{-\ahybconst^2/2} t - O(\sqrt{t})
 \le C_s] \frac{1}{\sqrt{\lambda_0 e^{-\ahybconst^2/2} t - O(\sqrt{t})}} \nn
&\text{\ \ \ \ \ (by the same discussion as Lemma \ref{lem_iucb_iota})} \nn
&\le L \beta_N K \sum_t 
\Ind[\lambda_0 e^{-\ahybconst^2/2} t \le C_s] 
\frac{1}{\sqrt{\lambda_0 e^{-\ahybconst^2/2} t}} + \tilO(1)\nn
&\le L \beta_N K \frac{2 e^{\ahybconst^4/2}}{\lambda_0} \sqrt{\frac{C_s e^{\ahybconst^2/2}}{\lambda_0}} + \tilO(1) 
= \tilO(1). \label{ineq:ucbimp_subsid_const}
\end{align}
\qed %

Note that $C_s$ diverges as $a \rightarrow +0$. The bound of the subsidy is meaningful for $a > 0$. If $a=0$, the hybrid mechanism is reduced to the UCB mechanism, and thus Theorem \ref{thm:ucb index subsidy scheme} for UCB applies.

\subsection{Proof of Theorem \ref{thm_2SLF}}\label{subsec: proof of thm_2SLF}

We modify the proof of Theorem~\ref{thm_reglower}. Accordingly, unless we explicitly mention it, we use the same notation as the proof of Theorem~\ref{thm_reglower}.

We define
\begin{equation}
    \EQ''(n) = \left\{\exists i^A,i^B \text{ s.t. } g(i^A) = g(i^B) = 1, i^A \neq i^B,\text{ and } x_i \htheta_{1,N_1(n)} \ge \frac{1}{2} \mu_x \theta \text{ for }i = i^A,i^B \right\}.
\end{equation}
When the event $\EQ''(n)$ occurs, there are two majority workers whose estimated skill $\hat{q}_i(n)$ is larger than its mean.

\begin{lem}\label{lem:second order statistic}
\begin{equation}\label{ineq:second order statistic}
\Pr[\EQ''(n)|\EQ] \ge 1 - \left(K_1 + 1\right)\left(\dfrac{1}{2}\right)^{K_1}.
\end{equation}
\end{lem}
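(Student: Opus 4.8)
The plan is to reproduce the argument behind Lemma~\ref{lem:lflarge_goodcand1} almost verbatim, upgrading the conclusion from ``at least one good majority candidate'' to ``at least two.'' As in that lemma we work in the $d=1$ case with $\theta>0$ and $\mu_x>0$, and we condition throughout on $\EQ$, so that $\htheta_{1,N_1(n)} \ge \tfrac{1}{2}\theta > 0$. Crucially, the characteristics of the $K_1$ majority candidates arriving in round $n$ are fresh i.i.d.\ draws $x_i \sim \Normal(\mu_x,\sigma_x^2)$, independent of the past history that determines $N_1(n)$ and $\htheta_{1,N_1(n)}$; hence, conditional on $\EQ$, they remain i.i.d.\ $\Normal(\mu_x,\sigma_x^2)$.

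The key reduction is to replace the event defining $\EQ''(n)$ by a simpler sufficient event for each individual candidate. Since $\htheta_{1,N_1(n)} \ge \tfrac{1}{2}\theta > 0$ under $\EQ$, and $\mu_x>0$, whenever $x_i \ge \mu_x$ we have
\begin{equation}
x_i\,\htheta_{1,N_1(n)} \ge \mu_x \cdot \tfrac{1}{2}\theta = \tfrac{1}{2}\mu_x\theta,
\end{equation}
so $\{x_i \ge \mu_x\}$ implies the per-candidate condition in the definition of $\EQ''(n)$. By the symmetry of the normal distribution about its mean, $\Pr[x_i \ge \mu_x] = 1/2$, and these events are independent across the $K_1$ candidates.

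It then remains to count. Letting $Z$ be the number of majority candidates in round $n$ with $x_i \ge \mu_x$, the event $\{Z \ge 2\}$ is contained in $\EQ''(n)$ given $\EQ$, and $Z \sim \mathrm{Binomial}(K_1, 1/2)$. Therefore
\begin{equation}
\Pr[\EQ''(n)\mid \EQ] \ge \Pr[Z \ge 2] = 1 - \Pr[Z=0] - \Pr[Z=1] = 1 - \left(\tfrac{1}{2}\right)^{K_1} - K_1\left(\tfrac{1}{2}\right)^{K_1},
\end{equation}
which is exactly $1 - (K_1+1)(1/2)^{K_1}$. The argument is essentially routine given Lemma~\ref{lem:lflarge_goodcand1}; the only points demanding care are the clean separation of the conditioning (the past history, through $\EQ$ and $\htheta_{1,N_1(n)}$) from the fresh independent draws $x_i$, and the correct inclusion of the ``exactly one good candidate'' term $K_1(1/2)^{K_1}$, which is what distinguishes this second-order-statistic bound from the first-order bound of Lemma~\ref{lem:lflarge_goodcand1}.
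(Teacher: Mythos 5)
Your proposal is correct and follows essentially the same route as the paper: the paper's own (two-sentence) proof invokes the symmetry of the normal distribution, independence of the round-$n$ characteristics, and a count of the configurations with at most one candidate above the mean, which is exactly your binomial bound $\Pr[Z=0]+\Pr[Z=1] = (K_1+1)(1/2)^{K_1}$. Your write-up simply makes explicit the reduction $\{x_i \ge \mu_x\} \Rightarrow \{x_i\htheta_{1,N_1(n)} \ge \tfrac12 \mu_x\theta\}$ under $\EQ$ (with $\theta,\mu_x>0$ as assumed in the proof of Theorem~\ref{thm_reglower}), which the paper leaves implicit.
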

Event $\EQ''(n)$ states that the second order statistics of $\{\hat{q}_i\}_{i: g(i)=1}$ is below mean.
Lemma~\ref{lem:second order statistic} states that this event is exponentially unlikely to $K_1$. By the symmetry of normal distribution and independence of characteristics $\bx_i$, each candidate is likely to be below mean with probability $1/2$, and the proof of Lemma~\ref{lem:second order statistic} directly follows by counting the combinations such that at most one of the worker(s) are above mean.

When we have $\EP'(n)$ and $\EQ''(n)$ for all $n$, then for every round $n$, the top-$2$ workers in terms of quality $\hat{q}_i(n)$ are from the majority. In this case, the minority worker is not hired regardless of the additional signal $\eta_i$. Accordingly, this is a sufficient condition for a perpetual underestimation.

The following is the proof of Theorem~\ref{thm_2SLF}.

\paragraph*{Proof.}
By using Lemmas~\ref{lem:lflarge_smltheta2}, \ref{lem:lflarge_goodtheta1}, \ref{lem:lflarge_loweval2}, and \ref{lem:second order statistic}, we have \eqref{ineq:lflarge_smltheta2}, \eqref{ineq:lflarge_goodtheta1}, \eqref{ineq:lflarge_loweval2}, and \eqref{ineq:second order statistic}. From these equations, the probability of perpetual underestimation is bounded as:
\begin{align}
&\Pr\left[\bigcup_n \{\iota(n)=1\} \right]\\
&\ge 
\Pr\left[\bigcup_n \{\EP'(n), \EQ''(n)\}, \EP, \EQ \right] \\
&\ge
\Pr\left[\EP\right] \Pr\left[\EQ\right] 
\Pr\left[\bigcup_n \{\EP'(n), \EQ''(n)\} \mid \EP, \EQ \right]\text{\ \ \ (by the independence of $\EP$ and $\EQ$)} \\
&\ge 
C_1 b \times
C_2 \times 
\left(1 - N C_3 \exp\left(- C_4 b \right)\right) \times 
\left( 1 - N \left(K_1 + 1\right)\left(\dfrac{1}{2}\right)^{K_1}\right)\text{\ \ \ (by the union bound)},
\end{align}
which, by letting $b = O(1/\log(N))$ and $K_1 + \log_2 (K_1 + 1) \ge \log_2 N$, is $\tilO\left(1\right)$. 
\qed

\subsection{Proof of Theorem \ref{thm:rooney_main}}\label{subsec:Rooney_main proof}

\paragraph*{Proof.} 
We have
\begin{align} 
|\bx_i' (\hbtheta_g - \btheta_g)| 
&\le ||\bx_i||_{\bar{\bV}_g^{-1}} \norm{\hbtheta_g - \btheta_g}_{\bar{\bV}_g} \\
&\le \frac{L}{\lambdamin(\bar{\bV}_g)} \beta_n \text{\ \ \ (by Eq.~\eqref{ineq:bound_context} and \eqref{ineq:bound_conf})}\\
&\le \frac{L}{\lambda} \beta_N \text{\ \ \ (by $\bar{\bV}_g \succeq \lambda \bI_d$)} \nn
&\eqqcolon C_5 = \tilO(1). \label{ineq:lmbound}
\end{align}

Let $i_1$ and $i_2$ be the finalists chosen from group $1$ and $2$, respectively. Let
\begin{equation}
\EJ(n) = \{ \eta_{i_1}(n) - \eta_{i_2}(n) > 2 C_5 \}. 
\end{equation}
 Under $\EJ$, the finalist of group $1$ is chosen because Eq. \eqref{ineq:lmbound} implies that $|\bx_{i_1}' \hbtheta_{g_1} - \bx_{i_2}' \hbtheta_{g_2}| \le 2 C_5$ and thus $\bx_{i_1}' \hbtheta_{g_1} + \eta_{i_1} - \bx_{i_2}' \hbtheta_{g_2} + \eta_{i_2} > 0$.
Note that $\eta_{i_1} - \eta_{i_2}$ is drawn from $\Normal(0, 2 \sigma_\eta^2)$.
Let $C_6 = \Phi^c(\sqrt{2} C_5 / \sigma_\eta)$. Then,
\begin{equation}\label{ineq:evt_twoststrong}
\Pr[\EJ(n)] = C_6.
\end{equation}

Let $N_1^{\EJ} = \sum_{n'=1}^{n-1} \Ind[g(\iota) = 1, \EJ] \le N_1(n)$ be the number of hiring of group $1$ under event $\EJ$. 
By using the Hoeffding inequality, with probability $1-1/N^2$ we have
\begin{align}\label{ineq:ngminrooney}
N_1^{\EJ} \ge n C_6 - \sqrt{n \log(N)}.
\end{align}
By taking union bound, Eq.~\eqref{ineq:ngminrooney} holds for all $n$ with probability $1 - \sum_n 1/N^2 \ge 1 - 1/N$.
From now, we evaluate $\lambdamin\left(\bar{\bV}_1(n)\right)$. It is easy to see that 
\begin{align} 
\bar{\bV}_1 
&\coloneqq \sum_{n'=1: \iota(n') = g}^n \bx_{i_1} \bx_{i_1}' + \lambda I \ge \sum_{n'=1: \iota(n') = g}^n \bx_{i_1} \bx_{i_1}' \ge \sum_{n'=1: \EJ}^n \bx_{i_1} \bx_{i_1}'.
\end{align}
In the following, we lower-bound the quantity
\begin{equation} \label{ineq:rooneylmbmin}
\lambdamin(\Ex[\bx_i \bx_i' | \EJ]) 
\ge \min_{\bv: ||\bv||=1} \lambdamin(\Var[\bv' \bx_i | \EJ]).
\end{equation}
Note that $i_1 = \argmax_{i: g(i)=1} \bx_i' \hbtheta_1$ is biased toward the direction of $\hbtheta_1$, and we cannot use the diversity condition (Lemma \ref{lem:normdiversity}).
Let $\bv_{\parallel}$ and $\bv_{\perp}$ be the component of $\bv$ that is parallel to and perpendicular to $\hbtheta_1$ (i.e., $||\bv_{\parallel}||^2 + ||\bv_{\perp}||^2 = 1$).
It is easy to confirm that $\Var[\bv_{\perp}' \bx_i] = ||\bv_{\perp}||^2 \sigma_x^2$ because selection of $\argmax_i \bx_i' \hbtheta_g$ does not yield any bias in perpendicular direction. 
Regarding $\bv_{\parallel}$, Lemma \ref{lem_varmax} characterize the variance, which is slightly ($O(1/\log K)$) smaller than the original variance due to biased selection. That is,
\begin{equation}
\min_{\bv: ||\bv||=1} \lambdamin(\Var[\bv' \bx_i | \EJ])  \ge \sigma_x \left( \frac{C_{\mathrm{varmax}}}{\log(K)} ||\bv_{\parallel}||^2 + ||\bv_{\perp}||^2 \right) \ge \sigma_x \frac{C_{\mathrm{varmax}}}{\log(K)}.
\end{equation}
By using the matrix Azuma inequality (Lemma \ref{lem:matrixazuma}) with $\sigma_A = 2 L^2$, for $t = \sqrt{32 N_1^{\EJ} \sigma_A^2} \log(dN)$, with probability $1-1/N$
\begin{equation}\label{ineq:lmdminrooney}
\lambdamin(\bar{\bV}_1) \ge \sigma_x \frac{C_{\mathrm{varmax}}}{\log(K)} N_g^{\EJ} - t.
\end{equation}

Combining Eq. \eqref{ineq:ngminrooney} and \eqref{ineq:lmdminrooney}, with a probability at least $1-2/N$, we have
\begin{equation}\label{ineq:rooney_lmdmin_sqrt}
\lambdamin(\bar{\bV}_1(n)) \ge \sigma_x \frac{C_p}{\log(K)} n - \tilO(\sqrt{n})
\end{equation}
where $C_p = C_6 C_{\mathrm{varmax}} = \tilO(1)$.
By symmetry, exactly the same bound as Eq.~\eqref{ineq:rooney_lmdmin_sqrt} holds for group $2$.
Finally, by using similar transformations as Eq.~\eqref{ineq:thm_smlcand_final}, the regret is bounded as 
\begin{align}
\Ex[\Regret(N)] 
&\le 2 \sum_{n=1}^N \max_{i \in [K]} \left|\bx_i'(n) (\hbtheta_g - \btheta_g)\right| \\
&\le 2 \sum_{n=1}^N \frac{L}{\sqrt{\lambdamin(\bar{\bV}_{g})}} \beta_N
\text{\ \ \ (by Eq.~\eqref{ineq:bound_context}, \eqref{ineq:bound_conf})}\\ 
&\le 2L\beta_N \sum_{n=1}^N \sqrt{ \frac{\log(K)}{\sigma_x C_p n - \tilO(\sqrt{n})} } 
\text{\ \ \ (by Eq.~\eqref{ineq:rooney_lmdmin_sqrt})} \\ 
&\le 4 L \beta_N \sqrt{\frac{N \log(K)}{\sigma_x C_p}} + \tilO(1) 
= \tilO(\sqrt{N}). \label{ineq:thm_rooney_form}
\end{align}
\qed

\section{Sublinear Regret and Other Fairness Criteria}\label{sec: fairness criteria}

In this section, we analyze the relationship between sublinear regret social learning (under which per-round expected regret converges to zero) and fairness notions prevalent in fair machine learning literature. For clarity, we invoke Assumption~\ref{assp:twogroups}, specifying that group $1$ is the majority (unprotected) group and group $2$ is the minority (protected) group.

The literature has proposed and analyzed many different fairness notions \citep[see, e.g., a survey by][]{MakhloufZP21}. The most frequently discussed measures of fairness are \emph{demographic parity} and \emph{equalized odds}, highlighting their prominence in scholarly discourse.

\subsection{Sublinear Regret}

Recall that a decision rule $\iota$ is said to have \emph{sublinear regret} if $\Ex[\Regret(N)]=O(N^a)$ for some $a < 1$. In the machine learning literature, a policy is said to have sublinear regret if the per-round expected regret approaches zero as the number of rounds ($N$) increases. The sublinear-regret property indeed requires that $\Ex[\Regret(N)]/N \to 0$ as $N \to \infty$. This implies that the decision rule eventually makes an unbiased decision in the sense that each firm hires based on accurately estimated skill predictors, $q_i$, uninfluenced by the workers' group affiliation. Consequently, sublinear regret implies asymptotic unbiasedness in firms' decision-making processes.

The main part of this paper has evaluated the regret order of various decision rules. In summary, the laissez-faire decision rule exhibits sublinear regret in a balanced population scenario (Theorem~\ref{thm_smlcand}), whereas it encounters significant regret even in the long run in scenarios with unbalanced populations (Theorem~\ref{thm_reglower}). By contrast, the UCB and hybrid decision rules demonstrate sublinear regret even with unbalanced populations (Theorems~\ref{thm:ucb} and \ref{thm:ucbimp}). Although nearly all these findings are confirmed for symmetric groups (i.e., Assumption~\ref{assp:idcontext} is assumed), the sublinear-regret property of the UCB decision rule remains independent of this assumption.

\subsection{Equalized Odds}

Equalized odds, defined below, is the fairness notion the most directly related to sublinear regret.

\begin{definition}[Equalized Odds \citep{HardtPNS16}]\label{defn: equalized odds}
A decision rule asymptotically aligns with \emph{equalized odds} if, for any $\eps > 0$, there exists a number $N_0$ such that for all $N \ge N_0$,
\begin{align}
&\frac{1}{N}
\sum_{n \le N}
\left(
\left| 
\Prob[
\iota(n) = i | i^*(n) = i, g(i) = 1
] 
-
\Prob[
\iota(n) = i | i^*(n) = i, g(i) = 2
] 
\right|
\right)
< \eps, \text{ and}
\\
&\frac{1}{N}
\sum_{n \le N}
\left(
\left| 
\Prob[
\iota(n) = i | i^*(n) \ne i, g(i) = 1
] 
-
\Prob[
\iota(n) = i | i^*(n) \ne i, g(i) = 2
] 
\right|
\right)
< \eps,
\end{align}
where $i^*(n) := \argmax_{i\in I(n)} q_i$. 
The probability is marginalized on the candidate $i$ as well as the other candidates on the same round $n$. 
\end{definition}

Equalized odds requires that the hiring practices of firms perform equitably across various groups. This implies that given a worker possesses the highest skill predictor $q_i$ for a given round, the probability of her being hired remains independent of her group affiliation. Considering we are dealing with a social learning problem, imposing this condition for all rounds $t=1,2,\dots$ would be overly restrictive; hence, we only enforce it as an asymptotic condition.

The following theorem characterizes the relationship between sublinear regret and equalized odds.

\begin{thm}[Sublinear Regret Implies Equalized Odds]\label{thm: sublinear regret implies equalized odds}
Suppose Assumptions~\ref{assp:twogroups} and \ref{assp:normalcontext}. If a decision rule has sublinear regret, then it asymptotically aligns with equalized odds.
\end{thm}

The proof is in Appendix~\ref{subsec: proof of sublinear regret theorems}. The intuition is as follows. The violation of equalized odds leads to persistent biased decisions by firms. Consequently, society experiences enduring, non-diminishing regret, which signifies a failure in sublinear-regret learning. Therefore, the realization of sublinear regret inherently necessitates the fulfillment of equalized odds.

\subsection{Demographic Parity}

Next, we discuss another fairness notion called demographic parity. Demographic parity is defined as follows.

\begin{definition}[Demographic Parity]\label{defn: demographic parity}
A decision rule asymptotically aligns with \emph{demographic parity} if, for any $\eps > 0$, there exists a number $N_0$ such that for all $N \ge N_0$,
\begin{equation}
\frac{1}{N}
\sum_{n \le N}
\left(
\left| 
\Prob[
\iota(n) = i | g(i) = 1
] 
-
\Prob[
\iota(n) = i | g(i) = 2
] 
\right|
\right) < \eps.
\end{equation}
\end{definition}

Demographic parity necessitates the hiring probability to be indifferent to affiliation with the minority group. When this criterion is met, the proportion of hired workers from group $g$ aligns with the overall population ratio of group-$g$ workers. Analogous to Definition~\ref{defn: equalized odds}, the imposition of this condition for all rounds would be excessively restrictive; hence, we establish it as an asymptotic condition.

It is widely recognized that numerous fairness notions in machine learning are often at odds with each other and coexist only within very limited contexts \citep[for more details, see][]{KleinbergMR17}. Demographic parity and equalized odds are prime examples of this conflict. Demographic parity necessitates equal treatment across groups, ignoring the individual skills of each worker. In contrast, equalized odds demands that firms hire proficient workers based on their skills, uninfluenced by their group affiliations. It is evident that these two objectives are incompatible when the distribution of workers' skills differs among groups. 
Since the sublinear-regret principle aligns with equalized odds, it generally contradicts demographic parity within general environments.

This paper does not aim to adjudicate between conflicting fairness notions, which cannot be simultaneously satisfied. Therefore, the core discussion assumes that groups have no significant disparities beyond their sizes (Assumption~\ref{assp:idcontext}). Under this assumption, the successful hiring of the most skilled workers will naturally result in the hired group's composition reflecting the population ratio. Consequently, sublinear regret, equalized odds, and demographic parity align harmoniously. The theorem presented subsequently formalizes this assertion.

\begin{thm}[Sublinear Regret Implies Demographic Parity with Symmetric Groups]\label{thm: sublinear regret implies demographic parity}
Suppose Assumptions~\ref{assp:twogroups}, \ref{assp:idcontext}, and \ref{assp:normalcontext}. If a decision rule has sublinear regret, then it asymptotically aligns with demographic parity.
\end{thm}

The proof is in Appendix~\ref{subsec: proof of sublinear regret theorems}. Theorem~\ref{thm: sublinear regret implies demographic parity} elucidates that in instances of group symmetry, equalized odds and demographic parity are compatible, eliminating the contention over the choice of fairness notions. Moreover, these two properties' satisfaction is guaranteed by the principle of sublinear regret. Hence, regret has been employed as an outcome measure under the assumption of group symmetry.

While our discussion has centered on equalized odds and demographic parity, numerous other fairness notions have been developed within the machine-learning literature. It is anticipated that many of these notions align within symmetric groups, but typically exhibit conflicts with sublinear regret and equalized odds in more general environments.

\subsection{Proofs}\label{subsec: proof of sublinear regret theorems}

We first introduce several technical lemmas and then prove Theorems~\ref{thm: sublinear regret implies equalized odds} and \ref{thm: sublinear regret implies demographic parity}.

\begin{lem}[Small Gap]\label{lem_propdist}
Suppose Assumptions~\ref{assp:twogroups} and Assumption \ref{assp:normalcontext}.
Let the minimal gap among $K$ candidates in round $n$ for a fixed $n$.
\begin{equation}
\Delta(n) \coloneqq \min_{i \ne j} 
|\bx_i' \btheta_{g(i)} 
- \bx_j' \btheta_{g(j)}
|.
\end{equation}
Then, there exists a constant $C >0$ such that, for any sufficiently small $\eps>0$
\begin{equation}\label{ineq_propdist}
\Prob[\Delta(n) \le \eps] \le C \eps.
\end{equation}
\end{lem}
\paragraph*{Proof of Lemma \ref{lem_propdist}.} %
By Assumption \ref{assp:normalcontext}, $\bx_i' \btheta_{g(i)} 
- \bx_j' \btheta_{g(j)}$ is a normal distribution with a constant variance, and thus for a sufficiently small $\eps>0$, there exists $C_2$ such that 
\[
\Prob[
|\bx_i' \btheta_{g(i)} 
- \bx_j' \btheta_{g(j)}| \le \eps
] \le C_2 \eps.
\]
Event 
$\bigcup_{i \ne j} \{
|\bx_i' \btheta_{g(i)} 
- \bx_j' \btheta_{g(j)}| \le \eps\}$
is a necessary condition for
$\{\Delta(n) \le \eps\}$, and thus Eq.~\eqref{ineq_propdist} holds with $C = C_2 K^2$ by using a union bound over $i,j$, which completes the proof. \qed

\begin{lem}\label{lem_propdist_cumulative}
Suppose Assumptions~\ref{assp:twogroups} and Assumption \ref{assp:normalcontext}.
With probability at least $1 - e^{-N \eps/8}$, we have
\[
\sum_{n \le N} \Ind[
\Delta(n) \le \eps
] \le \frac{C}{2} N \eps,
\]
where $C>0$ is the same constant as Lemma \ref{lem_propdist}.
\end{lem}
\paragraph*{Proof of Lemma \ref{lem_propdist_cumulative}.} %
We use the multiplicative Chernoff bound. Namely, for a sequence of $N$ binary random variables $X_1,X_2,\dots,X_N$ with its mean no less than $\mu$, its summation $S$ satisfies
\begin{equation}
\Prob\left[
S \le (1-\delta) \mu
\right] \le e^{-\delta^2 N \mu / 2}.
\end{equation}
Applying the bound above with $\delta = 1/2$ to the sequence of binary events $\{n: \Ind[\Delta \le \eps]\}$ (each occurs with probability at least $C\eps$) yields the desired bound. \qed

The following lemma states that a sublinear regret decision rule chooses the best candidate for almost all the rounds.

\begin{lem}[Characterization of Sublinear-Regret Decision Rule]\label{lem:correctness}
Suppose Assumption \ref{assp:normalcontext}. Then, under a sublinear-regret decision rule, the following holds:
\[
\sum_{n \le N}
\Prob[\iota(n) \ne i^*(n)] 
= o(N).
\]
\end{lem}
\paragraph*{Proof of Lemma \ref{lem:correctness}.} %
By definition, a sublinear-regret decision rule satisfies
\[
\Ex[\Regret(N)]=O(N^a).
\]
We prove Lemma \ref{lem:correctness} by contradiction. Assume that there exists $C_2, N_0>0$ such that 
\begin{equation}\label{ineq:linearincorrect}
\sum_{n \le N} \Prob[\iota(n) \ne i^*(n)] > C_2 N
\end{equation}
for all $N \ge N_0$. Applying the multiplicative Chernoff bound to the sequence of binary events $\{\iota(n) \ne i^*(n)\}$ yields the fact that, with probability at least $1 - e^{-C_2 N / 8}$, we have
\begin{equation}\label{ineq:misidentifycount}    
\sum_{n \le N} \Ind[\iota(n) \ne i^*(n)] > \frac{C_2}{2} N
\end{equation}
Moreover, Lemma \ref{lem_propdist_cumulative} implies the following: Letting $C_3 = C_2/(2C)$, with probability at least $1 - e^{-C_3 N / 8}$, 
\begin{equation}\label{ineq:lineargap}
\sum_{n \le N} \Ind[\Delta(n) \le C_3] \le \frac{C_2}{4} N.
\end{equation}
Eq.~\eqref{ineq:misidentifycount} and \eqref{ineq:lineargap} imply that, with probability at least $1 - e^{-C_3 N / 8}$, there are at least $C_2 N / 2 - C_2 N/4 = C_2N/4$ rounds where $\regret(n)$ is at least $C_3$, 
which implies that the regret is 
\begin{equation}
\Regret(N) = \sum_{n \le N} \regret(n) \ge C_3 \times \frac{C_2}{4} N = \Omega(N)
\end{equation}
which contradicts the fact that the decision rule has sublinear regret. \qed

\paragraph{Proof of Theorem~\ref{thm: sublinear regret implies equalized odds}.}

Lemma~\ref{lem:correctness} implies $\Prob[\iota(n)=i|i^*(n)\ne i,g], \Prob[\iota(n)\ne i|i^*(n) = i,g]  = o(N)$, $\Prob[\iota(n)=i|i^*(n)=i,g], \Prob[\iota(n)\ne i|i^*(n)\ne i,g] = 1-o(N)$. Accordingly, the decision rule $\iota$ asymptotically aligns with equalized odds. \qed

\paragraph{Proof of Theorem~\ref{thm: sublinear regret implies demographic parity}.}
Let $\eps > 0$ be a constant.
Assume that
\begin{equation}\label{eq: demographic parity proof}
\frac{1}{n}
\sum_{n \le N}
\left| 
\Prob[
\iota(n) = i | g(i) = 1
] 
-
\Prob[
\iota(n) = i | g(i) = 2
] 
\right| \ge \eps
\end{equation}
for a sufficiently large $N$. Under Assumption~\ref{assp:idcontext}, the probability that a worker has the highest skill predictor $q_i$ is independent of the worker's group. Accordingly, \eqref{eq: demographic parity proof} implies that the decision rule hires at least $\eps N/(2K)$ suboptimal workers in the first $N$ rounds, in expectation. A similar discussion to Lemma \ref{lem:correctness} yields the fact that the regret due to choosing these suboptimal candidates is $\Omega(\eps N/K)$, implying that the decision rule fails to have sublinear regret. \qed

\section{Additional Simulation}\label{sec: additional simulation}

\subsection{The Pivot Subsidy Rule vs the Cost-Saving Subsidy Rule}\label{subsec: simulation cost saving}

\begin{figure}[t!]
    \centering
    \begin{minipage}[t]{0.48\textwidth}
        \centering
        \includegraphics[width=\textwidth]{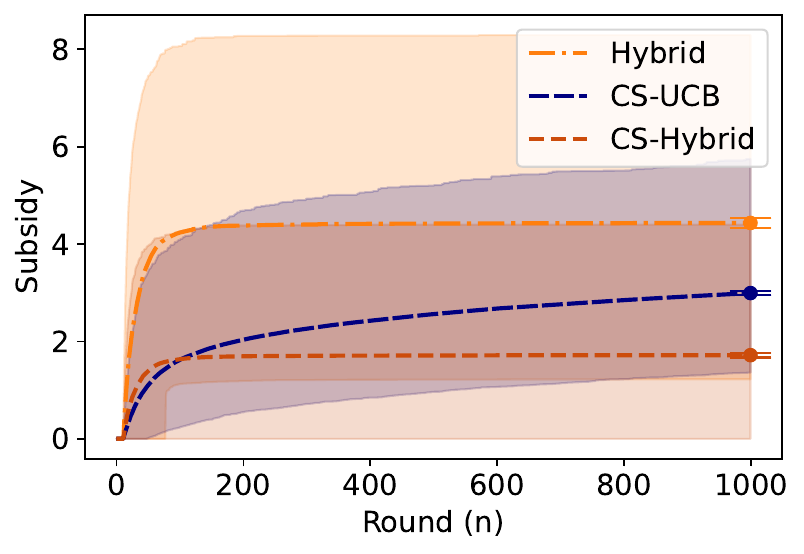}
        \footnotesize
        (a): $N = 1,000$
    \end{minipage}
    \hspace{0.02\textwidth}
    \begin{minipage}[t]{0.48\textwidth}
         \centering
         \includegraphics[width=\textwidth]{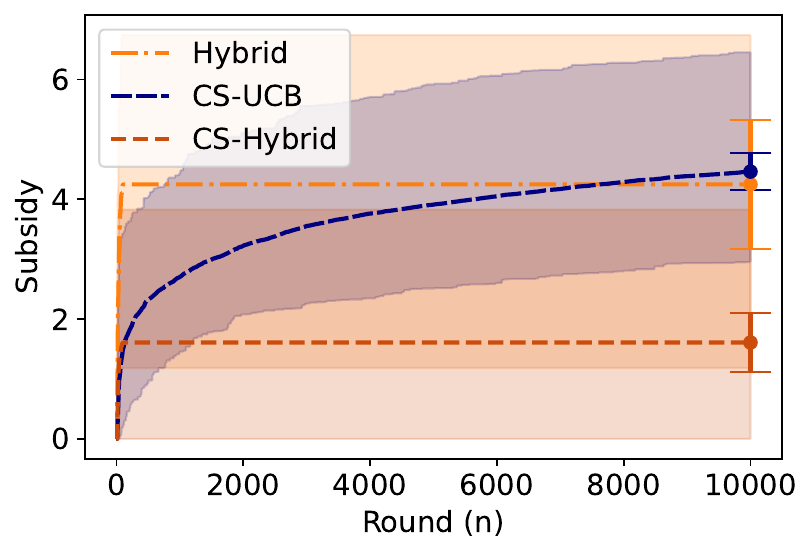}
        \footnotesize
        (b): $N = 10,000$
    \end{minipage}
    \smallskip
    \caption{Budget required by the hybrid index subsidy rule (Hybrid), the UCB cost-saving subsidy rule (CS-UCB), and the hybrid cost-saving subsidy rule (CS-Hybrid).}
    \label{fig: iucb_subsidy_cs_whole}
    \footnotesize
    \raggedright
    \medskip
    
    \textbf{Note:} The lines are averages over sample paths. The areas cover between $5\%$ and $95\%$ percentiles of runs, and the error bars at $N = 1,000$ and $N=10,000$ are the two-sigma confidence intervals. We only run 50 simulations for the case of $N = 10,000$ (Panel b) because its aim is to visualize the long-run subsidy growth under the UCB cost-saving subsidy rule.
\end{figure}

Figure~\ref{fig: iucb_subsidy_cs_whole} compares the subsidy amount associated with the UCB cost-saving subsidy rule and the hybrid subsidy rules. The UCB index subsidy rule is excluded because it requires a much larger subsidy (as shown in Figure~\ref{fig:iucb_subsidy}, the UCB index subsidy rule requires more than $150$ unit of subsidy for $N = 1,000$).

For $N = 1,000$ (Panel a), the hybrid cost-saving subsidy rule achieves the smallest subsidy, followed by the UCB cost-saving subsidy rule and the hybrid index subsidy rule.
We observe that the cost-saving method is very effective.

While the subsidy required by the hybrid rule is proven to be $\tilO(1)$ (this is immediate from Theorems~\ref{thm:ucbimp} and \ref{cor:ucb cost-saving subsidy scheme}), there is no such guarantee for the UCB cost-saving subsidy rule. We conjecture that the subsidy required by the UCB cost-saving rule is $\tilOmega(\sqrt{N})$. 

Panel (b) supports this conjecture by showing regret under a longer time horizon ($N = 10,000$). While the subsidy required by the hybrid subsidy rules remain constant after a few (about 100) rounds, the subsidy required by the UCB cost-saving rule grows gradually. Consequently, (i) the hybrid cost-saving subsidy rule substantially outperforms the UCB cost-saving subsidy rule at $N = 10,000$, and (ii) the hybrid index subsidy rule overtakes the UCB cost-saving subsidy rule.

\subsection{The Hybrid Mechanism vs Uniform Sampling}\label{subsec: hybrid vs uniform sampling}

\citet{kannan2018} show that with sufficiently large initial samples (i.e., $\Nzero$ is large), the greedy algorithm (corresponding to laissez-faire in this paper) has sublinear regret.\footnote{We also note that the number of initial samples required by the relevant theorem ($n_{\mathrm{min}}$ of Lemma 4.3) is very large and cannot be satisfied in our simulation setting: Letting $R = \sigma_x \sqrt{2\log(N)}$, we have $n_{\mathrm{min}} \ge 320 R^2 \log(R^2 d K/\delta) / \lambda_0 \ge 10^3$.} Our analysis also indicates that the probability of perpetual underestimation is small when $\Nzero$ is large (see Lemma \ref{lem:lflarge_smltheta2} for full details).

This ``warm-start'' version of laissez-faire might be presumed efficient. However, the warm-start approach carries several disadvantages. First, although we have thus far ignored the cost of acquiring initial samples for analytical tractability, we need to consider this cost if we want to \emph{take} a sufficiently long warm-start period. Since uniform sampling ignores firms' incentives for hiring workers, practical implementation of it requires a large budget. Second, uniform sampling does not maximize any index. This precludes its implementation by any index policy.
Third, uniform sampling is inefficient in terms of information acquisition because it is not adaptive to currently estimated parameters.

We argue that our hybrid mechanism (Section \ref{sec: Hybrid Decision Rule}) is more efficient than laissez-faire with a warm start because it initially samples the data adaptively before switching to laissez-faire at an efficient time. Hence, we can naturally expect the hybrid mechanism outperforms laissez-faire with initial uniform sampling.

\begin{figure}[t!]
    \centering
    \begin{minipage}[t]{0.48\textwidth}
         \centering
         \includegraphics[width=\textwidth]{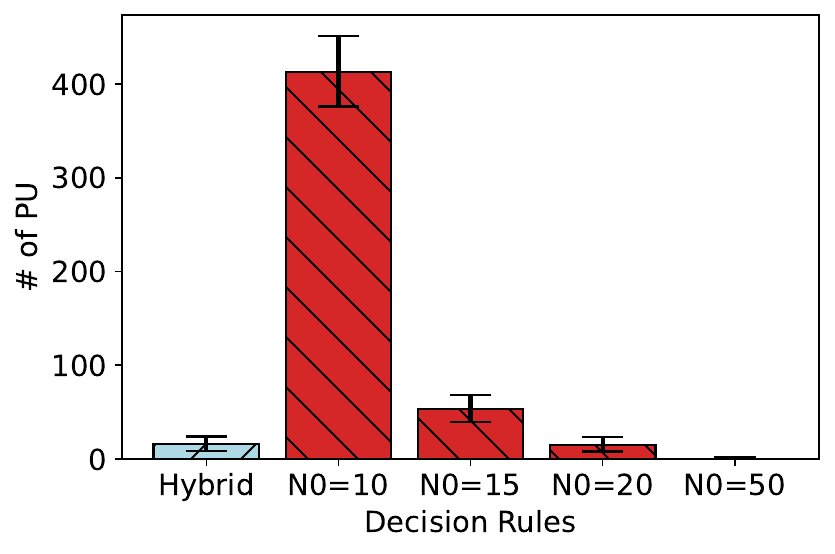}
        \footnotesize
        (a): Frequency of perpetual underestimation
    \end{minipage}
    \hspace{0.02\textwidth}
    \begin{minipage}[t]{0.48\textwidth}
         \centering
         \includegraphics[width=\textwidth]{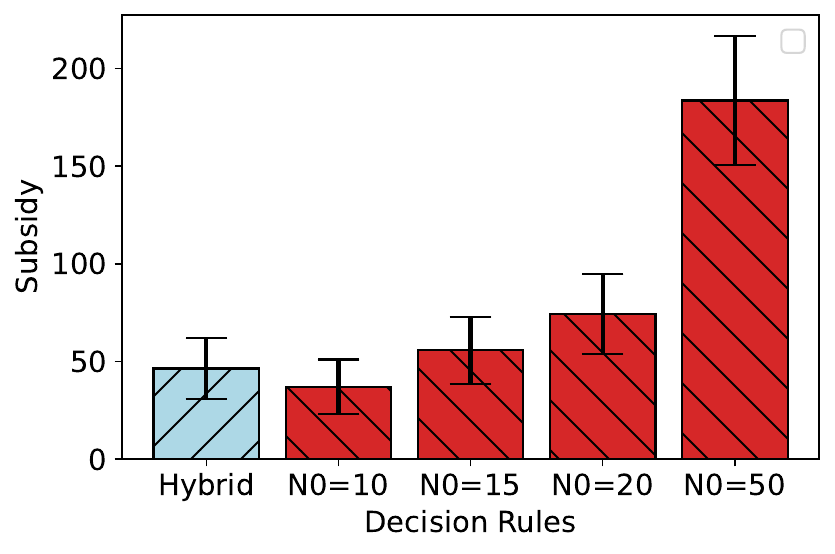}
        \footnotesize
        (b): Budget
    \end{minipage}
    \smallskip
    \caption{The comparison between the hybrid mechanism and laissez-faire with various lengths of initial sampling.}
    \label{fig: initial sampling}
    \raggedright
    \footnotesize
    \medskip
    
    \textbf{Note:} Across $\Numrun$ runs. The error bars represent the two-sigma binomial confidence intervals.
\end{figure}

Figure~\ref{fig: initial sampling} exhibits the simulation results comparing the hybrid mechanism with laissez-faire for various initial samples. In this simulation, the number of initial samples for each group is proportional to the population ratio; i.e., $\Nzero_g = (K_g/K) \cdot \Nzero$.

Panel (a) measures the frequency of perpetual underestimations. As our theory indicated, the larger the initial sample, the less frequently perpetual underestimation occurs. Additionally, we observed no perpetual underestimation for the hybrid mechanism, because it solidly incentivizes hiring candidates from an underexplored group.

Panel (b) depicts the subsidy amount required by the cost-saving subsidy rules. Here, we can observe that the hybrid cost-saving subsidy rule outperforms laissez-faire with uniform sampling. Laissez-faire requires at least $\Nzero \ge 50$ samples to mitigate perpetual underestimation, which requires a larger budget than the hybrid mechanism.

\subsection{Asymmetric Groups}\label{subsec: simulation on asymmetric models}

This section will illustrate how social learning behaves under an asymmetric environment, where the differences between the majority and minority groups extend beyond their group size. We employ $d=5$ and $\bmu_{x, g} = \bar{\mu}_{x, g} (1, 1, 1, 1, 1)$ for $\bar{\mu}_{x,g} \in \Real_{++}$.
Specifically, we fix the average characteristic of the majority group to $\bar{\mu}_{x, 1} = \Simmux$, and examine the varying average characteristic $\bar{\mu}_{x, 2}$ of the minority group.
The other model parameters are set based on the values provided in Section~\ref{sec: simulation}.

Even though the same value of $\btheta$ is applied, the agents are not aware of this beforehand. This means that the distinction between the groups, in terms of which group is superior on average, is only learned through the data that is gathered over time. In this particular setup, the mean of the skill predictor, $q_i$, differs across the groups, however, their variances are identical. This is because, under the assumption that $\sigma_x = 1$, the variance of $q_i$ is equal to $5$. This model allows us to examine how differences in the average skills between groups, as well as the learning process and the effects of any inherent biases, impact decision-making in hiring processes.

\begin{figure}[t!]
    \centering
    \begin{minipage}[t]{0.48\textwidth}
        \centering
         \includegraphics[width=\textwidth]{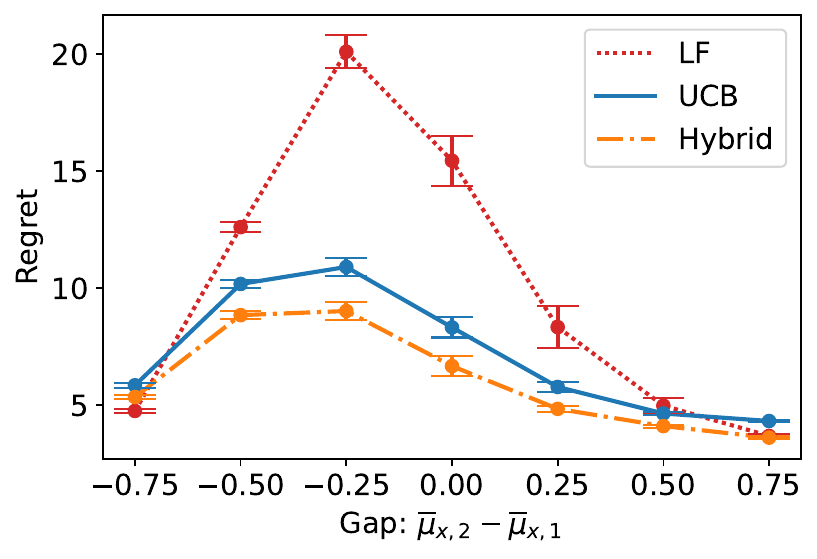}
         \caption{Average skill disparity and regret at the last round $N = 1,000$.}
         \label{fig:mux_regret}
    \end{minipage}
    \hspace{0.02\textwidth}
    \begin{minipage}[t]{0.48\textwidth}
         \centering
         \includegraphics[width=\textwidth]{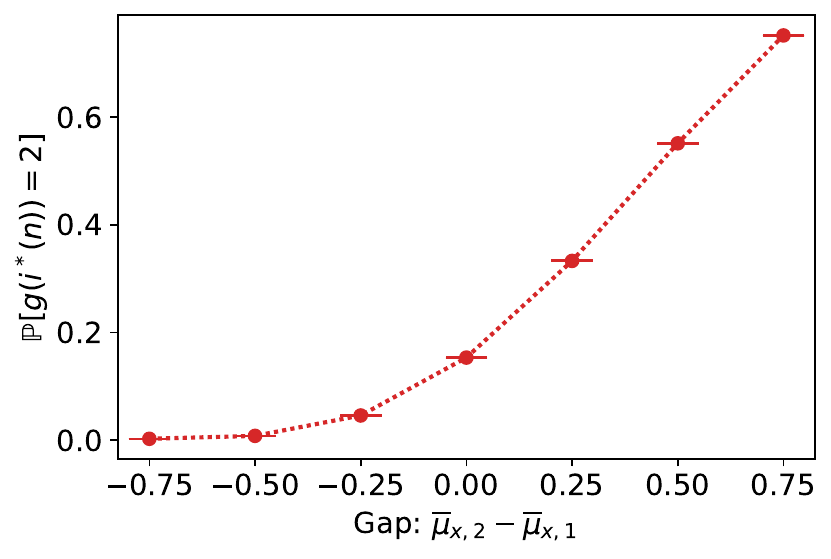}
         \caption{Average skill disparity and the proportion of the best candidate from group $2$.}
         \label{fig:mux_groupd2best}
    \end{minipage}
    \raggedright
    \footnotesize
    \medskip
    
    \textbf{Note:} Across $\Numrun$ runs. The error bars represent the two-sigma binomial confidence intervals.
\end{figure}

The relationship between the gap $\bar{\mu}_{x,2} - \bar{\mu}_{x, 1}$ and regret is demonstrated in Figure~\ref{fig:mux_regret}. Considering that $\bx$ is a five-dimensional variable and $\btheta = \Simtheta$, the skill predictor $q_i = \bx_i' \btheta$ adheres to a normal distribution 
$\Normal(5\bar{\mu}_{x, g}, 5)$. For example, when the gap is $-0.50$, an average worker from group $1$ (whose skill is $1.5 \times 5 = 7.5$) slightly outperforms a one-sigma (approximately top-$15$\%) worker from group $2$ (whose skill is $(1.5-0.5)\times 5 + \sqrt{5} \approx 7.23$).
In Figure~\ref{fig:mux_groupd2best}, we show the proportion of the most skilled candidate (the one with the highest $q_i$) from group $2$.

In cases where the disparity in average skills between groups is minimal, a laissez-faire approach tends to perform poorly, incurring much larger regret than the UCB and hybrid decision rules. The reason for this is similar to the situation with symmetric groups. When the level of asymmetry is low, a laissez-faire approach frequently leads to perpetual underestimation, which results in societal losses.

However, when the average skill disparity is extreme, laissez-faire achieves smaller regret. If the minority group has a significantly higher average skill level (i.e., a large positive gap $\bar{\mu}_{x, 2} - \bar{\mu}_{x, 1} \ge 0.75$), the distribution of top performers becomes balanced across groups. As a result, both groups have a substantial probability of being hired, reducing the likelihood of perpetual underestimation. Laissez-faire, which saves costs of exploration, can result in smaller regret under such circumstances. 

On the other hand, if the minority group has a much lower average skill level (i.e., a large negative gap $\bar{\mu}_{x, 2} - \bar{\mu}_{x, 1} \le -0.75$), almost all best candidates are from the majority group. Accordingly, perpetual underestimation is not costly in terms of regret. Consequently, laissez-faire may achieve smaller regret. However, it is important to note that this result could be perceived as highly unfair, especially from the perspective of highly-skilled minority workers. Additionally, the relative performance of laissez-faire tends to be worse with longer time horizons, as the social cost of consistently missing skilled minority workers starts to accumulate.

Theorem~\ref{thm:ucb} guarantees that the UCB decision rule has $\tilO(\sqrt{N})$ regret regardless of the value of $\bar{\mu}_{x,2} - \bar{\mu}_{x,1}$, showing that UCB is a reliable approach for maintaining low regret.
In contrast, Theorem~\ref{thm:ucbimp} requires $\bar{\mu}_{x,2} - \bar{\mu}_{x,1} = 0$ as a premise. Despite the absence of theoretical guarantee for asymmetric cases, our simulations suggest that the hybrid decision rule consistently outperforms the UCB rule across a wide range of skill disparities $\bar{\mu}_{x,2} - \bar{\mu}_{x, 1}$. We conjecture that the hybrid rule successfully leverages the benefits of both laissez-faire and UCB: Laissez-faire performs well in highly asymmetric environments, whereas UCB performs well in nearly symmetric environments. However, proving the effectiveness of the hybrid mechanism under a broader range of environments theoretically remains an open and intriguing research question.

In the following, we present the detailed behavior of perpetual underestimation, regret, and subsidy in an asymmetric environment. We posit that $\bar{\mu}_{x,1} = 1.5$ and $\bar{\mu}_{x, 2} = 1.0$. That is, $q_i \sim \Normal(7.5, 5)$ for a group-$1$ worker, whereas $q_i \sim \Normal(5, 5)$ for a group-$2$ worker. Consequently, the probability of a random group-$1$ worker outperforming a random group-$2$ worker is 78.57\%.

\paragraph{The Effects of Population Ratio}

\begin{figure}[t!]
    \centering
    \begin{minipage}[t]{0.48\textwidth}
        \centering
         \includegraphics[width=\textwidth]{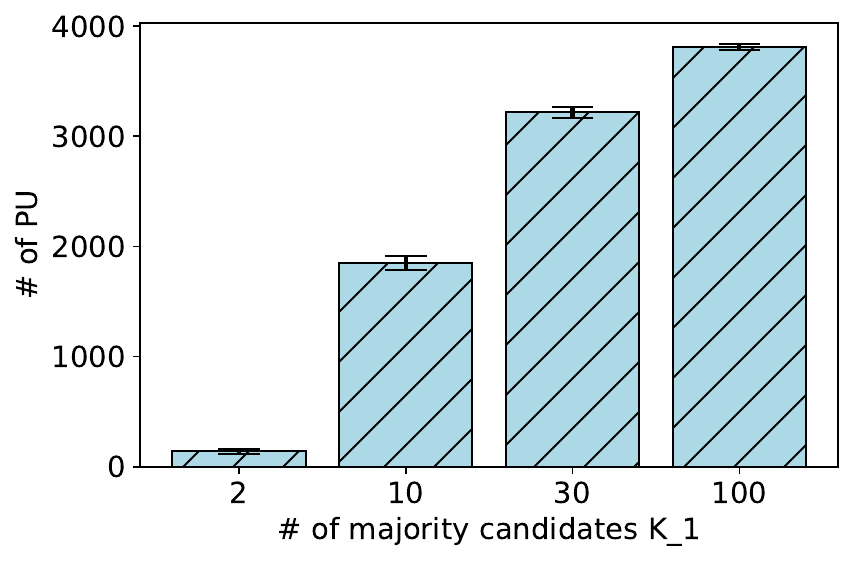}
         \caption{Frequency of perpetual underestimation under laissez-faire (Appendix~\ref{subsec: simulation on asymmetric models}).}
         \label{fig:groupsize_pu DAC}
    \end{minipage}
    \hspace{0.02\textwidth}
    \begin{minipage}[t]{0.48\textwidth}
         \centering
         \includegraphics[width=\textwidth]{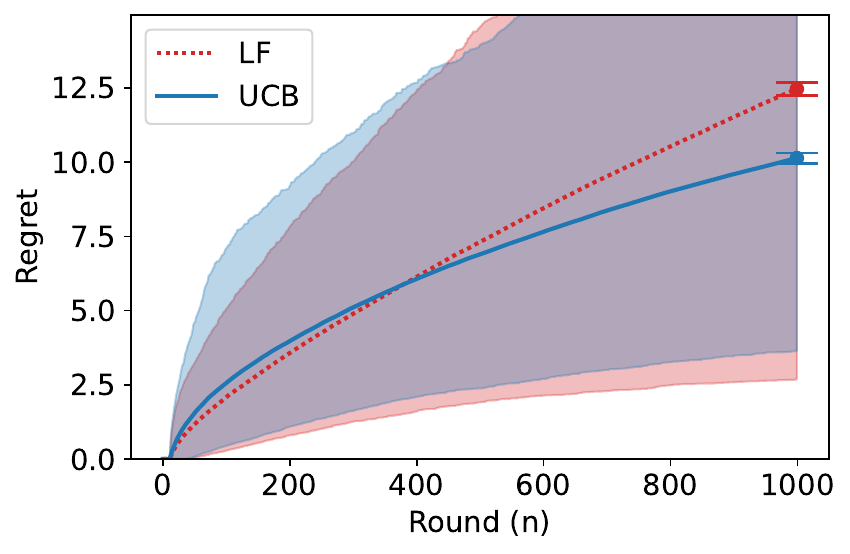}
         \caption{Regret under the LF and UCB decision rules (Appendix~\ref{subsec: simulation on asymmetric models}).}
         \label{fig:policycomp_regret DAC}
    \end{minipage}
    \raggedright
    \footnotesize
    \medskip
    
    \textbf{Left:} Across $\Numrun$ runs. The error bars represent the two-sigma binomial confidence intervals.
    
    \textbf{Right:} The lines are averages over sample paths, the areas cover between $5\%$ and $95\%$ percentiles of runs, and the error bars at the last round $N = 1,000$ are the two-sigma confidence intervals.
\end{figure}

Figure~\ref{fig:groupsize_pu DAC} demonstrates the frequency of perpetual underestimation under laissez-faire, in parallel with Figure~\ref{fig:groupsize_pu}. Perpetual underestimation occurs more frequently in this asymmetric environment because we have assumed that the skills of minority workers are genuinely likely to be lower. However, since the game continues until round $N = 1,000$, it is still exceedingly rare for there to be no round in which a minority worker is the most skilled.

\paragraph{Laissez-Faire vs the UCB Mechanism}

Figure~\ref{fig:policycomp_regret DAC} contrasts regret under laissez-faire and UCB decision rules, corresponding to Figure~\ref{fig:policycomp_regret}. As the regret order of the UCB decision rule (demonstrated in Theorem~\ref{thm:ucb}) is not reliant on symmetry between the groups, UCB effectively converges per-round regret to zero, even in this asymmetric environment. This highlights the robustness of the UCB decision rule against underlying differences in groups, ensuring an efficient and fair hiring process irrespective of inherent group disparities.

In the long run, even if the average skills of minority workers are lower, the cost of missing out on skillful minority workers accumulates, leading to substantial welfare loss under the laissez-faire policy. Despite the initial closeness in the performance of the UCB and laissez-faire policies due to the reduced welfare cost of perpetual underestimation, the UCB policy proves to be significantly more efficient over time. This finding underscores the importance of maintaining an inclusive hiring process that values individual merit over group averages, as this approach not only promotes fairness but also increases overall welfare by harnessing the skills of a broader talent pool.

\paragraph{The UCB Mechanism vs the Hybrid Mechanism}

\begin{figure}[t!]
    \centering
    \begin{minipage}[t]{0.48\textwidth}
         \centering
         \includegraphics[width=\textwidth]{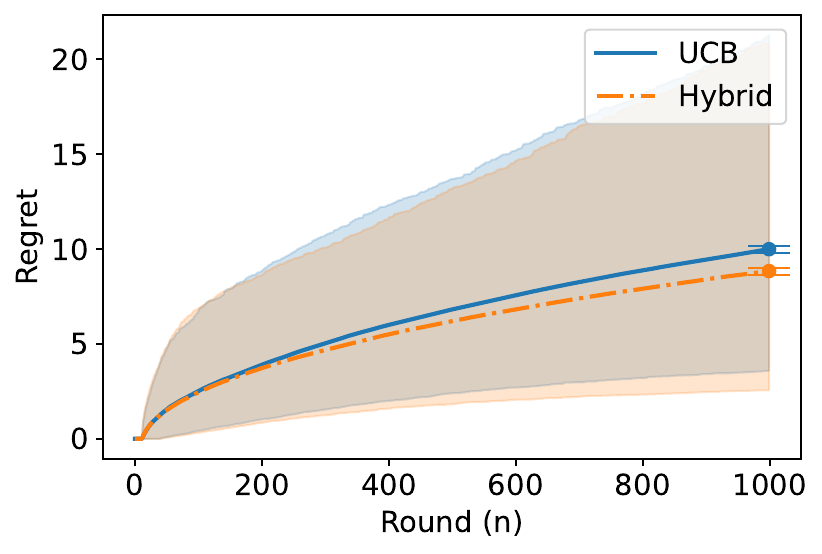}
         \caption{Regret under the UCB and hybrid decision rules (Appendix~\ref{subsec: simulation on asymmetric models}).}
         \label{fig:iucb_regret DAC}
    \end{minipage}
    \hspace{0.02\textwidth}
    \begin{minipage}[t]{0.48\textwidth}
         \centering
         \includegraphics[width=\textwidth]{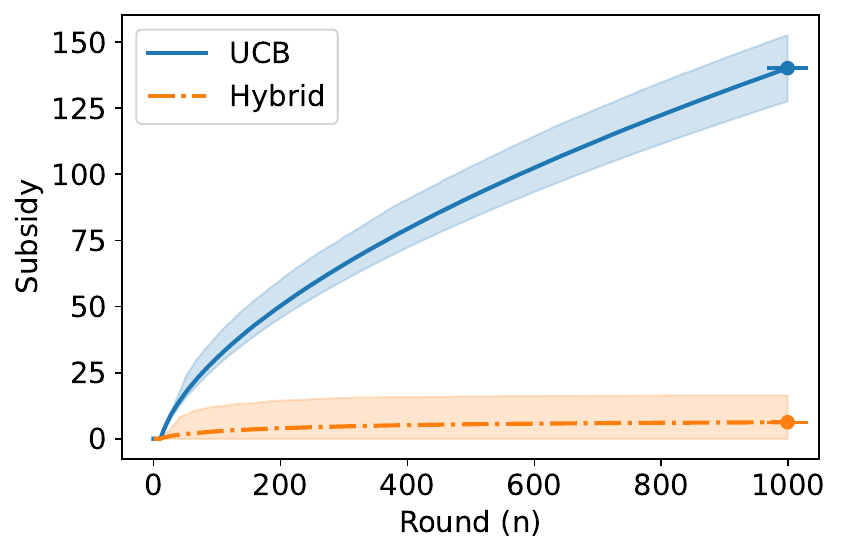}
         \caption{Budget required by the UCB and hybrid index subsidy rules (Appendix~\ref{subsec: simulation on asymmetric models}).}
         \label{fig:iucb_subsidy DAC}
    \end{minipage}
    \raggedright
    \footnotesize
    \medskip
    
    \textbf{Note:} The lines are averages over sample paths, the areas cover between $5\%$ and $95\%$ percentiles of runs, and the error bars at $N = 1,000$ are the two-sigma confidence intervals.
\end{figure}

Figures~\ref{fig:iucb_regret DAC} and \ref{fig:iucb_subsidy DAC} compare the regret and subsidy of the UCB and hybrid mechanisms (corresponding to Figures~\ref{fig:iucb_regret} and \ref{fig:iucb_subsidy}). We observe no quantitative difference from the symmetric case: The hybrid mechanism achieves small regret with a small budget, as with the main simulation setting. This suggests that the assumptions necessary for the hybrid mechanism to attain equalized odds (as outlined in Assumption~\ref{assp:idcontext}) might be less stringent than formally proved. Essentially, even with asymmetric groups, the hybrid mechanism is still effective in ensuring fairness (via equalized odds) and efficiency while minimizing costs.

\end{document}